\documentclass[11pt]{article}
\usepackage{jheppub}
\makeatletter\gdef\@fpheader{}\makeatother
\usepackage{amsmath,amssymb}
\usepackage{mathtools}
\usepackage{bm}
\usepackage{mathrsfs}
\usepackage{tikz}
\usetikzlibrary{arrows.meta,positioning}
\usepackage{xcolor}
\usepackage{amsthm}
\usepackage{longtable,array}
\newtheorem{proposition}{Proposition}
\newtheorem{conjecture}{Conjecture}
\hypersetup{pdftitle={Root-of-unity limit of the A(n-1) lens elliptic gamma function solution of the star-star relation},pdfauthor={Masahito Yamazaki}}

\newcommand{\ii}{\mathsf{i}}
\newcommand{\p}{{\mathsf p}}
\newcommand{\q}{{\mathsf q}}
\newcommand{\EXP}{\mathrm{e}}
\newcommand{\ovth}{\overline{\vartheta}}
\newcommand{\olW}{\overline{W}}
\newcommand{\bW}{\mathbb{W}}
\newcommand{\obW}{\overline{\mathbb{W}}}
\newcommand{\bS}{\mathbb{S}}
\newcommand{\ZZ}{\mathbb{Z}}
\newcommand{\RR}{\mathbb{R}}
\newcommand{\CC}{\mathbb{C}}
\newcommand{\vx}{\bm{x}}
\newcommand{\vm}{\bm{m}}
\newcommand{\vxi}{\bm{\xi}}
\newcommand{\va}{\bm{a}}
\newcommand{\re}{\operatorname{Re}}
\newcommand{\im}{\operatorname{Im}}

\title{Root-of-unity limit of the $A_{n-1}$ lens elliptic gamma function solution of the star-star relation}

\author[a,b,c,d]{Masahito Yamazaki}

\affiliation[a]{Department of Physics, University of Tokyo, Tokyo 113-0033, Japan}
\affiliation[b]{Kavli Institute for Physics and Mathematics of the Universe (Kavli IPMU), UTIAS, University of Tokyo, Chiba 277-8583, Japan}
\affiliation[c]{Trans-Scale Quantum Science Institute, University of Tokyo, Tokyo 113-0033, Japan}
\affiliation[d]{Center for Interdisciplinary Theoretical and Mathematical Sciences (iTHEMS), RIKEN, Saitama 351-0198, Japan}

\emailAdd{masahito.yamazaki@ipmu.jp}

\preprint{RIKEN-iTHEMS-Report-26}

\abstract{
We study the root-of-unity limit of the multi-component spin solution of the star-star relation whose Boltzmann weights are products of
lens elliptic gamma functions, and whose spins take values in $\RR^{n-1}\times(\ZZ_r)^{n-1}$. When one of the elliptic nomes tends to a primitive
$2rN$-th root of unity, the quasi-classical expansion gives at leading order the classical star-star relation of the Bazhanov--Sergeev
multi-spin model; at the next order, under the standard assumptions on the saddle points, a $(\ZZ_{rN})^{n-1}$-state solution of the
star-star relation, which depends on $r$ and $N$ only through $rN$.
Real $\ZZ_n$-symmetric configurations of the continuous spins give elliptic $\mathfrak{sl}(n)$ Kashiwara--Miwa type models labeled by $n$-torsion points. Extended to general primitive torsion
points, these include the broken $\ZZ_{rN}$ model of Quano and Fujii, for which the star-star relation follows from their Yang--Baxter equation
when $\gcd(n,rN)=1$. In the trigonometric limit the saddle point
equations single out the spectral curve of the $\mathfrak{sl}(n)$ chiral Potts model of Bazhanov, Kashaev, Mangazeev and Stroganov, and the
Boltzmann weights become those of that model.
}

\begin{document}

\maketitle

\section{Introduction}

The star-triangle relation \cite{Baxter:1982zz,Au-Yang:1989mk} and its four-edge counterpart, the star-star relation \cite{Baxter:1986phd,Bazhanov:1992jqa,Baxter:1997tn}, are local conditions
ensuring the integrability of two-dimensional lattice models with spins on the vertices of the lattice: they imply the Yang--Baxter equation \cite{Yang:1967bm,Baxter:1972hz} and hence
the commutativity of transfer matrices. A large class of solutions with continuous spins has been found since \cite{Bazhanov:2010kz,Spiridonov:2010em}, with
Boltzmann weights built from hyperbolic or elliptic gamma functions \cite{Bazhanov:2010kz,Bazhanov:2011mz}; their star-triangle and star-star
relations are identities for elliptic hypergeometric integrals \cite{Spiridonov:2001eb,Rains:2005lkc}. Many of them have a natural origin in the
Gauge/YBE correspondence \cite{Yamazaki:2012cp,Terashima:2012cx,Yamazaki:2013nra} (see \cite{Yamazaki:2018xbx,Gahramanov:2017ysd,Gahramanov:2022qge} for reviews), in which the star-triangle and star-star relations are identities
between partition functions of supersymmetric gauge theories related by Seiberg duality \cite{Seiberg:1994pq}. In particular, the lens index \cite{Benini:2011nc,Razamat:2013opa} of four-dimensional
$\mathcal N=1$ theories on $S^1\times (S^3/\ZZ_r)$ gives solutions whose Boltzmann weights are products of lens elliptic gamma functions and whose spins
have continuous as well as $\ZZ_r$-valued components \cite{Yamazaki:2013nra,Kels:2015bda,Gahramanov:2016ilb}. For gauge group $\mathrm{SU}(n)$ this leads to a multi-component spin
model with spins in $\RR^{n-1}\times(\ZZ_r)^{n-1}$ \cite{Yamazaki:2013nra}. Its star-star relation was proven in \cite{Kels:2017toi} as a consequence of
an $A_{n-1}$ elliptic hypergeometric sum/integral transformation, a lens generalization of the transformations of Rains \cite{Rains:2005lkc}, and for $r=1$ the model reduces to the multi-spin model of Bazhanov and Sergeev
\cite{Bazhanov:2011mz}.

Models with continuous spins contain models with discrete spins as limiting cases. Bazhanov and Sergeev showed that when an elliptic nome tends
to a root of unity, the quasi-classical expansion of the star-triangle relation of the elliptic gamma function solution gives the classical
star-triangle relation of the $Q4$ equation \cite{Adler:2002abs} at leading order, and a $\ZZ_N$ star-triangle relation at the next order
\cite{Bazhanov:2010kz}. The latter contains the Kashiwara--Miwa \cite{Kashiwara:1986tu}, Fateev--Zamolodchikov \cite{Fateev:1982wi} and chiral
Potts \cite{Au-Yang:1987syu,Baxter:1987eq} models as special cases. In \cite{Kels:2017vbc} the analysis was extended to the lens elliptic gamma function
solution (the case $n=2$ of the model above); there the $\ZZ_r$ and $\ZZ_N$ discrete variables combine into a single $\ZZ_{rN}$ spin.
Earlier connections between quantum dilogarithms at roots of unity and discrete integrable models were established in
\cite{Faddeev:1993rs,Bazhanov:1995qd}. Root-of-unity degenerations in which discrete variables emerge from continuous ones also appear for the non-compact quantum dilogarithm and
its identities \cite{Ip:2014ifa}.

For the multi-component models ($n\ge3$) the corresponding limit was anticipated by Bazhanov and Sergeev \cite[Sec.~5]{Bazhanov:2011mz}. They
argued that for $r=1$ it leads to a ``hybrid'' model, which couples a classical integrable system to discrete spins in $(\ZZ_N)^{n-1}$, and which
for a constant classical background should reduce to an $\mathfrak{sl}(n)$ generalization of the Kashiwara--Miwa model, and in the trigonometric
limit to the $\mathfrak{sl}(n)$ chiral Potts model. Discrete-spin $\mathfrak{sl}(n)$ models of both kinds are in fact known. In the trigonometric case,
the $(\ZZ_N)^{n-1}$ chiral Potts model of Bazhanov, Kashaev, Mangazeev and Stroganov (BKMS) \cite{Bazhanov:1990qk,Date:1990bs,Kashaev:1990wy}
was obtained from intertwiners of cyclic representations of $U_q(\widehat{\mathfrak{sl}}_n)$ at roots of unity, and is closely related to the
three-dimensional model of Bazhanov and Baxter \cite{Bazhanov:1992jqa,Bazhanov:1992jq}. In the elliptic case, $\mathfrak{sl}(n)$
generalizations of the Kashiwara--Miwa model, the $A^{(1)}_{n-1}$ broken $\ZZ_N$ models, were constructed in
\cite{Sergeev:1992ap,Quano:1992wc} from cyclic representations of the $L$-operators for Belavin's $R$-matrix (see also \cite{Hasegawa:1993cs}). More recent work on multi-component solutions of the star-star relation
\cite{Bazhanov:2011mz,Bazhanov:2013bh,Kels:2017toi,Gahramanov:2017idz,Catak:2021coz,Kels:2018qzx,Kels:2020zjn,Kels:2024afm} concerns models with continuous spins
and their quasi-classical limits. The root-of-unity limit leading to discrete spins, however, has been carried out only for $n=2$
\cite{Bazhanov:2010kz,Kels:2017vbc}, and not even for $r=1$ when $n\ge3$.

In this paper we consider the root-of-unity limit of the lens elliptic gamma function solution with spins in $\RR^{n-1}\times(\ZZ_r)^{n-1}$, for
general $n$ and $r$. As in \cite{Kels:2017vbc}, one of the elliptic nomes is taken to a root of unity, and the star-star relation is evaluated
by a saddle point method. The asymptotics of the lens elliptic gamma function in this limit were obtained in \cite{Kels:2017vbc} by rearranging
infinite products. Their $O(\hbar^0)$ term, through which the discrete variables enter, is a product of $rN$ fractional powers of theta functions
and of further theta functions whose number and sign depend on the discrete variable, and in this form its structure, in particular its
quasi-periodicity in the discrete variable, is not manifest. We note that only the ratio
of two such terms at the same value of the leading-order variable is needed, and that this ratio is determined by the leading asymptotics and an
exact first-order difference equation alone (Section~\ref{sec:asymptotics}). The asymptotics are then expressed through a ``cyclic elliptic gamma
function'' \eqref{omegadef}, an elliptic analogue of the Pochhammer symbol whose step is an $rN$-torsion point. This representation applies
directly to general $n$, and it also makes the modular transformation straightforward. The $O(\hbar^0)$ formula of \cite{Kels:2017vbc} is needed only for the
normalization of the star-star relation.

At leading order we obtain a classical star-star relation \eqref{cssr}, which coincides after a rescaling with that of
\cite{Bazhanov:2011mz} and whose saddle point equations \eqref{saddlemod} are multi-component analogues of the three-leg form of $Q4$. At the
next order we obtain a $(\ZZ_{rN})^{n-1}$-state solution \eqref{dssr} of the star-star relation on a classical background, the hybrid model of
\cite{Bazhanov:2011mz}, which, as for $n=2$ \cite{Kels:2017vbc}, depends on $r$ and $N$ only through $M=rN$ (Section~\ref{sec:special}).
Real $\ZZ_n$-symmetric configurations of the continuous spins give elliptic $\mathfrak{sl}(n)$ Kashiwara--Miwa type models
\eqref{KMnW}--\eqref{KMnS} labeled by an $n$-torsion point $\omega_0$. Extended to all primitive torsion points, they include at
$\omega_0=\pi/n$ the $A^{(1)}_{n-1}$ broken $\ZZ_M$ model of Quano and Fujii \cite{Quano:1992wc}, for which the star-star relation
\eqref{KMnssr} follows from their Yang--Baxter equation when $\gcd(n,M)=1$; for $\gcd(n,M)>1$ our weights extend their model to the state space
$Q/MQ\cong(\ZZ_M)^{n-1}$, where $Q$ is the root lattice of $\mathfrak{sl}(n)$. In the trigonometric limit the saddle point equations become the condition that the cross-ratio of the four rapidities be the same in all
$n$ components. This singles out the spectral curve of the $(\ZZ_M)^{n-1}$ chiral Potts model, and the Boltzmann weights become those of
\cite{Bazhanov:1990qk} (Section~\ref{sec:trigsec}), generalizing the appearance of the chiral Potts curve for $n=2$ \cite{Bazhanov:2010kz,Kels:2017vbc}.

The star-star relation \eqref{KMnssr} with the Quano--Fujii weights on the edges appears to be a new identity for that model, and the models at the
torsion points which are not modular images of the Quano--Fujii point, which exist when $\gcd(n,M)>1$, have not appeared before as far as we
are aware. The difference equations, the cyclic elliptic gamma function \eqref{omegadef} and its modular transformation are exact, while
\eqref{dssr} rests, as in \cite{Bazhanov:2010kz,Kels:2017vbc}, on assumptions on the saddle points (Section~\ref{sec:discrete}). It is supported
by numerical checks (Appendix~\ref{app:numerics}), and in a number of cases it reduces to relations proven in the literature
(Table~\ref{tab:status}). The relations between the models are summarized in Fig.~\ref{fig:models}.

\begin{figure}[t]\centering
  \begin{tikzpicture}[box/.style={draw,rounded corners,align=center,font=\small,inner sep=4pt},arr/.style={-{Stealth},thick},lab/.style={font=\scriptsize,align=center}]
    \node[box] (lens) at (0,0) {lens elliptic gamma function model\\ spins in $\RR^{n-1}\times(\ZZ_r)^{n-1}$ \cite{Yamazaki:2013nra,Kels:2017toi}};
    \node[box] (cl) at (-4.4,-2.2) {classical star-star relation\\ (Bazhanov--Sergeev) \eqref{cssr}};
    \node[box] (dis) at (2.2,-2.2) {$(\ZZ_{rN})^{n-1}$ model on a classical background\\ \eqref{dW}--\eqref{dssr}, depends on $rN$ only};
    \node[box] (km) at (-2.9,-4.8) {elliptic $\mathfrak{sl}(n)$ Kashiwara--Miwa type\\ models \eqref{KMnW}--\eqref{KMnS}};
    \node[box] (qf) at (-2.9,-7.1) {Quano--Fujii model\\ at $\omega_0=\pi/n$ \cite{Quano:1992wc}};
    \node[box] (cp) at (3.9,-4.8) {$(\ZZ_M)^{n-1}$ chiral Potts model\\ on its spectral curve \cite{Bazhanov:1990qk}};
    \node[box] (fz) at (3.9,-7.1) {$\mathfrak{sl}(n)$ Fateev--Zamolodchikov\\ type model \eqref{trigW}};
    \draw[arr] (lens) -- node[lab,left=4pt] {$O(\hbar^{-1})$} (cl);
    \draw[arr] (lens) -- node[lab,right=4pt] {$O(\hbar^{0})$} (dis);
    \draw[arr] (dis) -- node[lab,left=2pt] {real symmetric\\ saddle points} (km);
    \draw[arr] (dis) -- node[lab,right=2pt] {$T\to\ii\infty$} (cp);
    \draw[arr] (km) -- node[lab,left=2pt] {torsion point,\\ modular maps} (qf);
    \draw[arr] (km.south east) -- node[lab,above=2pt,sloped,pos=0.45] {$T\to\ii\infty$} (fz.west);
    \draw[arr] (cp) -- node[lab,right=2pt] {degenerate\\ curve} (fz);
  \end{tikzpicture}
  \caption{The models obtained in this paper and their relations. Boxes in the lower part are discrete-spin models satisfying the star-star relation
    \eqref{dssr}; the cases in which the relation is proven are listed in Table~\ref{tab:status}.}\label{fig:models}
\end{figure}

The paper is organized as follows. In Section~\ref{sec:model} we define the model and derive the exact difference equations.
Section~\ref{sec:asymptotics} contains the root-of-unity asymptotics and the modular transformation. In Section~\ref{sec:expansion} we carry out
the quasi-classical expansion of the star-star relation, and in Section~\ref{sec:special} we discuss general properties of the resulting discrete
spin models. Section~\ref{sec:KMn} is devoted to the elliptic $\mathfrak{sl}(n)$ Kashiwara--Miwa type models and their relation to the Quano--Fujii model, and
Section~\ref{sec:trigsec} to the trigonometric limit and the $(\ZZ_M)^{n-1}$ chiral Potts model. We conclude in Section~\ref{sec:discussion}.
Appendix~\ref{app:theta} collects formulas for theta functions, and Appendix~\ref{app:dict}
gives the dictionary between the conventions of \cite{Kels:2017toi} and \cite{Kels:2017vbc}. Appendix~\ref{app:gcd} contains the details for $\gcd(n,M)>1$, Appendix~\ref{app:torsion} those of the
discussion of the general torsion points, Appendix~\ref{app:modular} the details of a computation there, and Appendix~\ref{app:numerics}
summarizes our numerical checks.

\section{The model}\label{sec:model}

\subsection{Lens elliptic gamma function}

In this section we define the model and collect the properties of the lens elliptic gamma function that will be needed below. We use the
conventions of \cite{Kels:2017vbc}. Let $r=1,2,\ldots$ be a positive integer, and let
\begin{equation}
  \p:=\EXP^{\pi\ii\tau_1},\qquad \q:=\EXP^{\pi\ii\tau_2},\qquad \im\tau_1,\im\tau_2>0\;.
\end{equation}
The lens elliptic gamma function is then defined by
\begin{equation}\label{Phidef}
  \Phi(z,m):=\EXP^{E(z,m)}\,
  \Phi_1\!\left(z+\left(\tfrac r2-m\right)\pi\tau_1;\p\q,\p^r\right)
  \Phi_1\!\left(z-\left(\tfrac r2-m\right)\pi\tau_2;\p\q,\q^r\right)\;,
\end{equation}
where
\begin{equation}\label{Edef}
  \begin{gathered}
    E(z,m):=\frac{\ii\, m(r-m)}{6r}\Big(6z+\pi(r-2m)(\tau_1-\tau_2-1)\Big)\;,\\
    \Phi_1(z;\p,\q):=\prod_{j,k=0}^\infty\frac{1-\EXP^{2\ii z}\p^{2j+1}\q^{2k+1}}{1-\EXP^{-2\ii z}\p^{2j+1}\q^{2k+1}}\;.
  \end{gathered}
\end{equation}
In \cite{Kels:2017vbc} the definition \eqref{Phidef} is used for $m\in\{0,\ldots,r-1\}$. Here it will be convenient to use the same formula for
all $m\in\ZZ$. The function defined in this way is $r$-periodic in $m$, and satisfies
\begin{equation}\label{Phiprop}
  \Phi(z,m+r)=\Phi(z,m),\qquad \Phi(z,m)\Phi(-z,-m)=1,\qquad
  \Phi(z+\pi,m)=\EXP^{\frac{\pi\ii m(r-m)}{r}}\,\Phi(z,m)\;.
\end{equation}
Throughout, the theta functions $\ovth_j(x\,|\,\tau)$ are normalized as in \cite[App.~A]{Kels:2017vbc}. Their definitions and the formulas
used in this paper are collected in Appendix~\ref{app:theta}.

\paragraph{Difference equations.}
The lens elliptic gamma function satisfies two first-order difference equations, one in each of the two periods, which will play a central
role in Section~\ref{sec:asymptotics}. Consider the simultaneous shift $z\to z+\pi\tau_2$, $m\to m-1$. Under this shift the argument of
the second factor $\Phi_1$ in \eqref{Phidef} is invariant, while that of the first factor is shifted by $\pi(\tau_1+\tau_2)$. With
$w:=z+(r/2-m)\pi\tau_1$ we have
$\Phi_1(w+\pi(\tau_1+\tau_2);\p\q,\p^r)/\Phi_1(w;\p\q,\p^r)=1/\big[(\EXP^{2\ii w}\p^{r+1}\q;\p^{2r})_\infty(\EXP^{-2\ii w}\p^{r-1}\q^{-1};\p^{2r})_\infty\big]$, and therefore
\begin{equation}\label{shift}
  \frac{\Phi(z+\pi\tau_2,m-1)}{\Phi(z,m)}
  =\frac{\EXP^{E(z+\pi\tau_2,m-1)-E(z,m)}}{\ovth_4\!\left(z+\frac{\pi}{2}(r-2m+1)\tau_1+\frac{\pi}{2}\tau_2\,\middle|\,r\tau_1\right)}\;.
\end{equation}
In the same way, exchanging the roles of the two factors (and of $\p^r$ and $\q^r$), we obtain the difference equation in the other period,
\begin{equation}\label{shift1}
  \frac{\Phi(z+\pi\tau_1,m+1)}{\Phi(z,m)}
  =\frac{\EXP^{E(z+\pi\tau_1,m+1)-E(z,m)}}{\ovth_4\!\left(z-\frac{\pi}{2}(r-2m-1)\tau_2+\frac{\pi}{2}\tau_1\,\middle|\,r\tau_2\right)}\;.
\end{equation}

\paragraph{Vertex factor.}
As a first application of the difference equations, let us compute the combination of lens elliptic gamma functions which appears in the
vertex weight below. Let $\eta:=-(\pi\ii/2)(\tau_1+\tau_2)$ be the crossing parameter. Using the inversion relation in \eqref{Phiprop}, we
can write
\begin{equation}
  \Phi(-\ii\eta+x,m)\,\Phi(-\ii\eta-x,-m)=\frac{\Phi(z,m)}{\Phi(z+\pi\tau_1+\pi\tau_2,m)},\qquad z:=x-\tfrac{\pi}{2}(\tau_1+\tau_2)\;.
\end{equation}
We then write the inverse of the ratio on the right-hand side as $[\Phi(z+\pi\tau_1+\pi\tau_2,m)/\Phi(z+\pi\tau_1,m+1)]\cdot[\Phi(z+\pi\tau_1,m+1)/\Phi(z,m)]$.
The two difference equations \eqref{shift} and \eqref{shift1} turn this into a product of two theta functions, and after a half-period shift
it is found that, for all $m\in\ZZ$,
\begin{equation}\label{pairexact}
  \Phi(-\ii\eta+x,m)\,\Phi(-\ii\eta-x,-m)=\EXP^{\frac{\eta r}{2}-\frac{2\eta m^2}{r}}\,
  \ovth_1(x-\pi m\tau_1\,|\,r\tau_1)\,\ovth_1(x+\pi m\tau_2\,|\,r\tau_2)\;.
\end{equation}
For even $m$ this reproduces the identity between the two expressions for the vertex weight given in \cite[(2.14)]{Kels:2017vbc}.
The relation \eqref{pairexact} is the lens analogue of $1/(\Gamma(z)\Gamma(z^{-1}))=\theta(z;p)\theta(z^{-1};q)$ for the ordinary elliptic gamma function.

\subsection{Boltzmann weights and the star-star relation}

We now define the integrable lattice model. Fix an integer $n\ge2$. A spin is a pair $\sigma=(\vx,\vm)$ with
\begin{equation}
  \vx=(x_1,\ldots,x_n)\in\RR^n,\quad \sum_{i=1}^n x_i=0,\qquad
  \vm=(m_1,\ldots,m_n)\in(\ZZ_r)^n,\quad \sum_{i=1}^n m_i\equiv0\ (\mathrm{mod}\ r)\;.
\end{equation}
The Boltzmann weights are \cite{Yamazaki:2013nra,Kels:2017toi}
\begin{align}\label{BW}
  W_\alpha(\sigma_a,\sigma_b) & :=\prod_{i,j=1}^n\Phi\big(x_{a,i}-x_{b,j}+\ii\alpha,\,m_{a,i}-m_{b,j}\big),\qquad
  \olW_\alpha(\sigma_a,\sigma_b):=W_{\eta-\alpha}(\sigma_a,\sigma_b)\;,                                           \\
  S(\sigma)                   & :=\prod_{i\neq j}\Phi\big(-\ii\eta+x_i-x_j,\,m_i-m_j\big)
  =\prod_{i<j}\Phi(-\ii\eta+x_{ij},m_{ij})\Phi(-\ii\eta-x_{ij},-m_{ij})\;,
  \label{Sdef}
\end{align}
with $x_{ij}:=x_i-x_j$, $m_{ij}:=m_i-m_j$, and the measure
\begin{equation}
  \int\!d\sigma:=\sum_{m_1,\ldots,m_{n-1}=0}^{r-1}\int_{[0,\pi]^{n-1}}dx_1\cdots dx_{n-1},\qquad
  x_n:=-\sum_{i<n}x_i,\quad m_n:=-\sum_{i<n}m_i\;.
\end{equation}
The integrand of the star-star relation below is invariant under $x_i\to x_i+\pi k_i$ with $\sum k_i=0$.
We call this model the $A_{n-1}$ model, after the sum/integral transformation of \cite{Kels:2017toi} for the $A_{n-1}$ root system which
implies its star-star relation. Accordingly, the spins have $n$ components, and the discrete models obtained below are associated with
$\mathfrak{sl}(n)$.
For $n=2$, where $\sigma=((x,-x),(m,-m))$, the weights \eqref{BW} reduce to those of \cite{Kels:2015bda,Kels:2017vbc}.

In \cite[Sec.~6]{Kels:2017toi} the weights are written in terms of $\Gamma(z,m;\sigma,\tau)$, and the weight $\overline{W}_\alpha$ is defined as
$\overline{W}_\alpha=\sqrt{S(\sigma_a)S(\sigma_b)}\,W_{\eta-\alpha}$. In the star-star relation the factors $\sqrt{S}$ of the four external spins
cancel between the two sides, so that one may equivalently put $S(\sigma_H)$ on the internal vertex, as we do here.
The dictionary between the conventions of \cite{Kels:2017toi} and those used in this paper is given in Appendix~\ref{app:dict}.

We next define the weights of the two four-edge stars shown in Fig.~\ref{ssfig},
\begin{align}
  W^{(1)}_{\bm u\bm v}\begin{pmatrix}\sigma_I&\sigma_J\\\sigma_K&\sigma_L\end{pmatrix} & :=
  \int d\sigma_H\, S(\sigma_H)\,\olW_{u-v}(\sigma_K,\sigma_H)\olW_{u'-v'}(\sigma_J,\sigma_H)\nonumber                                                                     \\
                                                                                       & \qquad\times W_{u'-v}(\sigma_H,\sigma_I)W_{u-v'}(\sigma_H,\sigma_L)\;,\label{W1} \\
  W^{(2)}_{\bm u\bm v}\begin{pmatrix}\sigma_I&\sigma_J\\\sigma_K&\sigma_L\end{pmatrix} & :=
  \int d\sigma_H\, S(\sigma_H)\,\olW_{u-v}(\sigma_H,\sigma_J)\olW_{u'-v'}(\sigma_H,\sigma_K)\nonumber                                                                     \\
                                                                                       & \qquad\times W_{u'-v}(\sigma_L,\sigma_H)W_{u-v'}(\sigma_I,\sigma_H)\;.\label{W2}
\end{align}
The star-star relation \cite{Yamazaki:2013nra,Kels:2017toi} then reads
\begin{multline}\label{ssr}
  W_{v'-v}(\sigma_L,\sigma_K)\,W_{u'-u}(\sigma_L,\sigma_J)\,W^{(1)}_{\bm u\bm v}\begin{pmatrix}\sigma_I&\sigma_J\\\sigma_K&\sigma_L\end{pmatrix}\\
  =W_{v'-v}(\sigma_J,\sigma_I)\,W_{u'-u}(\sigma_K,\sigma_I)\,W^{(2)}_{\bm u\bm v}\begin{pmatrix}\sigma_I&\sigma_J\\\sigma_K&\sigma_L\end{pmatrix}\;.
\end{multline}
Let us specify the region of validity. For real external spins and spectral parameters satisfying
\begin{equation}\label{ssrregion}
  0<\re(u-v),\ \re(u'-v'),\ \re(u'-v),\ \re(u-v')<\re\eta\;,
\end{equation}
the poles of the integrands of \eqref{W1} and \eqref{W2} do not meet the real integration domain, and \eqref{ssr} holds with the integration
over $[0,\pi]^{n-1}$ as written; this is the condition $\im(s_i),\im(t_i)>0$ for the real contour in the $A_{n-1}$ sum/integral transformation
of \cite[Sec.~3]{Kels:2017toi}, translated by the dictionary of Appendix~\ref{app:dict}. Outside this region, \eqref{ssr} is understood by
analytic continuation in the parameters, which is equivalent to deforming the contour so that it continues to separate the two families of
poles \cite{Kels:2017toi}. The conditions used in the saddle point analysis below are
stated separately in Sections~\ref{sec:asymptotics} and~\ref{sec:expansion}; for real
continuous spins, \eqref{ssrregion} corresponds there to $0<\theta<\pi/2$ for the four internal edges, cf.\ \eqref{thetadef}.

\begin{figure}[htb]\centering
\begin{tikzpicture}[scale=1.2]

\draw[-,thick] (-1,-1)--(1,1);
\draw[-,thick] (1,-1)--(-1,1);
\draw[-,thick] (-1,-1)--(1,-1);
\draw[-,thick] (1,-1)--(1,1);

\fill[black!] (-1,-1) circle (2.2pt)
node[below=1.5pt]{\color{black}\small $\sigma_K$};
\fill[black!] (1,-1) circle (2.2pt)
node[below=1.5pt]{\color{black}\small $\sigma_L$};
\fill[black!] (-1,1) circle (2.2pt)
node[above=1.5pt]{\color{black}\small $\sigma_I$};
\fill[black!] (1,1) circle (2.2pt)
node[above=1.5pt]{\color{black}\small $\sigma_J$};

\filldraw[fill=black!,draw=black!] (0,0) circle (2.2pt)
node[below=2.5pt]{\color{black}\small $\sigma_H$};

\draw[->,thick,dashed] (-1.7,0.5) -- (1.7,0.5);
\draw[white!] (-1.7,0.5) circle (0.01pt)
node[left=1.5pt]{\color{black}\small $u'$};
\draw[->] (-1.7,-0.5) -- (1.7,-0.5);
\draw[white!] (-1.8,-0.5) circle (0.01pt)
node[left=1.5pt]{\color{black}\small $u$};
\draw[->,thick,dashed] (-0.5,-1.7) .. controls (-0.5,-1.6) and (-0.4,-1.2) .. (0,-1) .. controls (0.2,-0.9) and (0.5,-0.6) .. (0.5,-0.5) .. controls (0.7,-0.4) and (0.7,0.4) .. (0.5,0.5) .. controls (0.5,0.6) and (0.2,0.9) .. (0,1) .. controls (-0.4,1.2) and (-0.5,1.6) .. (-0.5,1.7);
\draw[white!] (-0.5,-1.7) circle (0.01pt)
node[below=1.5pt]{\color{black}\small $v'$};
\draw[->] (0.5,-1.7) .. controls (0.5,-1.6) and (0.4,-1.2) .. (0,-1) .. controls (-0.2,-0.9) and (-0.5,-0.6) .. (-0.5,-0.5) .. controls (-0.7,-0.4) and (-0.7,0.4) .. (-0.5,0.5) .. controls (-0.5,0.6) and (-0.2,0.9) .. (0,1) .. controls (0.4,1.2) and (0.5,1.6) .. (0.5,1.7);
\draw[white!] (0.5,-1.77) circle (0.01pt)
node[below=1.5pt]{\color{black}\small $v$};

\draw[white!] (2.2,0) circle (0.01pt)
node[right=0.1pt]{\color{black}=};

\begin{scope}[xshift=150pt]

\draw[-,thick] (-1,-1)--(1,1);
\draw[-,thick] (1,-1)--(-1,1);
\draw[-,thick] (-1,1)--(1,1);
\draw[-,thick] (-1,-1)--(-1,1);

\fill[black!] (-1,-1) circle (2.2pt)
node[below=1.5pt]{\color{black}\small $\sigma_K$};
\fill[black!] (1,-1) circle (2.2pt)
node[below=1.5pt]{\color{black}\small $\sigma_L$};
\fill[black!] (-1,1) circle (2.2pt)
node[above=1.5pt]{\color{black}\small $\sigma_I$};
\fill[black!] (1,1) circle (2.2pt)
node[above=1.5pt]{\color{black}\small $\sigma_J$};

\filldraw[fill=black!,draw=black!] (0,0) circle (2.2pt)
node[below=2.5pt]{\color{black}\small $\sigma_H$};

\draw[->,thick,dashed] (-1.7,0.5) .. controls (-1.6,0.5) and (-1.2,0.4) .. (-1,0) .. controls (-0.9,-0.2) and (-0.6,-0.5) .. (-0.5,-0.5) .. controls (-0.4,-0.7) and (0.4,-0.7) .. (0.5,-0.5) .. controls (0.6,-0.5) and (0.9,-0.2) .. (1,0) .. controls (1.2,0.4) and (1.6,0.5) .. (1.7,0.5);
\draw[white!] (-1.7,0.5) circle (0.01pt)
node[left=1.5pt]{\color{black}\small $u'$};
\draw[->] (-1.7,-0.5) .. controls (-1.6,-0.5) and (-1.2,-0.4) .. (-1,0) .. controls (-0.9,0.2) and (-0.6,0.5) .. (-0.5,0.5) .. controls (-0.4,0.7) and (0.4,0.7) .. (0.5,0.5) .. controls (0.6,0.5) and (0.9,0.2) .. (1,0) .. controls (1.2,-0.4) and (1.6,-0.5) .. (1.7,-0.5);
\draw[white!] (-1.8,-0.5) circle (0.01pt)
node[left=1.5pt]{\color{black}\small $u$};
\draw[->,thick,dashed] (-0.5,-1.7) -- (-0.5,1.7);
\draw[white!] (-0.5,-1.7) circle (0.01pt)
node[below=1.5pt]{\color{black}\small $v'$};
\draw[->] (0.5,-1.7) -- (0.5,1.7);
\draw[white!] (0.5,-1.77) circle (0.01pt)
node[below=1.5pt]{\color{black}\small $v$};

\end{scope}
\end{tikzpicture}

\caption{The star-star relation \eqref{ssr}. The internal spin $\sigma_H$ is summed/integrated on both sides.}
\label{ssfig}
\end{figure}

\section{Root-of-unity limit}\label{sec:asymptotics}

\subsection{The limit and the asymptotics of the lens elliptic gamma function}
In this section we study the asymptotics of the lens elliptic gamma function when one of the elliptic nomes tends to a root of unity.
Following \cite{Kels:2017vbc}, we take the limit
\begin{equation}\label{limit}
  \tau:=\tau_1,\qquad \q=\EXP^{\frac{\pi\ii}{rN}}\EXP^{-\frac{\hbar}{r}},\quad\text{i.e.}\quad
  \tau_2=\frac{1}{rN}+\frac{\ii\hbar}{\pi r},\qquad \hbar\to0^+\;,
\end{equation}
and introduce the shifted variables
\begin{equation}\label{hatdef}
  \hat z:=z+\frac{\pi m}{rN},\qquad \hat\tau:=N\tau+\frac1r\;.
\end{equation}
The asymptotics of $\Phi(z,m)$ in this limit were obtained in \cite[(3.3)--(3.6)]{Kels:2017vbc}. In our notation they read, for $z$ in the
strip $|\im z|<(\pi/2)\im\tau$,
\begin{equation}\label{KYasym}
  \log\Phi(z,m)=\hbar^{-1}\,\mathcal F(\hat z)+\log\Phi^{(0)}(z,m)+O(\hbar)\;,
\end{equation}
where the leading term is given by
\begin{equation}\label{lead}
  \mathcal F'(z)=\frac{\ii}{N}\log\ovth_4(Nz\,|\,\hat\tau),\qquad \mathcal F(0)=0\;,
\end{equation}
and the subleading term by
\begin{equation}\label{KYsub}
  \begin{aligned}
    \Phi^{(0)}(z,m) & =\EXP^{E_0(z,m)}\prod_{k_1=0}^{N-1}\prod_{k_2=0}^{r-1}
    \ovth_4\!\Big(\hat z+\frac{\pi(k_1+1)}{N}-\frac{\pi\hat\tau(2k_2+1)}{2N}+\frac{\pi r\tau}{2}\,\Big|\,r\tau\Big)^{e_{k_1k_2}(m)} \\
                    & \qquad\times\prod_{k=0}^{|m|-1}\ovth_4\!\Big(\hat z-\operatorname{sgn}(m)\frac{\pi\hat\tau(2k+1)}{2N}+\frac{\pi r\tau}{2}\,\Big|\,r\tau\Big)^{\operatorname{sgn}(m)}\;,
  \end{aligned}
\end{equation}
with the exponents $e_{k_1k_2}(m):=(rN-2r(k_1+1)+2k_2-2m+1)/(2rN)$, $\operatorname{sgn}(m):=1$ for $m\ge0$ and $\operatorname{sgn}(m):=-1$ for $m<0$, and $E_0(z,m):=E(z,m)|_{\tau_2=1/(rN)}$, cf.\ \eqref{Edef}.
The leading term \eqref{lead} comes from the singular factor $\Phi_1(\,\cdot\,;\p\q,\q^r)$ in \eqref{Phidef}, and \eqref{KYsub} was obtained in
\cite{Kels:2017vbc} by expanding the $O(1)$ part of the singular factor and the non-singular factor, and rearranging the resulting infinite
products into theta functions. The function $\mathcal F$ is odd, and it is periodic, $\mathcal F(z+\pi/N)=\mathcal F(z)$, since $\log\ovth_4$
has no constant Fourier mode. As emphasized in \cite{Kels:2017vbc}, the leading term depends on $z$ and $m$ only through $\hat z$, whereas
$\Phi^{(0)}$ depends on $z$ and $m$ also separately.

In \cite{Kels:2017vbc}, $\Phi(z,m)$ is defined, and \eqref{KYasym}--\eqref{KYsub} are derived, for $m\in\{0,\ldots,r-1\}$. With the
definition \eqref{Phidef} for all $m\in\ZZ$, both extend to all $m\in\ZZ$. First, by \eqref{Phiprop} we have $\Phi(z,m+r)=\Phi(z,m)$, while
$\hat z$ is shifted by $\pi/N$, under which $\mathcal F$ is invariant. Hence \eqref{KYasym} holds for all $m$, with
$\Phi^{(0)}(z,m+r)=\Phi^{(0)}(z,m)$, and $\Phi^{(0)}(\cdot,m)$ is continuous away from the zeros of the theta functions for every $m$; this
is all we shall use below. Second, under the step $(z,m)\to(z+\pi/(rN),m-1)$, which leaves $\hat z$ invariant, $\Phi^{(0)}$ changes as
dictated by the difference equation \eqref{shift} (Section~\ref{sec:omega}), and so does the right-hand side of \eqref{KYsub}: the exponents
$e_{k_1k_2}$ increase by $1/(rN)$, and the last product gains or loses the factor with argument $\hat z-(2m-1)\lambda/2+\pi r\tau/2$, with
$\lambda$ given in \eqref{lamdef}. Hence the explicit formula \eqref{KYsub} holds for all $m\in\ZZ$ as well.

In the saddle point analysis of Section~\ref{sec:expansion} the leading term fixes the hatted variables, and the discrete spins enter only at
$O(\hbar^0)$, through the dependence of $\Phi^{(0)}(z,m)$ on $m$ at fixed $\hat z$. Writing $z=\xi+\pi a/(rN)$ and $m=-a$ with $a\in\ZZ$, so that
$\hat z=\xi$, we thus need the ratio
\begin{multline}\label{KYratio}
  \frac{\Phi^{(0)}(\xi+\frac{\pi a}{rN},-a)}{\Phi^{(0)}(\xi,0)}=\EXP^{E_0(\xi+\frac{\pi a}{rN},-a)}
  \prod_{k_1=0}^{N-1}\prod_{k_2=0}^{r-1}\ovth_4\!\Big(\xi+\frac{\pi(k_1+1)}{N}-\frac{\pi\hat\tau(2k_2+1)}{2N}+\frac{\pi r\tau}{2}\,\Big|\,r\tau\Big)^{\frac{a}{rN}}\\
  \times\prod_{k=0}^{|a|-1}\ovth_4\!\Big(\xi-\operatorname{sgn}(-a)\frac{\pi\hat\tau(2k+1)}{2N}+\frac{\pi r\tau}{2}\,\Big|\,r\tau\Big)^{\operatorname{sgn}(-a)}\;,
\end{multline}
together with the $a$-independent term $\Phi^{(0)}(\xi,0)$, which only enters the normalization. In this form the dependence on $a$ is
not easy to control. It enters through the exponents of $rN$ fractional powers, through the number and the sign of the remaining factors, and
through the cubic polynomial $E_0$, and the two signs of $a$ have to be treated separately. The discrete spins will take values in $\ZZ_{rN}$,
which requires \eqref{KYratio} to be quasi-periodic under $a\to a+rN$; this is not visible in \eqref{KYratio}, and results from a cancellation
between the three types of factors. The phases, which matter for the star-star relation, are distributed over the branches of $rN$ fractional
powers. Finally, the arguments of the theta functions are not written in terms of a single torsion point, which obscures the modular
transformation of Section~\ref{sec:modular} and the fact that the discrete model depends on $r$ and $N$ only through $rN$. For general $n$,
every edge weight contains $n^2$ such factors.

\subsection{The cyclic elliptic gamma function}\label{sec:omega}
We therefore do not use \eqref{KYsub} for $m\ne0$, and compute the ratio in a different way. The step $(z,m)\to(z+\pi/(rN),m-1)$, which
leaves $\hat z$ invariant, is close to the shift $(z,m)\to(z+\pi\tau_2,m-1)$ in the exact difference equation \eqref{shift}, since in the limit
\eqref{limit}
\begin{equation}\label{keyobs}
  \frac{\pi}{rN}=\pi\tau_2-\frac{\ii\hbar}{r}\;.
\end{equation}
Up to a shift of $z$ by $O(\hbar)$, the step is thus given exactly by \eqref{shift}. The effect of the remaining shift on $\log\Phi$ is of order
$\hbar^{-1}\cdot\hbar=O(1)$ and is determined by $\mathcal F$, so that the ratio \eqref{KYratio} is fixed by \eqref{lead} and \eqref{shift},
without any manipulation of infinite products.

To make this explicit, we put $z_k:=\xi+\pi k/(rN)$ and $m=-k$, and use \eqref{keyobs} to split the ratio into two factors,
\begin{equation}
  \frac{\Phi(z_{k+1},-k-1)}{\Phi(z_k,-k)}=
  \frac{\Phi(z_k+\pi\tau_2-\frac{\ii\hbar}{r},-k-1)}{\Phi(z_k+\pi\tau_2,-k-1)}\cdot\frac{\Phi(z_k+\pi\tau_2,-k-1)}{\Phi(z_k,-k)}\;.
\end{equation}
The hatted variables of the numerator and the denominator of the first factor are $\xi$ and $\xi+\ii\hbar/r$ for every $k$, so that this factor
tends to $\exp\!\big(-(\ii/r)\mathcal F'(\xi)\big)=\ovth_4(N\xi|\hat\tau)^{1/(rN)}$. Here the $O(1)$ part of its logarithm, a difference of two values
of $\log\Phi^{(0)}(\cdot,-k-1)$ at points which are $O(\hbar)$ apart, vanishes as $\hbar\to0$, since $\Phi^{(0)}(\cdot,m)$ is continuous for every
$m\in\ZZ$, as shown above; as usual, we assume that the $O(\hbar)$ remainder in \eqref{KYasym} is uniform on compact subsets of
the strip. The second factor is given exactly by \eqref{shift}. At $\tau_2=1/(rN)$ the argument of the theta function in its denominator moves by
$\pi\tau+\pi/(rN)$ at each step $k\to k+1$, and we set
\begin{equation}\label{lamdef}
  \lambda:=\pi\Big(\tau+\frac{1}{rN}\Big)=\frac{\pi(1+rN\tau)}{rN}\;.
\end{equation}
By \eqref{hatdef} we have $\hat\tau=N\lambda/\pi$. Since $rN\lambda=\pi+N\cdot\pi r\tau$, $\lambda$ is an $rN$-torsion point of the elliptic curve $\CC/(\pi\ZZ+\pi r\tau\ZZ)$. In terms of $\lambda$, the
theta function in the denominator of \eqref{shift} becomes
\begin{equation}
  \begin{aligned}
    \ovth_4\big(z_k+\tfrac{\pi}{2}(r+2k+1)\tau+\tfrac{\pi}{2rN}\,\big|\,r\tau\big) & =\ovth_4\big(\xi+\tfrac{\pi r\tau}{2}+(k+\tfrac12)\lambda\,\big|\,r\tau\big)                                              \\
                                                                                   & =\ii\,\EXP^{-\ii(\xi+(k+\frac12)\lambda)-\frac{\pi\ii r\tau}{4}}\ovth_1\big(\xi+(k+\tfrac12)\lambda\,\big|\,r\tau\big)\;.
  \end{aligned}
\end{equation}
Taking the product over $k=0,\ldots,a-1$, the exponential factors involving $E$ telescope, and we obtain
\begin{equation}\label{omegalim}
  \lim_{\hbar\to0}\frac{\Phi\big(\xi+\frac{\pi a}{rN},-a\big)}{\Phi(\xi,0)}=\omega(\xi\,|\,a)\;,
\end{equation}
where
\begin{equation}\label{omegadef}
  \omega(\xi\,|\,a):=\ovth_4(N\xi\,|\,\hat\tau)^{\frac{a}{rN}}\;\EXP^{P_a(\xi)}\;
  \prod_{k=0}^{a-1}\frac{1}{\ovth_1\!\big(\xi+(k+\tfrac12)\lambda\,\big|\,r\tau\big)}\;,
\end{equation}
\begin{equation}\label{Pdef}
  P_a(\xi):=\frac{\ii a^2(\pi r-2\xi)}{2r}-\frac{\pi\ii a^3(rN\tau-rN+2)}{3r^2N}+\frac{\pi\ii a(rN\tau+2rN-6N+2)}{12N}\;.
\end{equation}
Here we use the convention $\prod_{k=0}^{a-1}g_k:=\prod_{k=a}^{-1}g_k^{-1}$ for $a<0$, and the fractional power is taken with the branch of
$\log\ovth_4$ continuous from the real axis. The exponent $P_a$ collects the exponential factors from \eqref{shift} and from the half-period
shift, $P_a=E(\xi+\pi a/(rN),-a)|_{\tau_2=1/(rN)}+\ii a\xi+\ii\lambda a^2/2+\ii a(\pi r\tau/4-\pi/2)$.
We call $\omega$ the \emph{cyclic elliptic gamma function}. It is an elliptic analogue of the Pochhammer symbol with step $\lambda$, and it satisfies
$\omega(\xi|0)=1$ and, as a consequence of the inversion relation, $\omega(-\xi|-a)=\omega(\xi|a)^{-1}$. Moreover, the last relation in
\eqref{Phiprop} implies
\begin{equation}\label{omegaper}
  \omega(\xi\,|\,a+rN)=\EXP^{-\frac{\pi\ii a(a+r)}{r}}\,\omega(\xi\,|\,a)\;.
\end{equation}
In particular $\omega(\xi|rN)=1$, which is equivalent to the multiplication formula
$\prod_{l=0}^{rN-1}\ovth_1(\xi+(l+1/2)\lambda|r\tau)=\ovth_4(N\xi|\hat\tau)\,\EXP^{P_{rN}(\xi)}$. Both sides of this formula have
zeros on $\lambda/2+(\pi/N)\ZZ+\lambda\ZZ$. We thus see that the fractional power in \eqref{omegadef} ensures that $\omega$ is
quasi-periodic in $a$, i.e.\ periodic up to the phase in \eqref{omegaper}, as in the $\ZZ_N$ models of \cite{Bazhanov:2010kz}. This quasi-periodicity, which allows
$a$ to be regarded as a $\ZZ_{rN}$ variable, is the reason for the adjective ``cyclic''.

By \eqref{KYasym} and \eqref{omegalim}, $\omega(\xi|a)$ is equal to \eqref{KYratio}; by the multiplication formula, the product of the $rN$
theta functions there is $\ovth_4(N\xi|\hat\tau)$ up to an exponential factor. In the form \eqref{omegadef}, however, the difficulties
mentioned above do not arise, and since the step is the torsion point $\lambda$, the modular transformation of Section~\ref{sec:modular} can be
carried out factor by factor. Since the Boltzmann weights \eqref{BW} are products of
$\Phi$'s, the discrete weights for general $n$ are products of $\omega$'s, see \eqref{dW}.

\paragraph{The $a$-independent term.}
The subleading term at $m=0$, $\Psi(z):=\log\Phi^{(0)}(z,0)$, will be needed only for the normalization in Section~\ref{sec:normalisation}. This
is the only place where we use \eqref{KYsub}, which gives
\begin{equation}\label{psidef}
  \begin{aligned}
    \log\Phi(z,0) & =\hbar^{-1}\mathcal F(z)+\Psi(z)+O(\hbar)\;,                                                                                  \\
    \Psi(z)       & =\sum_{k_1=0}^{N-1}\sum_{k_2=0}^{r-1}\frac{rN-2r(k_1+1)+2k_2+1}{2rN}                                                         \\
                  & \qquad\times\log\ovth_4\!\Big(z+\frac{\pi(k_1+1)}{N}-\frac{\pi\hat\tau(2k_2+1)}{2N}+\frac{\pi r\tau}{2}\,\Big|\,r\tau\Big)\;.
  \end{aligned}
\end{equation}
This expression can also be written in terms of the torsion point $\lambda$. Writing $\pi/N=r\lambda-\pi r\tau$, the arguments of the
theta functions become $z+(j-1/2)\lambda+\pi r\tau/2-\pi r\tau(k_1+1)$ with $j:=r(k_1+1)-k_2\in\{1,\ldots,rN\}$, and the
coefficients become $(rN-2j+1)/(2rN)$. The half-period shift and the quasi-periodicity \eqref{thetaper}, $\ovth_1(v-k\pi r\tau|r\tau)\propto\EXP^{2\ii kv}\ovth_1(v|r\tau)$,
then give
\begin{equation}\label{psitorsion}
  \Psi(z)=\sum_{j=1}^{rN}\frac{rN-2j+1}{2rN}\,\log\ovth_1\big(z+(j-\tfrac12)\lambda\,\big|\,r\tau\big)-\frac{\ii r(N^2-1)}{6}\,z+\text{const}\;,
\end{equation}
where the coefficient of the linear term is obtained from
$\ii\sum_{k_1,k_2}(rN-2r(k_1+1)+2k_2+1)/(2rN)\,(2k_1+1)=-\ii r(N^2-1)/6$.
This linear term drops out of the star-star relation, and apart from it $\Psi$ depends on $(r,N)$ only through $rN$, $\lambda$ and $r\tau$.

\subsection{The vertex factor}\label{sec:vertex}
Next we consider the vertex factor
\begin{equation}\label{Phipm}
  \Phi^{\pm}(x,m):=\Phi(-\ii\eta+x,m)\,\Phi(-\ii\eta-x,-m)\;,
\end{equation}
for which the exact expression \eqref{pairexact} is available. With $x=\xi+\pi a/(rN)$ and $m=-a$, we have $x-\pi m\tau_1=\xi+a\lambda$ and
$x+\pi m\tau_2=\xi-\ii a\hbar/r$, while $r\tau_2=1/N+\ii\hbar/\pi$. Applying the modular transformation $\tau\mapsto-\tau/(N\tau-1)$
to $\ovth_1(x|r\tau_2)$, cf.\ \eqref{thetamod}, we find, for $0<\re N\xi<\pi$,
\begin{equation}\label{pairasym}
  \log\Phi^{\pm}(\xi,0)=\hbar^{-1}\Big(\mathcal C(\xi)+\frac{\pi^2}{12N^2}\Big)+\log\ovth_1(\xi\,|\,r\tau)+c_0+O(\hbar),\qquad
  \mathcal C(\xi):=-\Big(\xi-\frac{\pi}{2N}\Big)^2\;,
\end{equation}
where $c_0$ is independent of $\xi$ and $\hbar$, and $\mathcal C$ is extended with period $\pi/N$. Here it is important that $\ovth_1=\vartheta_1/G$
(Appendix~\ref{app:theta}): the factor $\hbar^{-1/2}$ produced by the modular transformation of $\vartheta_1(x|r\tau_2)$ is cancelled by the same
factor in the asymptotics of $G(r\tau_2)$, which also produces the $\xi$-independent term $\pi^2/(12N^2\hbar)$. This term drops out of the
star-star relation, since $S$ appears once on each side, and we will omit it below. In \cite{Kels:2017vbc} the vertex weight contains in
addition the factor $G(r\tau_1)G(r\tau_2)$, whose asymptotics give the factor $1/\sqrt\hbar$ there. For the dependence on the discrete variable we find
\begin{equation}\label{pairlim}
  \lim_{\hbar\to0}\frac{\Phi^{\pm}(\xi+\frac{\pi a}{rN},-a)}{\Phi^{\pm}(\xi,0)}
  =\EXP^{\frac{\ii\lambda a^2}{r}+\frac{\ii a(2N\xi-\pi)}{rN}}\;\frac{\ovth_1(\xi+a\lambda\,|\,r\tau)}{\ovth_1(\xi\,|\,r\tau)}\;,
\end{equation}
where we used $\hbar\,\partial_x\log\ovth_1(x|r\tau_2)\to-(2N\xi-\pi)/N$. For other values of $\re N\xi$, the variable $\xi$ in the
exponential is to be replaced by $\xi-(\pi/N)\lfloor \re(N\xi)/\pi\rfloor$.

\subsection{Modular transformation}\label{sec:modular}
The asymptotic expressions above simplify after a modular transformation. As in \cite{Kels:2017vbc}, we define
\begin{equation}\label{taupdef}
  \tau':=\frac{rN\tau}{1+rN\tau},\qquad \phi:=\frac{\pi}{\lambda}\,\xi,\qquad M:=rN,\qquad T:=\frac{\tau'}{N}=\frac{r\tau}{1+rN\tau}\;.
\end{equation}
Then $r\tau'=MT=r-1/\hat\tau$. We apply the modular transformation $\big(\begin{smallmatrix}1&0\\N&1\end{smallmatrix}\big)$ to
$\ovth_1(\cdot|r\tau)$, which by \eqref{thetamod} gives
\begin{equation}
  \ovth_1(u|r\tau)=c_1\,\EXP^{-\frac{\ii N u^2}{\pi(1+rN\tau)}}\,\ovth_1\Big(\frac{u}{1+rN\tau}\,\Big|\,T\Big),\qquad \frac{u}{1+rN\tau}=\frac{\pi u}{M\lambda}\;,
\end{equation}
and we apply $\tau\mapsto-1/\tau$ followed by $\tau\mapsto\tau+r$ to $\ovth_4(\cdot|\hat\tau)$, which gives
$\ovth_4(N\xi|\hat\tau)=c_4\,\EXP^{-\ii N^2\xi^2/(\pi\hat\tau)}\ovth_2(\phi|r\tau')$. Here $c_1$ and $c_4$ are constants.
In \eqref{omegadef} the Gaussian factors combine into
\begin{equation}
  -\frac{\ii N\xi^2a}{\pi r\hat\tau}+\frac{\ii N}{\pi r\hat\tau}\sum_{k=0}^{a-1}\Big(\xi+\big(k+\tfrac12\big)\lambda\Big)^2
  =\frac{\ii a^2\xi}{r}+\frac{\ii\lambda}{r}\Big(\frac{a^3}{3}-\frac{a}{12}\Big)\;,
\end{equation}
and the $\xi$-dependent term cancels exactly against the term $-\ii a^2\xi/r$ in $P_a(\xi)$. We therefore obtain
\begin{equation}\label{omegamod}
  \omega(\xi\,|\,a)=\EXP^{\tilde P_a}\;\mathfrak w(\phi\,|\,a),\qquad
  \mathfrak w(\phi\,|\,a):=\frac{\ovth_2(\phi\,|\,MT)^{\frac{a}{M}}}{\prod_{k=0}^{a-1}\ovth_1\!\big(\frac{\phi+(k+\frac12)\pi}{M}\,\big|\,T\big)}\;,
\end{equation}
\begin{equation}
  \tilde P_a:=\frac{\pi\ii}{2}a^2+\frac{\pi\ii(rN-1)}{3r^2N}a^3+c_P\,a\;,
\end{equation}
with a constant $c_P$. The prefactor $\EXP^{\tilde P_a}$ does not contribute to the star-star relation. In an edge weight (see
\eqref{dW} below) we have
\begin{equation}
  \sum_{i,j=1}^n\tilde P_{a_{A,i}-a_{B,j}}=\frac{\pi\ii n}{2}\big(|\va_A|^2+|\va_B|^2\big)+\frac{\pi\ii n(rN-1)}{3r^2N}\Big(\sum_ia_{A,i}^3-\sum_ja_{B,j}^3\Big)\;,
\end{equation}
where the linear term has dropped out since $\sum_ia_i=0$. This is a product of one-vertex factors. Since $|\va|^2$ is even, the first term
only gives a sign, and counting the edges incident to each spin on the two sides of the star-star relation shows that all these factors cancel.
The vertex factor can be treated in the same way. For $0<\re N\xi<\pi$ we find
\begin{equation}\label{pairmod}
  \EXP^{\frac{\ii\lambda a^2}{r}+\frac{\ii a(2N\xi-\pi)}{rN}}\,\ovth_1(\xi+a\lambda\,|\,r\tau)
  =\mathcal N(\xi)\;\ovth_4\Big(\frac{\phi+\pi a}{M}+\frac{\pi T}{2}\,\Big|\,T\Big)
\end{equation}
with $\mathcal N(\xi)$ independent of $a$. Here again the Gaussian factor of the modular transformation cancels the term proportional to $a\xi$.

\section{Quasi-classical expansion of the star-star relation}\label{sec:expansion}

\subsection{Change of variables}
For every spin we write
\begin{equation}\label{spinparam}
  x_{c,i}=\xi_{c,i}+\frac{\pi a_{c,i}}{rN},\qquad m_{c,i}\equiv-a_{c,i}\ (\mathrm{mod}\ r),\qquad
  \sum_i\xi_{c,i}=0,\quad \sum_i a_{c,i}=0,\quad a_{c,i}\in\ZZ\;.
\end{equation}
This decomposition is not unique, since $(\xi_{c,i},a_{c,i})\to(\xi_{c,i}+(\pi/N)k_i,\;a_{c,i}-rk_i)$ leaves $(x_{c,i},m_{c,i})$ invariant.
By \eqref{KYasym} the leading asymptotics depend only on $\hat x_{c,i}:=x_{c,i}+\pi m_{c,i}/(rN)$, which equals $\xi_{c,i}$ modulo $(\pi/N)\ZZ$.
Since $\mathcal F$ is $\pi/N$-periodic, the leading term is therefore a function of the $\xi$'s only.
For the external spins \eqref{spinparam} is just a change of variables. As in \cite{Kels:2017vbc}, the continuous parameters
$\xi_{c,i}$ of the external spins will become additional parameters of the discrete model, which play a role similar to rapidities.
In the following, the upper-case letters $I,J,K,L,H$ label the vertices of the star-star relation (see Fig.~\ref{ssfig}), and the lower-case
letters label the components of the spins. For example, $\xi_{K,i}$ is the $i$-th component of the spin at the vertex $K$.

In the limit \eqref{limit} the crossing parameter becomes $\eta=-\ii\lambda/2+\hbar/(2r)$, so that $\ii\eta\to\lambda/2$.
Accordingly, we introduce the spectral variable
\begin{equation}\label{thetadef}
  \theta:=\frac{\pi}{\lambda}\,\ii\alpha\;,
\end{equation}
so that $\alpha\to\eta$ corresponds to $\theta\to\pi/2$. This is the same variable $\theta$ as in \cite[(3.16)]{Kels:2017vbc}.
It is convenient to measure the rapidities of the star-star relation in the same units, $\theta_x:=(\pi/\lambda)\,\ii x$ for
$x\in\{u,u',v,v'\}$. An edge with spectral parameter $x-y$ then has the spectral variable $\theta_{xy}:=\theta_x-\theta_y=(\pi/\lambda)\,\ii(x-y)$; the weights of the discrete models below carry these spectral variables as subscripts.

\subsection{Leading order: classical star-star relation}

The leading asymptotics of the weights are
$\log W_\alpha(\sigma_a,\sigma_b)=\hbar^{-1}\sum_{i,j}\mathcal F(\xi_{a,i}-\xi_{b,j}+\ii\alpha)+O(1)$ and
$\log S(\sigma)=\hbar^{-1}\sum_{i<j}\mathcal C(\xi_i-\xi_j)+O(1)$, where the constant $\pi^2/(12N^2\hbar)$ of \eqref{pairasym} is omitted. The ``action'' of the first star \eqref{W1} is
\begin{multline}
  \mathcal A^{(1)}(\vxi_H):=\sum_{i<j}\mathcal C(\xi_{H,i}-\xi_{H,j})+\sum_{i,j}\Big[
    \mathcal F\big(\xi_{K,i}-\xi_{H,j}+\tfrac{\lambda}{2}-\ii(u-v)\big)
    +\mathcal F\big(\xi_{J,i}-\xi_{H,j}+\tfrac{\lambda}{2}-\ii(u'-v')\big)\\
    +\mathcal F\big(\xi_{H,i}-\xi_{I,j}+\ii(u'-v)\big)
    +\mathcal F\big(\xi_{H,i}-\xi_{L,j}+\ii(u-v')\big)\Big]\;,
\end{multline}
and $\mathcal A^{(2)}$ is defined similarly from \eqref{W2}. We impose the constraint $\sum_i\xi_{H,i}=0$ with a Lagrange multiplier $\mu$, so that
the saddle point equations read $\partial\mathcal A^{(1)}/\partial\xi_{H,p}=\mu$ for $p=1,\ldots,n$. Explicitly, with $\mathcal L(z):=\log\ovth_4(Nz|\hat\tau)$, they are
\begin{multline}\label{saddleexplicit}
  \frac{\ii}{N}\sum_{i=1}^n\Big[-\mathcal L\big(\xi_{K,i}-\xi_{H,p}+\tfrac{\lambda}{2}-\ii(u-v)\big)
    -\mathcal L\big(\xi_{J,i}-\xi_{H,p}+\tfrac{\lambda}{2}-\ii(u'-v')\big)\\
    +\mathcal L\big(\xi_{H,p}-\xi_{I,i}+\ii(u'-v)\big)+\mathcal L\big(\xi_{H,p}-\xi_{L,i}+\ii(u-v')\big)\Big]
  +\sum_{q\neq p}\mathcal C'(\xi_{H,p}-\xi_{H,q})=\mu\;.
\end{multline}

These equations simplify considerably in the variables \eqref{taupdef}. To see this, we write $\mathcal L(z)=\log\ovth_2(\phi_z|MT)-\ii N^2z^2/(\pi\hat\tau)+\text{const}$
with $\phi_z:=(\pi/\lambda) z$, and multiply \eqref{saddleexplicit} by $N/\ii$. For a $W$-leg the Gaussian terms give
$\sum_i(\xi_{H,p}-\xi_{X,i}+y)^2=n(\xi_{H,p}+y)^2+|\vxi_X|^2$, where $y$ is the shift of the leg. A $\olW$-leg with shift $y'$ enters with the
opposite sign, and gives $\sum_i(\xi_{X,i}-\xi_{H,p}+y')^2=n(y'-\xi_{H,p})^2+|\vxi_X|^2$. Since there are two legs of each type, the terms $\propto\xi_{H,p}^2$ cancel.
Moreover, the four shifts add up to $(\lambda/\pi)\big[\theta_{u'v}+\theta_{uv'}+(\pi/2-\theta_{uv})+(\pi/2-\theta_{u'v'})\big]=\lambda$,
so that the Gaussian terms add up to $-(\ii N^2/(\pi\hat\tau))2n\lambda\,\xi_{H,p}+(\text{indep.\ of }p)=-2\ii nN\xi_{H,p}+\ldots$
On the other hand, using $\sum_q\xi_{H,q}=0$, the vertex term gives $(N/\ii)\sum_{q\ne p}\mathcal C'(\xi_{H,pq})=\ii\sum_{q\neq p}(2N\xi_{H,pq}-\pi)=2\ii nN\xi_{H,p}+\text{const}$,
where the constant does not depend on $p$ modulo $2\pi\ii$. Here we used that $\mathcal C'$ is piecewise linear with period $\pi/N$, so that on
each chamber it differs from $-(2N\xi-\pi)/N$ by a multiple of $2\pi/N$. The two contributions cancel exactly, and the saddle point equations
become, modulo $2\pi\ii$,
\begin{equation}\label{saddlemod}
  \sum_{i=1}^n\log\frac{\ovth_2(\phi_{H,p}-\phi_{I,i}+\theta_{u'v}\,|\,MT)\;\ovth_2(\phi_{H,p}-\phi_{L,i}+\theta_{uv'}\,|\,MT)}
  {\ovth_2(\phi_{K,i}-\phi_{H,p}+\frac\pi2-\theta_{uv}\,|\,MT)\;\ovth_2(\phi_{J,i}-\phi_{H,p}+\frac\pi2-\theta_{u'v'}\,|\,MT)}=\mu'\;,
\end{equation}
for $p=1,\ldots,n$, where $\mu'$ differs from $N\mu/\ii$ by a $p$-independent constant. Equivalently, \eqref{saddlemod} holds as an
equation for the exponentials of both sides. This is a multi-component analogue of the three-leg form of the $Q4$ equation \cite{Adler:2002abs}.

At order $O(\hbar^{-1})$ the star-star relation gives the classical star-star relation
\begin{equation}\label{cssr}
  \mathcal A^{(1)}(\vxi_H^{(1)})+\Lambda_{\rm ext}^{(1)}=\mathcal A^{(2)}(\vxi_H^{(2)})+\Lambda_{\rm ext}^{(2)}\;,
\end{equation}
where $\vxi_H^{(1,2)}$ are saddle points and $\Lambda^{(1,2)}_{\rm ext}$ are the leading terms of the external factors in \eqref{ssr},
\begin{equation}\label{Lext}
  \begin{aligned}
    \Lambda^{(1)}_{\rm ext} & :=\sum_{i,j}\Big[\mathcal F\big(\xi_{L,i}-\xi_{K,j}+\ii(v'-v)\big)+\mathcal F\big(\xi_{L,i}-\xi_{J,j}+\ii(u'-u)\big)\Big]\;, \\
    \Lambda^{(2)}_{\rm ext} & :=\sum_{i,j}\Big[\mathcal F\big(\xi_{J,i}-\xi_{I,j}+\ii(v'-v)\big)+\mathcal F\big(\xi_{K,i}-\xi_{I,j}+\ii(u'-u)\big)\Big]\;.
  \end{aligned}
\end{equation}
In general each star has several saddle points, in addition to their $n!$ Weyl images, and the $O(\hbar^{-1})$ part of \eqref{ssr} only implies
that the critical values of the dominant saddle points of the two sides agree. We shall call saddle points $\vxi_H^{(1)}$ and $\vxi_H^{(2)}$ of the two
stars {\it corresponding} if their critical values agree, i.e.\ if \eqref{cssr} holds. The assumptions on the saddle points under which the
relations below are derived are collected in Section~\ref{sec:discrete}.

The leading asymptotics \eqref{lead} and \eqref{pairasym} have the same form as those of the quasi-classical limit $\p\to1$ of the elliptic gamma
function in \cite[Sec.~4]{Bazhanov:2011mz}. Namely, we have $\mathcal F(z)=-N^{-2}\lambda_4(Nz|\hat\tau)$ with $\lambda_4(z|\tau):=-\ii\int_0^z\log\ovth_4(w|\tau)\,dw$,
and $\mathcal C$ is the asymptotics of $\log\ovth_1$ there. Consequently, after the rescaling $\vxi\to N\vxi$ (cf.\ \eqref{scaling}), \eqref{cssr} and
\eqref{saddlemod} are the classical star-star relation and the corresponding equations of motion of \cite[Secs.~4.2--4.3]{Bazhanov:2011mz}, with
the nome $\hat\tau$. In particular, they do not depend on $r$ and $N$ separately, and the classical star-star relation \eqref{cssr}, i.e.\ the equality of the
critical values of the two stars on solutions of \eqref{saddlemod}, follows from the results of \cite{Bazhanov:2011mz}. The hyperbolic counterparts of these equations have been studied in \cite{Kels:2024afm}, and related multi-component quad
equations have been studied in \cite{Kels:2018qzx}.

\paragraph{$n=2$: the $Q4$ equation.}
For $n=2$ the saddle point equations can be compared with those of \cite{Kels:2017vbc}. Here and below, the label KY refers to the conventions
of \cite{Kels:2017vbc}, in which the first component of a spin is $\phi^{\rm KY}:=\phi+\pi r\tau'/4$. For a leg with spectral variable
$\theta_e$ ($\theta_e:=\theta$ for a $W$-leg and $\pi/2-\theta$ for a $\olW$-leg) joining $H$ and $X$, the half-period shift
$\ovth_2(x-\pi\tau/2|\tau)=\EXP^{\ii x-\pi\ii\tau/4}\ovth_3(x|\tau)$ shows that its contribution to the difference of the $p=1$ and $p=2$
equations \eqref{saddlemod} is
\begin{equation}\label{q4leg}
  \Upsilon_{Q4}\big(\theta_e\,\big|\,\phi^{\rm KY}_H,\phi^{\rm KY}_X\big)+2\ii\theta_e\;,
\end{equation}
where $\Upsilon_{Q4}$ is the building block of the three-leg form of $Q4$ in \cite[(3.23)]{Kels:2017vbc}:
\begin{equation}\label{UpsQ4}
  \Upsilon_{Q4}(\theta|\phi_i,\phi_j):=\log\frac{\ovth_3(\theta+\phi_i+\phi_j|r\tau')}{\ovth_3(\theta-\phi_i-\phi_j|r\tau')}+\log\frac{\ovth_2(\theta+\phi_i-\phi_j|r\tau')}{\ovth_2(\theta-\phi_i+\phi_j|r\tau')}\;.
\end{equation}
Since the spectral variables of the
four legs add up to $\pi$, the saddle point equation is $\sum_e\Upsilon_{Q4}(\theta_e|\phi^{\rm KY}_H,\phi^{\rm KY}_{X_e})=0$ mod $2\pi\ii$, the
four-leg counterpart of \cite[(3.23)]{Kels:2017vbc}.

\subsection{Saddle points and discrete spins}
The action $\mathcal A^{(1)}$ is invariant under $\xi_{H,p}\to\xi_{H,p}+(\pi/N)k_p$, so that if $\vxi_H$ is a saddle point, so is every point
\begin{equation}\label{saddleset}
  \vx_H=\vxi_H+\frac{\pi}{rN}\va_H,\qquad \va_H\in\ZZ^n,\quad\sum_p a_{H,p}=0,\qquad \vm_H\equiv-\va_H\ (\mathrm{mod}\ r)\;.
\end{equation}
Here each $\va_H$ corresponds to a pair consisting of a value of $\vm_H$ and an image of $\vxi_H$ under the $(\pi/N)\ZZ^n$-shifts.
The integrand is invariant under $x_p\to x_p+\pi k_p$ with $\sum k_p=0$, and modulo this periodicity $\va_H$ ranges over
\begin{equation}
  Q/rNQ\;\cong\;(\ZZ_{rN})^{\,n-1}\;,\qquad Q:=\{\va\in\ZZ^n:\textstyle\sum_pa_p=0\}\;,
\end{equation}
where $Q$ is the root lattice of $\mathfrak{sl}(n)$. This accounts for $r^{n-1}$ values of $\vm_H$ times $N^{n-1}$ images of $\vxi_H$. The $n!$ Weyl
images of $\vxi_H$ give identical contributions. As in \cite{Kels:2017vbc}, we shall assume that all saddle points \eqref{saddleset} contribute with
equal weight, i.e.\ that the ``replica symmetry'' is unbroken (assumption (A2) of Section~\ref{sec:discrete}). The Gaussian integration around
each of the saddle points gives the same factor $\big[\det\big(-\partial^2\mathcal A^{(1)}/(2\pi\hbar)\big)\big]^{-1/2}$, since the integrand behaves
as $\EXP^{\mathcal A^{(1)}/\hbar}$ and the integration contour is deformed into the steepest descent contour through the saddle point. Here $\partial^2\mathcal A^{(1)}$
denotes the $(n-1)\times(n-1)$ Hessian with respect to $\xi_{H,1},\ldots,\xi_{H,n-1}$, with $\xi_{H,n}=-\sum_{p<n}\xi_{H,p}$, and the branch of the
square root is fixed by continuity in the parameters. Only the ratio of these factors for the two stars enters below, see \eqref{hessratio}.

\subsection{Subleading order: the discrete star-star relation}\label{sec:discrete}

Let us now collect the $O(\hbar^0)$ terms that depend on the discrete variables. For spins $s=(\vxi,\va)$, \eqref{omegalim} and
\eqref{pairlim} give the Boltzmann weights
\begin{align}
  \bW_\theta(s_a,s_b) & :=\prod_{i,j=1}^n\omega\Big(\xi_{a,i}-\xi_{b,j}+\tfrac{\lambda}{\pi}\theta\;\Big|\;a_{a,i}-a_{b,j}\Big)\;,
  \qquad \obW_\theta(s_a,s_b):=\bW_{\frac{\pi}{2}-\theta}(s_a,s_b)\;,\label{dW}                                                   \\
  \bS(s)              & :=\prod_{i<j}\EXP^{\frac{\ii\lambda a_{ij}^2}{r}+\frac{\ii a_{ij}(2N\tilde\xi_{ij}-\pi)}{rN}}\;
  \ovth_1\big(\xi_{ij}+a_{ij}\lambda\,\big|\,r\tau\big)\;,\label{dS}
\end{align}
where $\xi_{ij}:=\xi_i-\xi_j$, $a_{ij}:=a_i-a_j$, and $\tilde\xi_{ij}:=\xi_{ij}-(\pi/N)\lfloor\re(N\xi_{ij})/\pi\rfloor$.
Since $a_{H,p}\to a_{H,p}+rNk_p$ with $\sum_pk_p=0$ corresponds to $x_{H,p}\to x_{H,p}+\pi k_p$, under which the integrand of \eqref{ssr} is
invariant, the summand below is periodic in the internal discrete spin. In particular, the phases in \eqref{omegaper} of the individual factors
$\omega$ cancel in the summand, and the sum over $\va_H$ may be taken over any set of representatives. The individual edge weights, on the
other hand, depend on the lifts $\va\in\ZZ^n$ of the external discrete spins through such phases. Since a change of the lift of an external
spin by $rNQ$ at fixed $\vxi$ corresponds to a shift of its continuous components by $\pi Q$, and \eqref{ssr} holds at
every point, the relation below holds for any choice of lifts of the external spins, provided that the same lifts are used on both sides.
For definiteness, one may take the representatives $0\le a_1,\ldots,a_{n-1}<rN$, $a_n=-\sum_{i<n}a_i$.
Comparing the $O(\hbar^0)$ terms of the two sides of \eqref{ssr}, we obtain the $(\ZZ_{rN})^{n-1}$ star-star relation
\begin{multline}\label{dssr}
  \bW_{\theta_{v'v}}(s_L,s_K)\,\bW_{\theta_{u'u}}(s_L,s_J)\\
  \times\sum_{\va_H\in(\ZZ_{rN})^{n-1}}\bS(s_H)\,\obW_{\theta_{uv}}(s_K,s_H)\,\obW_{\theta_{u'v'}}(s_J,s_H)\,
  \bW_{\theta_{u'v}}(s_H,s_I)\,\bW_{\theta_{uv'}}(s_H,s_L)\Big|_{\vxi_H=\vxi_H^{(1)}}\\
  =\mathcal R\;\bW_{\theta_{v'v}}(s_J,s_I)\,\bW_{\theta_{u'u}}(s_K,s_I)\\
  \times\sum_{\va_H\in(\ZZ_{rN})^{n-1}}\bS(s_H)\,\obW_{\theta_{uv}}(s_H,s_J)\,\obW_{\theta_{u'v'}}(s_H,s_K)\,
  \bW_{\theta_{u'v}}(s_L,s_H)\,\bW_{\theta_{uv'}}(s_I,s_H)\Big|_{\vxi_H=\vxi_H^{(2)}}\;,
\end{multline}
for arbitrary discrete spins $\va_I,\va_J,\va_K,\va_L\in(\ZZ_{rN})^{n-1}$. The derivation applies directly to external continuous spins
$\vxi_I,\vxi_J,\vxi_K,\vxi_L$ and spectral parameters for which all edge arguments lie in the strip of validity of \eqref{KYasym} and the
corresponding saddle points are non-degenerate; outside this region \eqref{dssr} is understood by analytic continuation. Among the solutions of
the exponentiated saddle point equations \eqref{saddlemod}, the saddle points are those for which the logarithms in \eqref{saddleexplicit} are on
the branch of the action $\mathcal A$; in practice we select them by continuity from a known saddle point. Here
$\vxi_H^{(1)},\vxi_H^{(2)}$ are corresponding solutions of the saddle point equations of the two stars, and $\mathcal R$ is a factor independent
of the discrete spins, which will be given in \eqref{Kdef} below.

\paragraph{Assumptions.}
The derivation of \eqref{dssr} rests on the following assumptions on the saddle points, which are the same as in \cite{Bazhanov:2010kz,Kels:2017vbc}.
\begin{itemize}
  \item[(A1)] The integrals \eqref{W1} and \eqref{W2} are dominated by their saddle points, i.e.\ the integration contour can be deformed into
        steepest descent contours through the saddle points, and the corrections are of relative order $O(\hbar)$.
  \item[(A2)] All saddle points \eqref{saddleset} in the orbit of $\vxi_H$ under the shifts by $(\pi/N)\ZZ^n$ contribute with equal weight.
  \item[(A3)] The saddle points of the two stars come in corresponding pairs in the sense of Section~\ref{sec:expansion}, and the parameters are
        such that each critical value is attained by a single orbit of saddle points under the shifts \eqref{saddleset} and the Weyl group, so that the
        contributions of the two sides can be compared individually; the relation is then extended by analytic continuation. The same condition is used
        for the ratio of the Hessians in Section~\ref{sec:normalisation}.
\end{itemize}
In addition, we use the uniformity of the $O(\hbar)$ remainder in \eqref{KYasym} on compact subsets of the strip
(Section~\ref{sec:asymptotics}), which is standard but not proven either. None of these assumptions is proven. They are supported by the numerical checks of Appendix~\ref{app:numerics}, in which the sum over $\va_H$ is
taken with equal weights and every critical value found on one side was also found on the other. Under these assumptions, \eqref{dssr} holds for all discrete spins whenever $\vxi_H^{(1)}$, $\vxi_H^{(2)}$ are corresponding non-degenerate
saddle points with all edge arguments in the strip of validity of \eqref{KYasym}, with $\mathcal R=\EXP^{\mathcal K^{(2)}-\mathcal K^{(1)}}$
given by \eqref{Kdef}. The cases in which \eqref{dssr} reduces to relations proven in the literature are listed in Section~\ref{sec:proofs}.

\subsection{The normalization factor}\label{sec:normalisation}

The factor $\mathcal R$ collects the $O(\hbar^0)$ contributions which do not depend on the discrete spins. There are four such contributions.
The first one is $\Psi$ of \eqref{psidef}, for every factor $\Phi$ of every edge weight. The second one arises for every factor of an edge weight
$\olW$ from the $O(\hbar)$ part of $\ii\eta=\lambda/2+\ii\hbar/(2r)$, and by \eqref{lead} it is given by
$\exp((\ii/(2r))\mathcal F'(z_0))=\ovth_4(Nz_0|\hat\tau)^{-1/(2rN)}$. The third one is the Gaussian factor, and the fourth one is the
vertex term $\log\ovth_1(\xi_{H,ij}|r\tau)$ in \eqref{pairasym}, which is already included in $\bS$. The remaining factors, namely constants,
powers of $\hbar$, the number of saddle points and the Weyl factor, are the same on both sides.

The Hessian can be eliminated as in \cite[Sec.~3.4]{Kels:2017vbc}. To this end we compare models with different parameters. We refer to
the root-of-unity limit with lens index $r$, root of unity $\q\to\EXP^{\pi\ii/(rN)}$ and $\tau_1=\tau$ in \eqref{limit} as the
$(r,N,\tau)$ model, and indicate this dependence by subscripts when necessary. The leading action has the scaling property
\begin{equation}\label{scaling}
  \mathcal A_{r,N,\tau}(\vxi)=N^{-2}\,\mathcal A_{1,1,\check\tau}(N\vxi),\qquad \check\tau:=N\tau+\frac1r-1\;.
\end{equation}
Here $\check\tau$ is chosen such that the $(1,1,\check\tau)$ model, i.e.\ the model with $r=N=1$ and $\tau$ replaced by $\check\tau$, has the
same $\hat\tau$ as the $(r,N,\tau)$ model: by \eqref{hatdef}, $\hat\tau=\check\tau+1$ for the former and $\hat\tau=N\tau+1/r$ for the latter.
The relation \eqref{scaling} holds since $\mathcal F_{r,N,\tau}(z)=N^{-2}\mathcal F_{1,1,\check\tau}(Nz)$, $\mathcal C_N(\xi)=N^{-2}\mathcal C_1(N\xi)$, and $N\lambda=\pi\hat\tau$ is the $\lambda$ of the
$(1,1,\check\tau)$ model. The saddle points of the two models therefore correspond to each other, and their Hessians agree. For $r=N=1$ the discrete
sums are trivial, and the $O(\hbar^0)$ part of the star-star relation of that model therefore determines the ratio of the Hessian determinants as
\begin{equation}\label{hessratio}
  \left(\frac{\det\partial^2\mathcal A^{(1)}(\vxi^{(1)}_H)}{\det\partial^2\mathcal A^{(2)}(\vxi^{(2)}_H)}\right)^{1/2}
  =\frac{\mathcal H^{(1)}}{\mathcal H^{(2)}},\qquad
  \mathcal H^{(k)}:=\prod_{\olW\text{ edges}}\prod_{i,j}\ovth_4(Nz_{0}\,|\,\hat\tau)^{-\frac12}\;\prod_{i<j}\ovth_1\big(N\xi^{(k)}_{H,ij}\,\big|\,\hat\tau\big)\;,
\end{equation}
where the products run over the edges of the star $k=1,2$, and $z_0$ are the arguments of the $\olW$-factors. Here we used the fact that
$\ovth_1(x|\check\tau)$ and $\ovth_1(x|\hat\tau)$ differ by a constant. Note that this step uses the $O(\hbar^0)$ part of the star-star
relation of the $(1,1,\check\tau)$ model, and hence the assumptions (A1)--(A3) for that model, as well as the correspondence of the saddle points
under \eqref{scaling}. The normalization $\mathcal R$ obtained below is therefore conditional in the same sense as \eqref{dssr}. The independence
of the ratio of the two sides of \eqref{dssr} from the discrete spins, on the other hand, does not involve the Hessians, which are the same for
all terms of each sum. Putting everything together, we find
$\mathcal R=\EXP^{\mathcal K^{(2)}-\mathcal K^{(1)}}$ with
\begin{equation}\label{Kdef}
  \mathcal K^{(k)}:=\sum_{W\text{ edges}}\sum_{i,j}\Psi(z)
  +\sum_{\olW\text{ edges}}\sum_{i,j}\Big[\Psi(z_0)+\frac{rN-1}{2rN}\log\ovth_4(Nz_0\,|\,\hat\tau)\Big]
  -\sum_{i<j}\log\ovth_1\big(N\xi^{(k)}_{H,ij}\,\big|\,\hat\tau\big)\;,
\end{equation}
where the sums run over all six edges of side $k$ of \eqref{dssr}, $z:=\xi_{A,i}-\xi_{B,j}+(\lambda/\pi)\theta$ for $W_\theta(s_A,s_B)$, and
$z_0:=\xi_{A,i}-\xi_{B,j}+(\lambda/\pi)(\pi/2-\theta)$ for $\olW_\theta(s_A,s_B)$. For $r=N=1$ we have $\Psi\equiv0$, and \eqref{dssr} reduces to the identity $\bS(s_H^{(1)})=\mathcal R\,\bS(s_H^{(2)})$ for the saddle points.
The linear term in \eqref{psitorsion} cancels between the two sides, since the same six spectral parameters appear on both sides.

\section{Discrete spin solutions of the star-star relation}\label{sec:special}

\subsection{\texorpdfstring{$n=2$: comparison with Kels--Yamazaki}{n=2: comparison with Kels-Yamazaki}}\label{sec:n2}
We first compare our results with those of \cite{Kels:2017vbc} for $n=2$. In this case a spin is $s=((\zeta,-\zeta),(a,-a))$, i.e.\ $x=\zeta+\pi a/(rN)$
and $m\equiv-a$. This is the parametrization used in \cite{Kels:2017vbc}, where $\hat x=x+\pi m/(rN)=\zeta+\pi k/N$ with $k\in\ZZ$, and
the $\ZZ_{rN}$ spin is $a=rk-m$. The weights of \cite[(4.7)--(4.9), (4.13)]{Kels:2017vbc} are
\begin{gather}
  W^{\rm KY}_\theta\big((\phi_i,a_i),(\phi_j,a_j)\big):=w_{4,3}(\theta|\phi_i+\phi_j,a_i+a_j)\,w_{1,2}(\theta|\phi_i-\phi_j,a_i-a_j)\;,\\
  w_{j_1,j_2}(\theta|\phi,a):=\left(\frac{\ovth_{j_2}(\phi+\theta|r\tau')}{\ovth_{j_2}(\phi-\theta|r\tau')}\right)^{\frac{a}{rN}}
  \prod_{k=1}^{a}\frac{\ovth_{j_1}\big(\frac{\phi-\theta}{rN}+\frac{\pi(2k-1)}{2rN}\big|\frac{\tau'}{N}\big)}{\ovth_{j_1}\big(\frac{\phi+\theta}{rN}+\frac{\pi(2k-1)}{2rN}\big|\frac{\tau'}{N}\big)}\;,\\
  S^{\rm KY}(\phi,a)\propto\vartheta_4\Big(\frac{2\pi a}{rN}+\frac{2\phi}{rN}\,\Big|\,\frac{\tau'}{N}\Big)\;,
\end{gather}
with $\phi_i:=(\zeta_i+r\pi\tau/4)\pi/\lambda$ and $\theta^{\rm KY}:=(\pi/\lambda)\ii\alpha$, which coincides with our $\theta$.
On the other hand, for $n=2$ our edge weight can be written, using $\omega(-z|-a)=\omega(z|a)^{-1}$, as
\begin{equation}
  \bW_\theta(s_i,s_j)=\frac{\omega(\zeta_-+\frac{\lambda}{\pi}\theta\,|\,a_-)}{\omega(\zeta_--\frac{\lambda}{\pi}\theta\,|\,a_-)}\cdot\frac{\omega(\zeta_++\frac{\lambda}{\pi}\theta\,|\,a_+)}{\omega(\zeta_+-\frac{\lambda}{\pi}\theta\,|\,a_+)}\;,
  \qquad \zeta_\pm:=\zeta_i\pm\zeta_j,\ a_\pm:=a_i\pm a_j\;.
\end{equation}
By \eqref{omegamod}, the factor $\EXP^{\tilde P_a}$ cancels in each ratio, and we have $\omega(\zeta+\lambda\theta/\pi|a)/\omega(\zeta-\lambda\theta/\pi|a)=w_{1,2}(\theta|(\pi/\lambda)\zeta,a)$.
Since $\phi_i-\phi_j=(\pi/\lambda)\zeta_-$, the first ratio is the factor $w_{1,2}$. For the other factor we have $\phi_i+\phi_j=(\pi/\lambda)\zeta_++\pi r\tau'/2$.
In this case the half-period shifts $\ovth_1(x-\pi T/2|T)\propto\EXP^{\ii x}\ovth_4(x|T)$ and $\ovth_2(y-\pi r\tau'/2|r\tau')\propto\EXP^{\ii y}\ovth_3(y|r\tau')$
produce factors $\EXP^{\mp2\ii\theta a/(rN)}$, which cancel against each other. The vertex weight is given by \eqref{pairmod} with $\xi_{12}=2\zeta$ and
$a_{12}=2a$. Therefore, for a spin $s=((\zeta,-\zeta),(a,-a))$ with $0<\re(2N\zeta)<\pi$ and $\phi:=(\zeta+r\pi\tau/4)\pi/\lambda$, we find
\begin{equation}
  \bW_\theta(s_i,s_j)=W^{\rm KY}_\theta\big((\phi_i,a_i),(\phi_j,a_j)\big),\qquad \bS(s)=\mathcal N(2\zeta)\,S^{\rm KY}(\phi,a)\;,
\end{equation}
and we recover exactly the $\ZZ_{rN}$ model of \cite[(4.17)]{Kels:2017vbc}.

For general $n$, the weights \eqref{dW}--\eqref{dS} take the following form in the modular frame:
\begin{equation}\label{dWmod}
  \bW_\theta(s_A,s_B)\simeq\prod_{i,j=1}^n\mathfrak w\big(\phi_{A,i}-\phi_{B,j}+\theta\,\big|\,a_{A,i}-a_{B,j}\big),\qquad
  \bS(s)\simeq\prod_{i<j}\ovth_1\Big(\frac{\phi_{ij}+\pi a_{ij}}{M}\,\Big|\,T\Big)\,\EXP^{-\frac{\pi\ii}{M}\varepsilon_{ij}a_{ij}}\;,
\end{equation}
where $\varepsilon_{ij}:=\operatorname{sign}\re(N\xi_{ij})$ for $|\re N\xi_{ij}|<\pi$, and $\simeq$ denotes equality up to factors which either cancel term by term in \eqref{dssr}, or depend only on the continuous spins.
Factors of the second kind which belong to the internal vertex do not cancel in general, since $\vxi_H^{(1)}\ne\vxi_H^{(2)}$, and they change the
normalization $\mathcal R$. For example, if $\bS$ is replaced by $C(\vxi)\bS$, then $\mathcal R$ is replaced by $\mathcal R\,C(\vxi_H^{(1)})/C(\vxi_H^{(2)})$.
For $\varepsilon_{ij}=+1$ the vertex weight reduces to \eqref{pairmod}. Since $\ovth_1$ is odd, each factor of the vertex weight changes sign under
$(\phi_{ij},a_{ij},\varepsilon_{ij})\to-(\phi_{ij},a_{ij},\varepsilon_{ij})$, so that the vertex weight is antisymmetric under transpositions of the
components, like the Weyl denominator, of which it can be regarded as a theta-function analogue. The same holds for \eqref{dS}.

Besides the continuous parameters $\phi$, the discrete model thus carries the Weyl-chamber data $\varepsilon_{ij}$ of the saddle point.
Under simultaneous permutations of the components of $(\vxi_H,\va_H)$, which permute the $\varepsilon_{ij}$ accordingly, the summand
of \eqref{dssr} changes only by the sign of the permutation, which does not depend on the discrete spins and can be absorbed into $\mathcal R$.
This sign is compensated by the $a$-independent factors, since the full integrand of \eqref{ssr} is symmetric under permutations of the
components of $\sigma_H$. The chamber data are therefore defined only up to this action, and Weyl images of a saddle point give the same
model, so that they are not additional parameters. However, they must be kept track of when the saddle point is continued in the external
parameters. Indeed, $\varepsilon_{ij}$ changes when $\re(N\xi_{H,ij})$ crosses $\pi\ZZ$, which is the boundary of the region where \eqref{pairasym}
applies. For real configurations two components then coincide modulo $\pi/N$ (cf.\ Section~\ref{sec:KMreal}), while for complex saddle points
this is only a boundary of the asymptotic region.

\paragraph{Branches of the fractional powers.}
The weights \eqref{dW} contain the fractional powers $\ovth_4(N\xi|\hat\tau)^{a/(rN)}$ of \eqref{omegadef}, equivalently
$\ovth_2(\phi|MT)^{a/M}$ in \eqref{omegamod}. They are defined with the branch of $\log\ovth_4$ continuous from the real axis
(Section~\ref{sec:asymptotics}), and the modular form \eqref{dWmod} inherits this branch. This choice is important for the real symmetric configurations of Section~\ref{sec:KMreal}
and in the trigonometric limit of Section~\ref{sec:trig-general}.

\subsection{Dependence on \texorpdfstring{$r$}{r}}
For $n=2$ it was observed in \cite{Kels:2017vbc} that the $\ZZ_{rN}$ model obtained for general $r$ coincides with the $r=1$ model of
\cite{Bazhanov:2010kz} with $N\to rN$. Using the variables of Section~\ref{sec:modular}, this can be understood as follows, and it turns out that
the statement holds for all $n$. Namely, in the variables $(\phi,\theta,\va)$ the discrete star-star relation \eqref{dssr} depends on $(r,N,\tau)$
only through $(M,T)=(rN,r\tau/(1+rN\tau))$, up to factors which cancel or are absorbed into $\mathcal R$. To see this, note first that $\mathfrak w(\phi|a)$ in \eqref{omegamod}
contains only $a/M$, arguments divided by $M$, and the nomes $MT$ and $T$. The prefactor $\EXP^{\sum\tilde P}$ cancels in \eqref{dssr},
and its cubic coefficient $\pi\ii(rN-1)/(3r^2N)$ is the only place where $r$ appears separately. Next, the vertex weight in \eqref{dWmod}
depends only on $(M,T)$, and the saddle point equations \eqref{saddlemod} contain only $MT$. Finally, the summation range is $(\ZZ_M)^{n-1}$.
Consequently the model with parameters $(r,N,\tau)$ is equivalent to the model with parameters
\begin{equation}\label{tau1def}
  (1,\;rN,\;\tilde\tau),\qquad \tilde\tau:=\frac{r\tau}{1-(r-1)rN\tau}\;,
\end{equation}
where the spins are identified by equal $\phi$ and equal $\va$, i.e.\ $\xi$ is mapped to $(\tilde\lambda/\lambda)\xi$, where $\tilde\lambda$ is the torsion point \eqref{lamdef} of the model $(1,rN,\tilde\tau)$. In this identification the
Weyl-chamber data $\varepsilon_{ij}$ and the branches of the fractional powers are carried along with $\phi$. Since $N\xi=\phi/(r-MT)$, the
chambers defined by the signs of $\re(N\xi_{ij})$ for fixed $(\phi,T)$ may differ for different $r$; the equivalence is therefore an equivalence of
the models with the same chamber data and branches, obtained by analytic continuation, rather than of their definitions in terms of $\re(N\xi)$. The map $r\tau\mapsto\tilde\tau$ is the modular
transformation $\big(\begin{smallmatrix}1&0\\-(r-1)N&1\end{smallmatrix}\big)$, so that $\im\tilde\tau>0$. Note that this is a statement about the
discrete model only. The lens elliptic gamma function models with different $r$ are different, and so are the asymptotics of $\Phi(z,m)$ themselves
(cf.\ \eqref{psidef}). In particular, the $(\ZZ_r)^{n-1}$ model obtained for $N=1$ coincides with the one obtained from the $r=1$ model of
Bazhanov and Sergeev in the limit $\q\to\EXP^{\pi\ii/r}$. In the following we will work with $(M,T)$.

\subsection{Status of the discrete star-star relation}\label{sec:proofs}

The relation \eqref{dssr} was obtained from the quasi-classical expansion under the assumptions (A1)--(A3) (Section~\ref{sec:discrete}), and as far as we are aware a direct proof is not available in general;
already for $n=2$ the corresponding $\ZZ_N$ star-triangle relation of \cite[Sec.~4]{Bazhanov:2010kz} for general solutions of $Q4$ has not been
proven directly. The unconditional statement which we expect to hold is the following.
\begin{conjecture}\label{conj:dssr}
  Let $\vxi_H^{(1)}$, $\vxi_H^{(2)}$ be corresponding non-degenerate solutions of the saddle point equations \eqref{saddlemod} of the two stars.
  Then the $(\ZZ_{rN})^{n-1}$ star-star relation \eqref{dssr} holds for all discrete spins $\va_I,\va_J,\va_K,\va_L$, with the normalization
  $\mathcal R$ given by \eqref{Kdef}.
\end{conjecture}
In a number of special cases \eqref{dssr} reduces to relations which are established in the literature. These cases, and the status of
\eqref{dssr} in the other cases discussed in this paper, are listed below and summarized in Table~\ref{tab:status}, to which the
Introduction, Fig.~\ref{fig:models} and Section~\ref{sec:torsion} refer.

\begin{table}[t]\centering\small
  \renewcommand{\arraystretch}{1.2}
  \begin{tabular}{>{\raggedright\arraybackslash}p{4.6cm}>{\raggedright\arraybackslash}p{4.9cm}>{\raggedright\arraybackslash}p{4.6cm}}
    \hline
    model and parameters                                                                                                                                                               & status of \eqref{dssr}                                                                                                      & where                                                   \\
    \hline
    general background, any $n$, $r$, $N$                                                                                                                                              & conditional (Section~\ref{sec:discrete}); numerical checks                                                                  & Section~\ref{sec:discrete}, Appendix~\ref{app:numerics} \\
    $n=2$: Kashiwara--Miwa points \eqref{w0}, Fateev--Zamolodchikov point, chiral Potts curve                                                                                          & proven \cite{Kashiwara:1986tu,Hasegawa:1990du,Fateev:1982wi,Baxter:1987eq,Au-Yang:1989mk}                                   & Section~\ref{sec:n2}                                    \\
    $n\ge3$, elliptic Kashiwara--Miwa type, torsion points in the orbit of the Quano--Fujii point, $\gcd(n,M)=1$ (all primitive points if $\gcd(n,2M)=1$; the real points iff $n$ odd) & proven from the Yang--Baxter equation of \cite{Quano:1992wc}                                                                & Sections~\ref{sec:QFproof}, \ref{sec:torsion}           \\
    $n\ge3$, elliptic Kashiwara--Miwa type, real points with even $n$ (odd $\kappa$)                                                                                                   & conditional; numerical checks                                                                                               & Section~\ref{sec:torsion}                               \\
    $n\ge3$, elliptic Kashiwara--Miwa type, $\gcd(n,M)>1$                                                                                                                              & extension of the Quano--Fujii model; conditional if the point is in the orbit of a real point, otherwise numerical only     & Sections~\ref{sec:gcd}, \ref{sec:torsion}               \\
    $n\ge3$, elliptic Kashiwara--Miwa type, points in the orbit of neither the Quano--Fujii point nor a real point                                                                     & numerical only                                                                                                              & Section~\ref{sec:torsion}                               \\
    trigonometric limit, Z-invariant configurations \eqref{zinvspins}, any non-degenerate solution of the saddle point equations                                                       & proven, equal to the star-star relation of the $\mathfrak{sl}(n)$ chiral Potts model \cite{Bazhanov:1990qk,Bazhanov:1992jq} & Sections~\ref{sec:trig-general}, \ref{sec:BKMSproof}    \\
    trigonometric limit, external spins not Z-invariant                                                                                                                                & conditional; numerical checks                                                                                               & Section~\ref{sec:trig-general}                          \\
    \hline
  \end{tabular}
  \caption{Status of the discrete star-star relation \eqref{dssr} in the cases discussed in this paper. ``Conditional'' means that the relation
    follows from the quasi-classical expansion under the assumptions (A1)--(A3) of Section~\ref{sec:discrete}.}\label{tab:status}
\end{table}

For $n=2$ the star-star relation follows from the star-triangle relation \cite{Baxter:1997tn,Bazhanov:2011mz}, and our weights coincide with
those of \cite{Kels:2017vbc} (Section~\ref{sec:n2}), so that the proofs for the Kashiwara--Miwa, Fateev--Zamolodchikov and chiral Potts models
apply. For $n\ge3$ the two families of proven cases in Table~\ref{tab:status} are treated in Sections~\ref{sec:QFproof}, \ref{sec:torsion} and
in Sections~\ref{sec:trig-general}, \ref{sec:BKMSproof}, respectively.

\paragraph{General case.} For general solutions of the saddle point equations, a possible route to a proof is suggested by
\cite{Bazhanov:2008np,Bazhanov:1995zg}. There the $R$-matrix of the quantum discrete sine-Gordon model at a root of unity splits into
a classical part and a finite-dimensional part, and the Yang--Baxter equation yields a $\ZZ_N$ star-triangle relation on a classical background
solving the equations of motion. For a constant background this is the chiral Potts model. An elliptic $\mathfrak{sl}(n)$ analogue of this
construction, with cyclic representations of the Sklyanin and Belavin algebras as in \cite{Hasegawa:1990du,Quano:1992wc} on a background solving
\eqref{saddlemod}, would give a direct proof of \eqref{dssr}. This is the picture of the ``hybrid'' model anticipated in \cite[Sec.~5]{Bazhanov:2011mz}.
Intertwining operators of the Sklyanin algebra which are relevant for such a construction were studied in \cite{Zabrodin:2010qm}. We leave this
problem for future work.

\section{\texorpdfstring{Elliptic $\mathfrak{sl}(n)$ Kashiwara--Miwa type models}{Elliptic sl(n) Kashiwara-Miwa type models}}\label{sec:KMn}

In this section we study the simplest solutions of the saddle point equations, at which all spins carry the same continuous components.
They lead to elliptic $\mathfrak{sl}(n)$ generalizations of the Kashiwara--Miwa model, which, extended to general torsion points, include the $A^{(1)}_{n-1}$
broken $\ZZ_M$ model of Quano and Fujii.

\subsection{Real symmetric configurations}\label{sec:KMreal}

Let us look for solutions of the saddle point equations at which all spins, both external and internal, have the same continuous components
$\vxi_c=\vxi^0$, in analogy with the Kashiwara--Miwa points of \cite[Sec.~4.1]{Kels:2017vbc}. We require that such a configuration be admissible for the same background $\vxi^0$ uniformly for the full
interval of spectral variables, i.e.\ that for all $\theta\in(0,\pi/2)$ all edge arguments $z=\xi^0_i-\xi^0_j+(\lambda/\pi)\theta$ lie in the strip $|\im z|<(1/2)\im\lambda$, where \eqref{KYasym} holds and where
$\log\ovth_4(Nz|\hat\tau)$ is analytic and single-valued. We have $\im z=\im(\xi^0_{ij})+(\theta/\pi)\im\lambda$, and the spectral variables of the
$W$- and $\olW$-edges, $\theta$ and $\pi/2-\theta$, range over $(0,\pi/2)$. Requiring this for all $\theta$ in this range and for both orientations $ij$ and $ji$ of the edges gives
$-(1/2)\im\lambda<\pm\im\xi^0_{ij}\le0$, i.e.\ $\im\xi^0_{ij}=0$, so that the configuration has to be real. (For fixed rapidities,
or for spectral variables in a smaller interval, small non-zero $\im\xi^0_{ij}$ would also be admissible.)

We take
\begin{equation}\label{xi0}
  \xi^0_p:=-\rho_p\,\frac{\kappa\pi}{nN},\qquad \rho_p:=p-\frac{n+1}{2},\qquad p=1,\ldots,n,\qquad \kappa\in\ZZ\;,
\end{equation}
which is an $n$-torsion point of $\RR/(\pi/N)\ZZ$ in each difference. It is invariant, modulo $(\pi/N)\ZZ^n$ and an
overall translation, under the cyclic permutation of the components; this is what we mean by a $\ZZ_n$-symmetric configuration.

Let us first check that this configuration is a saddle point. In \eqref{saddleexplicit} the edge terms involve only $\mathcal L(z)=\log\ovth_4(Nz|\hat\tau)$, which is $\pi/N$-periodic.
Moreover, for every $p$ the multiset $\{\xi^0_p-\xi^0_i\}_i$ modulo $\pi/N$ is $\{s \kappa\pi/(nN)\}_{s\in\ZZ_n}$, and the edge terms do not
depend on $p$. The vertex term is $\sum_{q\neq p}\mathcal C'(\xi^0_{pq})=-(1/N)\sum_{q\neq p}(2N\tilde\xi^0_{pq}-\pi)$. If $\gcd(\kappa,n)=1$, then
$N\tilde\xi^0_{pq}$ runs over $\{s\pi/n\}_{s=1}^{n-1}$ for every $p$, and the sum vanishes. Thus, for $\gcd(\kappa,n)=1$, $\vxi_H=\vxi^0$ solves
the saddle point equations of both stars for arbitrary spectral parameters, with $\mu$ independent of $p$. For $\gcd(\kappa,n)>1$, on the other
hand, two components coincide modulo $\pi/N$. Such a configuration lies on a wall of the Weyl chambers, where \eqref{pairasym} does not
apply, and we exclude it.

Next consider the fractional powers in \eqref{dW}, which combine into
$\exp\big[\sum_{i,j}(a_{A,i}-a_{B,j})\,\mathcal L(\xi^0_{ij}+(\lambda/\pi)\theta)/(rN)\big]$.
As we have just seen, $\sum_j\mathcal L(\xi^0_{ij}+(\lambda/\pi)\theta)$ does not
depend on $i$, and the same is true for the sum over $i$. Since $\sum_ia_{A,i}=\sum_ja_{B,j}=0$, the fractional powers cancel identically.
Because $\mathcal L$ is single-valued in the strip, there is no branch ambiguity, and the weights become pure products of theta functions.

Finally, the Weyl-chamber signs of the vertex weight are $\varepsilon_{ij}=\operatorname{sign}\re(N\xi^0_{ij})$, where $N\xi^0_{ij}$ is
reduced to $(-\pi,\pi)$ as in \eqref{dS}. For $\kappa=1$ they are $\varepsilon_{ij}=+1$ for $i<j$, and for general $\kappa$ they are determined by
$\kappa(j-i)$ mod $n$. Below we write the resulting weights for $\kappa=1$. For general $\kappa$ with $\gcd(\kappa,n)=1$ we have verified
numerically that the resulting relation coincides with \eqref{KMnssr} at the corresponding $\omega_0$ (Appendix~\ref{app:numerics}).

In the modular frame \eqref{taupdef}, using $\pi/N=r\lambda-\pi r\tau$, which is mapped to $-\pi(MT-r)$ by $\xi\mapsto\phi=(\pi/\lambda)\xi$, the configuration \eqref{xi0} becomes
\begin{equation}\label{phi0}
  \phi^0_p=\rho_p\,\omega_0,\qquad \omega_0:=\frac{\kappa\pi(MT-r)}{n}\;,
\end{equation}
where $\omega_0$ is a primitive $n$-torsion point of $\CC/(\pi\ZZ+\pi MT\ZZ)$. For $\kappa=1$ this is the point $\nu\equiv r$ of the family
\begin{equation}\label{w0}
  \omega_0=\frac{\pi(MT-\nu)}{n},\qquad \nu\in\ZZ\;,
\end{equation}
with $\varepsilon_{ij}=+1$, while $\kappa=-1$ gives the mirror images $\omega_0\to-\omega_0$, $\varepsilon\to-\varepsilon$. The other values of $\nu$ in
\eqref{w0} correspond to complex configurations $\vxi^0$, for which some of the edge arguments leave the strip. Nevertheless, we find numerically that the
weights obtained below satisfy the star-star relation for all $\nu$; on the orbit of the Quano--Fujii point this is proven in
Sections~\ref{sec:QFproof} and~\ref{sec:torsion}. For $n=2$ we have $\phi_{12}=-\omega_0$, and \eqref{w0} with $\nu=0,1$ are
exactly the Kashiwara--Miwa points $\nu^{\rm KY}=0,1/2$ of \cite[Sec.~4.1]{Kels:2017vbc}, recalling that $\phi^{\rm KY}_1+\phi^{\rm KY}_2=\phi_{12}+\pi MT/2$.

In the modular frame the cancellation of the fractional powers reads
\begin{equation}
  \prod_{i,j=1}^n\ovth_2\big((i-j)\omega_0+\theta\,\big|\,MT\big)^{\frac{a_{A,i}-a_{B,j}}{M}}\simeq\EXP^{-\frac{2\ii\theta}{M}\langle\rho,\va_A+\va_B\rangle}\;,
\end{equation}
with $\rho=(\rho_1,\ldots,\rho_n)$ as in \eqref{xi0}. Since $\sum_ia_i=0$, only $\langle\rho,\va\rangle$ enters, and a constant shift of $\rho$ is
immaterial. The phase comes from the Gaussian factors of the modular transformation, which give
\begin{equation}
  \sum_{i,j=1}^n(a_{A,i}-a_{B,j})\big(\xi^0_{ij}+\tfrac{\lambda}{\pi}\theta\big)^2=n\big(\langle\va_A,(\vxi^0)^2\rangle-\langle\va_B,(\vxi^0)^2\rangle\big)
  +\tfrac{2n\lambda}{\pi}\theta\big(\langle\va_A,\vxi^0\rangle+\langle\va_B,\vxi^0\rangle\big)\;;
\end{equation}
the first term is a gauge factor and the second one is the phase. In the modular frame the same phase follows from the
quasi-periodicity of $\ovth_2$ with the branch of Section~\ref{sec:special}; with the principal branch of each factor, additional $\ZZ_M$
characters appear, which spoil \eqref{dssr}. Around the internal vertex the four spectral shifts add up to $\pi$, which produces $\EXP^{-2\pi\ii\langle\rho,\va_H\rangle/M}$.
Since $\sum_{i<j}a_{ij}=-2\langle\rho,\va\rangle$, this factor exactly cancels the phases in the vertex weight \eqref{dWmod}, while on the external
spins these factors cancel between the two sides. We thus obtain the weights
\begin{align}
  \mathsf W_\theta(\va,\boldsymbol b) & :=\prod_{i,j=1}^n\ \prod_{k=0}^{a_i-b_j-1}\ovth_1\Big(\frac{(i-j)\omega_0+\theta+(k+\frac12)\pi}{M}\,\Big|\,T\Big)^{-1}\;,
  \qquad \overline{\mathsf W}_\theta:=\mathsf W_{\frac\pi2-\theta}\;,\label{KMnW}                                                                                  \\
  \mathsf S(\va)                      & :=\prod_{1\le i<j\le n}\ovth_1\Big(\frac{\pi(a_i-a_j)-(j-i)\omega_0}{M}\,\Big|\,T\Big)\;,\label{KMnS}
\end{align}
for $\va\in Q/MQ\cong(\ZZ_M)^{n-1}$. Here the right-hand sides are evaluated on lifts $\va,\boldsymbol b\in Q$. As in Section~\ref{sec:discrete}, the individual weights depend on the lifts, but the
factors produced by a change of lift by $MQ$ cancel in the star-star relation (at the internal vertex between $\mathsf S$ and the four
edges, and on the external spins between the two sides), so that the relation below holds for any lifts of the external spins, used on both
sides, and with the sum over $\va_H$ over any set of representatives. For definiteness we take $0\le a_1,\ldots,a_{n-1}<M$,
$a_n=-\sum_{i<n}a_i$. With this convention the weights \eqref{KMnW}--\eqref{KMnS} satisfy the star-star relation \eqref{dssr} with $\mathcal R=1$, namely
\begin{multline}\label{KMnssr}
  \mathsf W_{\theta_{v'v}}(\va_L,\va_K)\mathsf W_{\theta_{u'u}}(\va_L,\va_J)\\
  \times\sum_{\va_H}\mathsf S(\va_H)\overline{\mathsf W}_{\theta_{uv}}(\va_K,\va_H)\overline{\mathsf W}_{\theta_{u'v'}}(\va_J,\va_H)
  \mathsf W_{\theta_{u'v}}(\va_H,\va_I)\mathsf W_{\theta_{uv'}}(\va_H,\va_L)\\
  =\mathsf W_{\theta_{v'v}}(\va_J,\va_I)\mathsf W_{\theta_{u'u}}(\va_K,\va_I)\\
  \times\sum_{\va_H}\mathsf S(\va_H)\overline{\mathsf W}_{\theta_{uv}}(\va_H,\va_J)\overline{\mathsf W}_{\theta_{u'v'}}(\va_H,\va_K)
  \mathsf W_{\theta_{u'v}}(\va_L,\va_H)\mathsf W_{\theta_{uv'}}(\va_I,\va_H)\;.
\end{multline}
For $n=2$ these are the Kashiwara--Miwa weights \cite[(4.18)--(4.20)]{Kels:2017vbc}. Indeed, the factors with $i=j$ give $w_1(\theta|0,a-b)$, where $a$ and $b$ are the discrete spins at the two ends of the edge, and the
factors with $i\neq j$ give $w_4$, since $\omega_0/M=\pi(MT-\nu)/(2M)$ contains the half period $\pi T/2$. For $n\ge3$ the weights involve theta functions at
arguments shifted by multiples of $\omega_0/M$, where $\omega_0$ is an $n$-torsion point of $\CC/(\pi\ZZ+\pi MT\ZZ)$, as is common for $\mathfrak{sl}(n)$ models. The weights
\eqref{KMnW}--\eqref{KMnS} thus provide a discrete-spin elliptic $\mathfrak{sl}(n)$ analogue of the Kashiwara--Miwa model which satisfies the star-star relation.

\subsection{Identification with the Quano--Fujii model}\label{sec:QF}
In \cite{Quano:1992wc}, Quano and Fujii constructed an $A^{(1)}_{n-1}$ generalization of the broken $\ZZ_N$ model of
\cite{Kashiwara:1986tu,Hasegawa:1990du} from vertex--face intertwiners of the Belavin $\ZZ_n\otimes\ZZ_n$ $R$-matrix. In the following, the label QF
refers to their paper.\footnote{Equation, theorem and section numbers of \cite{Quano:1992wc} refer to the arXiv version
hep-th/9210063v2 (v1 has been withdrawn).} With $h(z):=\vartheta_1(\pi z|\tau_{\rm QF})$, the crossing-type parameter $w:=1/N_{\rm QF}$ and the background $a_0=0$, the
Boltzmann weights are $\overline W^{\rm QF}_{ab}(u_{\rm QF}):=\sqrt{\Delta(a)\Delta(b)}\,\bar w_{ab}(u_{\rm QF})$. Here
$\Delta(a):=\prod_{l<m}h(\bar a_{lm}w)$, and $\bar w_{ab}$ is defined by the recursion \cite[Thm.~4.4]{Quano:1992wc}
\begin{equation}\label{QFrec}
  \frac{\bar w_{a+\bar\epsilon_i,b+\bar\epsilon_k}(u_{\rm QF})}{\bar w_{ab}(u_{\rm QF})}
  =\frac{\prod_{m<i}h\big(u_{\rm QF}+(\bar a_m-\bar b_k)w\big)\prod_{m>i}h\big(-u_{\rm QF}-(\bar a_m-\bar b_k)w\big)}
  {\prod_{m<k}h\big(-u_{\rm QF}-(\bar a_i-\bar b_m+1)w\big)\prod_{m>k}h\big(u_{\rm QF}+(\bar a_i-\bar b_m+1)w\big)}\;,
\end{equation}
with $\bar a_j:=(a+\rho_{\rm QF}|\epsilon_j)+(2j+1-n)/(2n)$, $j=0,\ldots,n-1$, where $\rho_{\rm QF}$ is the Weyl vector of \cite{Quano:1992wc}.
Here, as in \cite{Quano:1992wc}, $\epsilon_0,\ldots,\epsilon_{n-1}$ is an orthonormal basis of $\RR^n$ with respect to the bilinear form
$(\,\cdot\,|\,\cdot\,)$, $\bar\epsilon_j:=\epsilon_j-(1/n)\sum_{k=0}^{n-1}\epsilon_k$, and the indices $i,k,m$ in \eqref{QFrec} run over
$0,\ldots,n-1$.
The local state $a$ is a weight of level $N_{\rm QF}$, i.e.\ a vector of $n$ integers modulo
simultaneous shifts, and a step $a\to a+\bar\epsilon_i$ raises its $i$-th component by one.
We now take $\omega_0=\pi/n$ in \eqref{KMnW}--\eqref{KMnS} and set
\begin{equation}\label{QFdict}
  \tau_{\rm QF}=T,\qquad N_{\rm QF}=M,\qquad (a|\epsilon_j)=a_{j+1}+j\ \ (j=0,\ldots,n-1),\qquad u_{\rm QF}=\frac{\theta-\frac\pi2}{\pi M}\;,
\end{equation}
where $a$ is the local state of \cite{Quano:1992wc} and $\va=(a_1,\ldots,a_n)$ is our discrete spin, and similarly for $b$ and $\boldsymbol b$.
In \eqref{KMnW} a simultaneous step $a_{i}\to a_{i}+1$, $b_k\to b_k+1$ adds the $n-1$ factors with $(i,j)$, $j\neq k$, to the denominator and
removes the $n-1$ factors with $(m,k)$, $m\neq i$, while the factor $(i,k)$ is unchanged. This is the same structure as in \eqref{QFrec}. With
\eqref{QFdict} one has $\pi\big(u_{\rm QF}+(\bar a_{i-1}-\bar b_{j-1})w\big)=\big((i-j)\pi/n+\theta-\pi/2+\pi(a_i-b_j)\big)/M$, since the shifts by
$\rho_{\rm QF}$ and by $j$ in \eqref{QFdict} cancel. Therefore the factors agree one by one, up to signs coming from $h(-z)=-h(z)$.

Keeping track of these signs, one finds that the ratio of $\mathsf W_\theta(\va,\boldsymbol b)$ and $\bar w_{ab}(u_{\rm QF})$ changes by
$(-1)^{n-i+k-1}$ under the step $a_i\to a_i+1$, $b_k\to b_k+1$. The factor $(-1)^{n-1}$ does not contribute, since the net number of steps
between two pairs of spins with $\sum_ia_i=\sum_ib_i=0$ vanishes, and $(-1)^{k-i}$ is reproduced by $\varsigma(\va)/\varsigma(\boldsymbol b)$.
We thus obtain
\begin{equation}\label{QFid}
  \mathsf W_\theta(\va,\boldsymbol b)=\varrho_\theta\,\frac{\varsigma(\va)}{\varsigma(\boldsymbol b)}\,\bar w_{ab}(u_{\rm QF}),\qquad \mathsf S(\va)=\varrho\,\Delta(a),\qquad
  \varsigma(\va):=(-1)^{\sum_{i=1}^n i\,a_i}\;,
\end{equation}
with constants $\varrho_\theta$, $\varrho$ independent of the spins, where $\va,\boldsymbol b\in\ZZ^n$ are any lifts with $\sum_ia_i=\sum_ib_i=0$ and
$\mathsf W_\theta$ is evaluated by \eqref{KMnW} on the same lifts. The relation for $\mathsf S$ follows in the same way, since $\mathsf S(\va)$
and $\Delta(a)$ consist of the same factors. Thus, at $\omega_0=\pi/n$, the local Boltzmann weights of the elliptic $\mathfrak{sl}(n)$ Kashiwara--Miwa type model coincide with those of
the Quano--Fujii model on corresponding lifts. For $\gcd(n,M)=1$ the state spaces are identified as well, and the two models coincide
(see Section~\ref{sec:gcd} for $\gcd(n,M)>1$). We also have
$\overline W^{\rm QF}\propto\sqrt{\mathsf S(\va)\mathsf S(\boldsymbol b)}\,\mathsf W_\theta$, in the same way as the weights of \cite{Kels:2017toi} are related
to \eqref{BW} (Section~\ref{sec:model}). In \cite{Quano:1992wc} the Yang--Baxter equation is proven for the face weights
$\overline W^{\rm QF}_{ab}\overline W^{\rm QF}_{bc}\overline W^{\rm QF}_{cd}/\overline W^{\rm QF}_{ad}$, in which the signs $\varsigma$ in \eqref{QFid} cancel.
The relation \eqref{KMnssr}, which is the star-star relation with $\bar w$ on the edges and $\Delta$ on the internal vertex, is a further identity
for the Quano--Fujii weights, and in Section~\ref{sec:QFproof} we will derive it from the Yang--Baxter equation of \cite{Quano:1992wc}.

\subsection{Star-star relation from the Yang--Baxter equation}\label{sec:QFproof}
At $\omega_0=\pi/n$ the weights \eqref{KMnW}--\eqref{KMnS} are those of the
Quano--Fujii model (Section~\ref{sec:QF}). We now show that, for $\gcd(n,M)=1$, the star-star relation \eqref{KMnssr} is a consequence of the
Yang--Baxter equation proven in \cite{Quano:1992wc}; the case $\gcd(n,M)>1$ is discussed in Section~\ref{sec:gcd}. In \cite{Quano:1992wc} the weights $\overline W^{\rm QF}_{ab}(z,z'):=\overline W^{\rm QF}_{ab}((z-z')/n)$ are
assembled into the ``box'' weights
\begin{equation}\label{QFS}
  \langle a,d|\mathcal S(z_1,z_1';z_2,z_2')|b,c\rangle:=\frac{\overline W^{\rm QF}_{ab}(z_1,z_2)\,\overline W^{\rm QF}_{bc}(z_2,z_1')\,\overline W^{\rm QF}_{cd}(z_1',z_2')}
  {\overline W^{\rm QF}_{ad}(z_1,z_2')}\;,
\end{equation}
which define an operator on $\mathcal H\otimes\mathcal H$, where $\mathcal H$ is spanned by the local states (Fig.~\ref{fig:box}). It is shown in
\cite{Quano:1992wc} that $\mathcal S$ intertwines the factorized $L$-operators $L(z_1,z_1')\otimes L(z_2,z_2')$ and $L(z_2,z_2')\otimes L(z_1,z_1')$.
Since the commutant of $L\otimes L\otimes L$ is trivial for generic parameters, it follows that $\mathcal S$ satisfies the Yang--Baxter equation
\cite[Thm.~4.6]{Quano:1992wc}
\begin{equation}\label{QFYBE}
  \begin{aligned}
     & \mathcal S_{23}(z_1,z_1';z_2,z_2')\,\mathcal S_{12}(z_1,z_1';z_3,z_3')\,\mathcal S_{23}(z_2,z_2';z_3,z_3')           \\
     & \qquad=\mathcal S_{12}(z_2,z_2';z_3,z_3')\,\mathcal S_{23}(z_1,z_1';z_3,z_3')\,\mathcal S_{12}(z_1,z_1';z_2,z_2')\;,
  \end{aligned}
\end{equation}
with $\mathcal S_{12}:=\mathcal S\otimes1$, $\mathcal S_{23}:=1\otimes\mathcal S$. Moreover, $\overline W^{\rm QF}_{ab}(z,z)\propto\delta_{ab}$ with an $a$-independent
constant \cite[Sec.~4.2]{Quano:1992wc}.

The box \eqref{QFS} contains the four edge types of our star. With
\begin{equation}\label{QFlines}
  \begin{gathered}
    (z_1,z_1')=(\zeta_{v'},\zeta_v),\qquad (z_2,z_2')=(\zeta_{u'},\zeta_u)\;,\\
    \zeta_u:=\frac{n\theta_u}{\pi M},\quad \zeta_{u'}:=\frac{n(\theta_{u'}-\frac\pi2)}{\pi M},\quad \zeta_v:=\frac{n\theta_v}{\pi M},\quad \zeta_{v'}:=\frac{n(\theta_{v'}-\frac\pi2)}{\pi M}\;,
  \end{gathered}
\end{equation}
and \eqref{QFdict}, the edges $(z_1,z_2)$, $(z_2,z_1')$, $(z_1',z_2')$, $(z_1,z_2')$ of \eqref{QFS} are $\overline{\mathsf W}_{\theta_{u'v'}}$, $\mathsf W_{\theta_{u'v}}$,
$\overline{\mathsf W}_{\theta_{uv}}$ and $\mathsf W_{\theta_{v'u}}$, while the edges $(z_1,z_1')$ and $(z_2,z_2')$ are $\mathsf W_{\theta_{v'v}}$ and $\mathsf W_{\theta_{u'u}}$. The remaining
edge of the star is given by $\mathsf W_{\theta_{uv'}}(a,b)=\mathsf W_{\theta_{v'u}}(b,a)^{-1}$, as follows from the inversion relation
$\mathsf W_\theta(\va,\boldsymbol b)\mathsf W_{-\theta}(\boldsymbol b,\va)=1$, which in turn follows from \eqref{KMnW} and $\sum_ia_i=\sum_ib_i=0$.
We insert \eqref{QFid}, i.e.\ $\mathsf W\propto\bar w=\overline W^{\rm QF}/\sqrt{\Delta\Delta}$ and $\mathsf S\propto\Delta$, into \eqref{KMnssr}.
The factors $\Delta(\va_H)$ cancel exactly, since three edges in the numerator and one in the denominator meet at $H$. The external factors
$\Delta(\va_I)^{-1/2}\Delta(\va_J)^{-1}\Delta(\va_K)^{-1}\Delta(\va_L)^{-1/2}$ are the same on both sides, and the signs $\varsigma$ and the constants
$\varrho_\theta$ cancel as well. With the matrices $A:=\overline W^{\rm QF}(z_1,z_2)$, $B:=\overline W^{\rm QF}(z_2,z_1')$, $C:=\overline W^{\rm QF}(z_1',z_2')$,
$D:=\overline W^{\rm QF}(z_1,z_2')$, $U:=\overline W^{\rm QF}(z_1,z_1')$, $V:=\overline W^{\rm QF}(z_2,z_2')$ on $\mathcal H$, the relation \eqref{KMnssr} becomes
\begin{equation}\label{SSRQF}
  U_{LK}\,V_{LJ}\sum_{H}\frac{A_{JH}\,B_{HI}\,C_{KH}}{D_{LH}}=U_{JI}\,V_{KI}\sum_{H}\frac{C_{HJ}\,A_{HK}\,B_{LH}}{D_{HI}}\;.
\end{equation}
Here the condition $\gcd(n,M)=1$ is used for the first time: it ensures that the dictionary \eqref{QFdict} is a bijection between our
discrete spins $Q/MQ$ and the local states of \cite{Quano:1992wc} (Appendix~\ref{app:gcd}),\footnote{The local states of
  \cite{Quano:1992wc} with the cyclic condition form $\ZZ^n/(\ZZ(1,\ldots,1)+M\ZZ^n)$, and \eqref{QFdict} maps $\va\in Q$ to the class of
  $\va+(0,1,\ldots,n-1)$. Both sets have $M^{n-1}$ elements. If $\va\in Q$ is mapped to the class of $(0,1,\ldots,n-1)$, then
  $\va=c(1,\ldots,1)+M\boldsymbol w$ with $c\in\ZZ$, $\boldsymbol w\in\ZZ^n$, and $\sum_ia_i=0$ gives $nc\equiv0$ (mod $M$). For
  $\gcd(n,M)=1$ this implies $M\,|\,c$ and hence $\va\in MQ$, so that the map is a bijection; for $\gcd(n,M)>1$ the choice $c=M/\gcd(n,M)$ gives
  the kernel \eqref{kerQF}.} so that the sum over $\va_H$ in \eqref{KMnssr}
becomes the sum over the basis of $\mathcal H$ in \eqref{SSRQF}. It is used a second time below, since the Yang--Baxter equation
\eqref{QFYBE} is proven in \cite{Quano:1992wc} under the assumption that $n$ and $N_{\rm QF}=M$ are coprime.
\begin{figure}[t]\centering
  \begin{tikzpicture}[scale=1.15]
    \coordinate (a) at (0,-1.2); \coordinate (b) at (-1.2,0); \coordinate (c) at (0,1.2); \coordinate (d) at (1.2,0);
    \draw[thick] (a)--(b)--(c)--(d); \draw[thick,dashed] (a)--(d);
    \foreach \p/\l/\pos in {a/a/below,b/b/left,c/c/above,d/d/right} \fill (\p) circle (2pt) node[\pos=2pt] {$\l$};
    \node[font=\scriptsize,left] at (-0.65,-0.65) {$(z_1,z_2)$};
    \node[font=\scriptsize,left] at (-0.65,0.65) {$(z_2,z_1')$};
    \node[font=\scriptsize,right] at (0.65,0.65) {$(z_1',z_2')$};
    \node[font=\scriptsize,right] at (0.65,-0.65) {$(z_1,z_2')^{-1}$};
    \begin{scope}[xshift=5.2cm]
      \node[font=\small,align=left] at (0,0) {$(z_1,z_2)\ \leftrightarrow\ \overline{\mathsf W}_{\theta_{u'v'}}$\\[2pt]
        $(z_2,z_1')\ \leftrightarrow\ \mathsf W_{\theta_{u'v}}$\\[2pt] $(z_1',z_2')\ \leftrightarrow\ \overline{\mathsf W}_{\theta_{uv}}$\\[2pt]
        $(z_1,z_2')\ \leftrightarrow\ \mathsf W_{\theta_{v'u}}=\mathsf W_{\theta_{uv'}}^{-1}$};
    \end{scope}
  \end{tikzpicture}
  \caption{The box weight \eqref{QFS} of \cite{Quano:1992wc}: three edges in the numerator and one (dashed) in the denominator, labeled by the pairs
    of rapidities of the edge weights $\overline W^{\rm QF}$. With \eqref{QFlines} they are the four edge types of the stars in \eqref{KMnssr} (right).}\label{fig:box}
\end{figure}

\begin{figure}[t]\centering
\begin{tikzpicture}[
    x=1cm,y=1cm,
    v/.style={circle,fill=black,inner sep=1.3pt},
    hv/.style={circle,fill=blue!70!black,inner sep=1.9pt},
    e/.style={thick},
    den/.style={thick,dashed},
    del/.style={very thick,red!80!black},
    idl/.style={densely dotted,gray},
    st/.style={very thick,blue!70!black},
    stden/.style={very thick,dashed,blue!70!black},
    lab/.style={font=\scriptsize,inner sep=1pt},
    slab/.style={font=\scriptsize,inner sep=1pt,fill=white,text=blue!70!black},
    vl/.style={font=\scriptsize,inner sep=1.5pt},
    idlab/.style={font=\scriptsize,inner sep=1pt,text=gray!60!black},
  ]
  \begin{scope}
    \node[font=\small] at (1.3,4.25) {LHS of \eqref{QFYBE}};
    \coordinate (d1) at (0,0); \coordinate (d2) at (1.3,0); \coordinate (d3) at (2.6,0);
    \coordinate (x1) at (0,1.1); \coordinate (x2) at (1.3,1.1); \coordinate (x3) at (2.6,1.1);
    \coordinate (y1) at (0,2.2); \coordinate (y2) at (1.3,2.2); \coordinate (y3) at (2.6,2.2);
    \coordinate (a1) at (0,3.3); \coordinate (a2) at (1.3,3.3); \coordinate (a3) at (2.6,3.3);
    \draw[e] (x2)--node[lab,left]{$B$}(d2); \draw[e] (d2)--node[lab,below]{$C$}(d3);
    \draw[del] (d3)--node[lab,right]{$\delta$}(x3); \draw[den] (x2)--node[lab,above]{$V$}(x3);
    \draw[idl] (d1)--(x1);
    \draw[e] (y1)--node[lab,left]{$U$}(x1); \draw[del] (x1)--node[lab,below]{$\delta$}(x2);
    \draw[e] (x2)--node[lab,right]{$C$}(y2); \draw[den] (y1)--node[lab,above]{$D$}(y2);
    \draw[idl] (x3)--(y3);
    \draw[e] (a2)--node[lab,left]{$A$}(y2); \draw[e] (y2)--node[lab,below]{$B$}(y3);
    \draw[e] (y3)--node[lab,right]{$C$}(a3); \draw[den] (a2)--node[lab,above]{$D$}(a3);
    \draw[idl] (y1)--(a1);
    \foreach \p in {d1,d2,d3,x1,x2,x3,y1,y3,a1,a2,a3} \node[v] at (\p) {};
    \node[hv] at (y2) {};
    \node[idlab,left=2pt] at (x1) {$=d_1$}; \node[idlab,below right=1pt] at (x2) {$=d_1$};
    \node[idlab,right=2pt] at (x3) {$=d_3$}; \node[idlab,right=2pt] at (y3) {$=d_3$};
    \node[idlab,left=2pt] at (y1) {$=a_1$};
    \node[vl,below] at (d1) {$d_1$}; \node[vl,below] at (d2) {$d_2$}; \node[vl,below] at (d3) {$d_3$};
    \node[vl,above] at (a1) {$a_1$}; \node[vl,above left] at (a2) {$a_2$}; \node[vl,above] at (a3) {$a_3$};
    \node[vl,below right] at (y2) {$y$};
    \node[font=\scriptsize,anchor=west,align=left] at (3.1,0.55) {$\mathcal S(2;3)$};
    \node[font=\scriptsize,anchor=east,align=right] at (-0.25,1.65) {$\mathcal S(1;3)$};
    \node[font=\scriptsize,anchor=west,align=left] at (3.1,2.75) {$\mathcal S(1;2)$};
  \end{scope}
  \draw[-{Stealth},thick] (4.4,1.65)--node[above,font=\scriptsize,align=center]{contract\\ $\delta$-edges}(5.8,1.65);
  \begin{scope}[xshift=7.9cm,yshift=1.65cm]
    \node[font=\small] at (0,2.6) {LHS of \eqref{SSRQF}};
    \coordinate (I) at (-1.25,1.25); \coordinate (J) at (1.25,1.25); \coordinate (K) at (-1.25,-1.25); \coordinate (L) at (1.25,-1.25);
    \coordinate (H) at (0,0);
    \draw[st] (H)--node[slab,pos=0.6,above right=1pt]{$A$}(J); \draw[st] (H)--node[slab,pos=0.6,above left=1pt]{$B$}(I);
    \draw[st] (H)--node[slab,pos=0.6,below left=1pt]{$C$}(K); \draw[stden] (H)--node[slab,pos=0.6,below right=1pt]{$D$}(L);
    \draw[e] (L)--node[lab,below]{$U$}(K); \draw[den] (K)--node[lab,left]{$V$}(I);
    \foreach \p in {I,J,K,L} \node[v] at (\p) {};
    \node[hv] at (H) {};
    \node[vl,above left] at (I) {$I=d_3$}; \node[vl,above right] at (J) {$J=a_2$};
    \node[vl,below left] at (K) {$K=d_1$}; \node[vl,below right] at (L) {$L=a_1$};
    \node[vl,left=5pt] at (H) {$H=y$};
  \end{scope}
  \begin{scope}[yshift=-6.0cm]
    \node[font=\small] at (1.3,4.25) {RHS of \eqref{QFYBE}};
    \coordinate (d1) at (0,0); \coordinate (d2) at (1.3,0); \coordinate (d3) at (2.6,0);
    \coordinate (x1) at (0,1.1); \coordinate (x2) at (1.3,1.1); \coordinate (x3) at (2.6,1.1);
    \coordinate (y1) at (0,2.2); \coordinate (y2) at (1.3,2.2); \coordinate (y3) at (2.6,2.2);
    \coordinate (a1) at (0,3.3); \coordinate (a2) at (1.3,3.3); \coordinate (a3) at (2.6,3.3);
    \draw[e] (x1)--node[lab,left]{$A$}(d1); \draw[e] (d1)--node[lab,below]{$B$}(d2);
    \draw[e] (d2)--node[lab,right]{$C$}(x2); \draw[den] (x1)--node[lab,above]{$D$}(x2);
    \draw[idl] (d3)--(x3);
    \draw[e] (y2)--node[lab,left]{$U$}(x2); \draw[del] (x2)--node[lab,below]{$\delta$}(x3);
    \draw[e] (x3)--node[lab,right]{$C$}(y3); \draw[den] (y2)--node[lab,above]{$D$}(y3);
    \draw[idl] (x1)--(y1);
    \draw[e] (a1)--node[lab,left]{$B$}(y1); \draw[e] (y1)--node[lab,below]{$C$}(y2);
    \draw[del] (y2)--node[lab,right]{$\delta$}(a2); \draw[den] (a1)--node[lab,above]{$V$}(a2);
    \draw[idl] (y3)--(a3);
    \foreach \p in {d1,d2,d3,x2,x3,y1,y2,y3,a1,a2,a3} \node[v] at (\p) {};
    \node[hv] at (x1) {};
    \node[idlab,left=2pt] at (y1) {$=p$}; \node[idlab,above right=1pt] at (x2) {$=d_3$};
    \node[idlab,right=2pt] at (x3) {$=d_3$}; \node[idlab,below right=1pt] at (y2) {$=a_2$};
    \node[idlab,right=2pt] at (y3) {$=a_3$};
    \node[vl,below] at (d1) {$d_1$}; \node[vl,below] at (d2) {$d_2$}; \node[vl,below] at (d3) {$d_3$};
    \node[vl,above] at (a1) {$a_1$}; \node[vl,above] at (a2) {$a_2$}; \node[vl,above] at (a3) {$a_3$};
    \node[vl,below left] at (x1) {$p$};
    \node[font=\scriptsize,anchor=east,align=right] at (-0.45,0.55) {$\mathcal S(1;2)$};
    \node[font=\scriptsize,anchor=west,align=left] at (3.05,1.65) {$\mathcal S(1;3)$};
    \node[font=\scriptsize,anchor=east,align=right] at (-0.25,2.75) {$\mathcal S(2;3)$};
  \end{scope}
  \draw[-{Stealth},thick] (4.4,-4.35)--node[above,font=\scriptsize,align=center]{contract\\ $\delta$-edges}(5.8,-4.35);
  \begin{scope}[xshift=7.9cm,yshift=-4.35cm]
    \node[font=\small] at (0,2.6) {RHS of \eqref{SSRQF}};
    \coordinate (I) at (-1.25,1.25); \coordinate (J) at (1.25,1.25); \coordinate (K) at (-1.25,-1.25); \coordinate (L) at (1.25,-1.25);
    \coordinate (H) at (0,0);
    \draw[st] (H)--node[slab,pos=0.6,above right=1pt]{$C$}(J); \draw[stden] (H)--node[slab,pos=0.6,above left=1pt]{$D$}(I);
    \draw[st] (H)--node[slab,pos=0.6,below left=1pt]{$A$}(K); \draw[st] (H)--node[slab,pos=0.6,below right=1pt]{$B$}(L);
    \draw[e] (J)--node[lab,above]{$U$}(I); \draw[den] (L)--node[lab,right]{$V$}(J);
    \foreach \p in {I,J,K,L} \node[v] at (\p) {};
    \node[hv] at (H) {};
    \node[vl,above left] at (I) {$I=d_3$}; \node[vl,above right] at (J) {$J=a_2$};
    \node[vl,below left] at (K) {$K=d_1$}; \node[vl,below right] at (L) {$L=a_1$};
    \node[vl,left=5pt] at (H) {$H=p$};
  \end{scope}
\end{tikzpicture}
\caption{The star-star relation as a specialization of the Yang--Baxter equation \eqref{QFYBE}. Left: the two sides of \eqref{QFYBE}, read from
  bottom ($d_1d_2d_3$) to top ($a_1a_2a_3$); each box $\langle a,d|\mathcal S|b,c\rangle$ of \eqref{QFS} is drawn with its input states $b,c$ at the
  bottom and its output states $a,d$ at the top, and dotted lines are states which are not acted upon. Solid edges are factors
  $A,B,C,U$ of the numerator, dashed edges factors $D,V$ of the denominator ($\mathcal S(i;j):=\mathcal S(z_i,z_i';z_j,z_j')$, and $A,\ldots,V$ are as in
  \eqref{SSRQF}). At $z_3=z_1'$, $z_3'=z_2'$ the edges $\overline W^{\rm QF}(z_3,z_1')$ and $\overline W^{\rm QF}(z_2',z_3')$ become
  $\overline W^{\rm QF}(0)\propto\delta$ (red) and identify the states at their ends; the dotted lines identify the states at their ends as well. The gray labels give the external
  state with which each intermediate vertex is thereby identified; for example, on the left-hand side $d_3$ is joined to $y$ by the edge $B$
  of $\mathcal S(1;2)$, through the red edge and the dotted line above it. The only remaining internal state ($y$, resp.\ $p$, blue) is the
  center of a four-edge star (right). The factors $B_{d_1d_2}C_{d_2d_3}$ and $C_{d_3a_3}/D_{a_2a_3}$ involving $d_2$ and $a_3$ appear on both sides
  and are not shown. The two sides are thus $U_{LK}V_{KI}^{-1}$ and $U_{JI}V_{LJ}^{-1}$ times the two stars, and multiplying by $V_{LJ}V_{KI}$ gives \eqref{SSRQF}.}\label{fig:ybe-ssr}
\end{figure}

The reduction is summarized in Fig.~\ref{fig:ybe-ssr}. Take the matrix element $\langle a_1a_2a_3|\cdots|d_1d_2d_3\rangle$ of \eqref{QFYBE} at
\begin{equation}
  z_3=z_1',\qquad z_3'=z_2'\;.
\end{equation}
Then $\mathcal S(z_2,z_2';z_3,z_3')$ contains the edge $\overline W^{\rm QF}(z_2',z_3')=\overline W^{\rm QF}(0)$, and $\mathcal S(z_1,z_1';z_3,z_3')$ contains
the edge $\overline W^{\rm QF}(z_3,z_1')=\overline W^{\rm QF}(0)$. Both are proportional to $\delta$'s. On the left-hand side of \eqref{QFYBE} they fix two
of the three summed states, and the remaining sum is over the state $y$ at which $\mathcal S_{23}(z_1,z_1';z_2,z_2')$ and $\mathcal S_{12}(z_1,z_1';z_3,z_3')$
are joined. Omitting the constant $\overline W^{\rm QF}_{aa}(0)^2$, which is common to both sides, we find
\begin{equation}
  \langle a_1a_2a_3|\,\text{LHS}\,|d_1d_2d_3\rangle=U_{a_1d_1}\,\frac{C_{d_3a_3}}{D_{a_2a_3}}\,\frac{B_{d_1d_2}C_{d_2d_3}}{V_{d_1d_3}}
  \sum_{y}\frac{A_{a_2y}B_{yd_3}C_{d_1y}}{D_{a_1y}}\;,
\end{equation}
and similarly on the right-hand side the sum is over the state $p$ joining $\mathcal S_{12}(z_1,z_1';z_2,z_2')$ and $\mathcal S_{23}(z_2,z_2';z_3,z_3')$:
\begin{equation}
  \langle a_1a_2a_3|\,\text{RHS}\,|d_1d_2d_3\rangle=U_{a_2d_3}\,\frac{C_{d_3a_3}}{D_{a_2a_3}}\,\frac{B_{d_1d_2}C_{d_2d_3}}{V_{a_1a_2}}
  \sum_{p}\frac{C_{pa_2}A_{pd_1}B_{a_1p}}{D_{pd_3}}\;.
\end{equation}
The factors involving $a_3$ and $d_2$ are common to both sides and cancel. With $(I,J,K,L)=(d_3,a_2,d_1,a_1)$ the two sums are the two stars of
\eqref{SSRQF}, and the remaining factors are $U_{LK}/V_{KI}$ and $U_{JI}/V_{LJ}$. Thus \eqref{QFYBE} at $z_3=z_1'$, $z_3'=z_2'$ is
\eqref{SSRQF}. In other words, the star-star relation \eqref{KMnssr} at $\omega_0=\pi/n$ is the Yang--Baxter equation of \cite{Quano:1992wc}, with the
third pair of rapidities $(z_3,z_3')=(\zeta_v,\zeta_u)$ taken from the first two. The Yang--Baxter equation is proven in \cite{Quano:1992wc} for
generic parameters, and it extends to this point by analyticity. We have also checked \eqref{QFYBE}, \eqref{SSRQF} and this reduction numerically
(Appendix~\ref{app:numerics}).

Note that the specialization is regular, since the denominators of the boxes involved are $D=\overline W^{\rm QF}(z_1,z_2')$ and
$V=\overline W^{\rm QF}(z_2,z_2')$, which do not vanish for generic rapidities. The star-star relation at other torsion points is obtained from
the one at $\omega_0=\pi/n$ by the symmetries of \eqref{KMnssr} discussed in Section~\ref{sec:torsion} and Appendix~\ref{app:torsion}, and not by repeating the above argument
with modular images of the Quano--Fujii weights, which in general do not map \eqref{KMnssr} to itself.

The implication from the star-star relation to the Yang--Baxter equation of the box (vertex-type) weights is standard
\cite{Baxter:1986phd,Bazhanov:1992jqa,Baxter:1997tn}. The derivation above goes in the opposite direction: for the Quano--Fujii model the
Yang--Baxter equation of the box weights is established independently, from the uniqueness of the intertwiner of the factorized $L$-operators
\cite{Quano:1992wc} (as for the box $R$-matrix of \cite[Sec.~3]{Bazhanov:2011mz}), and the star-star relation follows from it by a
specialization of the rapidities.

\subsection{Extension to \texorpdfstring{$\gcd(n,M)>1$}{gcd(n,M)>1}}\label{sec:gcd}
The argument of Section~\ref{sec:QFproof} requires $\gcd(n,M)=1$: \cite{Quano:1992wc} assume that $n$ and $N_{\rm QF}$ are coprime
(below their (4.17)), so that their Yang--Baxter equation is available only in this case. For $d:=\gcd(n,M)>1$ the situation is as follows;
the details are given in Appendix~\ref{app:gcd}. Our discrete spins are the elements of $Q/MQ$,
while the local states of \cite{Quano:1992wc} are classes in $\overline P/M\overline P$, where $\overline P$ is the weight lattice of
$\mathfrak{sl}(n)$. The dictionary \eqref{QFdict} induces a map $Q/MQ\to\overline P/M\overline P$, whose kernel is $\ZZ_d$, generated by the class of
\begin{equation}\label{kerQF}
  \va_{\rm ker}:=\tfrac{M}{d}\,(1-n,1,\ldots,1)\;.
\end{equation}
The recursion \eqref{QFrec} assigns different weights to $\va$ and $\va+\va_{\rm ker}$, so that for $d>1$ the cyclic condition of
\cite{Quano:1992wc}, which identifies them, is incompatible with the recursion; this is where their assumption enters. The weights
\eqref{KMnW}--\eqref{KMnS} on $(\ZZ_M)^{n-1}\cong Q/MQ$, on the other hand, are well-defined for all $n$ and $M$. For $d>1$ the model at
$\omega_0=\pi/n$ is therefore an extension of the Quano--Fujii model to the state space $Q/MQ$, and \eqref{KMnssr} is a star-star relation for
this extension, which we have verified numerically (Appendix~\ref{app:numerics}). A proof along the lines of \cite{Quano:1992wc} would require
the Yang--Baxter equation for this extension; a candidate for the corresponding box operator, and the open points in its construction, are
described in Appendix~\ref{app:gcd}.

\subsection{General torsion points}\label{sec:torsion}
In \eqref{KMnW}--\eqref{KMnS} the shift $\omega_0\to\omega_0+\pi$ is equivalent to the relabeling $a_i\to a_i+i$ of all spins, so that only $\nu$
modulo $n$ for odd $n$, or modulo $2n$ for even $n$, matters in \eqref{w0} (Appendix~\ref{app:torsion}). The real symmetric configurations \eqref{xi0} give the points
$\omega_0=\pi(\kappa MT-\kappa r)/n$ with $\gcd(\kappa,n)=1$, for all factorizations $M=rN$, but not, for example, the Quano--Fujii point
$\omega_0=\pi/n$. Nevertheless, we find numerically (Appendix~\ref{app:numerics}) that \eqref{KMnssr} holds with $\mathcal R=1$ at every primitive
$n$-torsion point $\omega_0=\pi(\kappa MT-\nu)/n$, $\gcd(\kappa,\nu,n)=1$, of the lattice $\pi\ZZ+\pi MT\ZZ$. At the non-primitive points it fails in all non-degenerate cases we have checked; in some cases, such as $\omega_0=0$ for $(n,M)=(3,2)$, $\mathsf S$ vanishes identically and both sides are zero.

This is explained in Appendix~\ref{app:torsion} by symmetries of \eqref{KMnssr}: (a) the relabelings, the shift $T\to T+1$ and the mirror symmetry
$\omega_0\to-\omega_0$; (b) the shift $\omega_0\to\omega_0+\pi MT$; and (c) the modular transformations
$g=\big(\begin{smallmatrix}\alpha&\beta\\\gamma&\delta\end{smallmatrix}\big)\in\mathrm{SL}(2,\ZZ)$ with $\alpha\equiv1$ and $\gamma\equiv0$ (mod $2M$),
acting as $(T,\omega_0,\theta)\to(gT,\omega_0/\mathcal J,\theta/\mathcal J)$ with $\mathcal J:=\gamma T+\delta$, which preserve the step of the discrete
spins and the crossing parameter up to periods (Appendix~\ref{app:modular}). The orbits of the primitive torsion points under these symmetries are
determined in Proposition~\ref{prop:orbits}. They act transitively if and only if $\gcd(n,2M)=1$, and the orbit of the Quano--Fujii point consists of the
points with $\nu\equiv\pm1$ modulo $\gcd(n,M)$ (modulo $2\gcd(n,M)$ and with even $\kappa$ for even $n$). As a result, \eqref{KMnssr} is proven on the
orbit of the Quano--Fujii point for $\gcd(n,M)=1$, it follows from \eqref{dssr} under the assumptions (A1)--(A3) on the orbits of the real points,
and at the remaining points, which exist only for $\gcd(n,M)>1$ (for example $\omega_0=\pi(2MT-3)/6$ for $(n,M)=(6,3)$), it is supported by numerical
evidence. Appendix~\ref{app:torsion} also discusses which torsion points are modular images of the Quano--Fujii model at the level of the Boltzmann
weights, see \eqref{QFcond}, and the role of the parity of $\kappa$ for even $n$.

\subsection{Reality and positivity}\label{sec:reality}
For a statistical-mechanical interpretation one would like the Boltzmann weights to be real and positive. Let the nome be real, $T\in\ii\RR_{>0}$,
and take $\omega_0=\pi MT/n$, i.e.\ $\nu=0$, so that $\bar\omega_0=-\omega_0$. Since then $\overline{\ovth_1(z|T)}=\ovth_1(\bar z|T)$, the weights
\eqref{KMnW}--\eqref{KMnS} satisfy
\begin{equation}\label{KMconj}
  \overline{\mathsf W_\theta(\va,\boldsymbol b)}=\mathsf W_{\bar\theta}(-\boldsymbol b,-\va),\qquad \overline{\mathsf S(\va)}=(-1)^{\frac{n(n-1)}2}\,\mathsf S(-\va)\;.
\end{equation}
In other words, complex conjugation reverses both the spins and the orientation of the edges. We apply this to the IRF weight given by the
left-hand side of \eqref{KMnssr},
\begin{multline}\label{IRFV}
  \mathbb V_{\bm u\bm v}(\va_I,\va_J,\va_K,\va_L):=\mathsf W_{\theta_{v'v}}(\va_L,\va_K)\mathsf W_{\theta_{u'u}}(\va_L,\va_J)\\
  \times\sum_{\va_H}\mathsf S(\va_H)\overline{\mathsf W}_{\theta_{uv}}(\va_K,\va_H)\overline{\mathsf W}_{\theta_{u'v'}}(\va_J,\va_H)
  \mathsf W_{\theta_{u'v}}(\va_H,\va_I)\mathsf W_{\theta_{uv'}}(\va_H,\va_L)\;,
\end{multline}
which by \eqref{KMnssr} is also equal to the right-hand side of \eqref{KMnssr}. By \eqref{KMconj}, $\overline{\mathbb V_{\bm u\bm v}}$ is turned into the right-hand side
of \eqref{KMnssr} with the spins $(-\va_L,-\va_K,-\va_J,-\va_I)$ and conjugate rapidities, up to the sign $(-1)^{n(n-1)/2}$ coming from
$\mathsf S$. This sign is removed by the normalization $\widehat{\mathsf S}(\va):=\ii^{\,n(n-1)/2}\,\mathsf S(\va)$, which satisfies
$\overline{\widehat{\mathsf S}(\va)}=\widehat{\mathsf S}(-\va)$ and does not affect \eqref{KMnssr}, since $\mathsf S$ appears once on each side.
Denoting by $\widehat{\mathbb V}_{\bm u\bm v}$ the IRF weight \eqref{IRFV} with $\mathsf S$ replaced by $\widehat{\mathsf S}$ and using the star-star relation, we
obtain for real rapidities
\begin{equation}
  \overline{\widehat{\mathbb V}_{\bm u\bm v}(\va_I,\va_J,\va_K,\va_L)}=\widehat{\mathbb V}_{\bm u\bm v}(-\va_L,-\va_K,-\va_J,-\va_I)\;.
\end{equation}
Here both sides are evaluated on lifts, the lifts on the right-hand side being the negatives of those on the left-hand side, and the
internal sum does not depend on the choice of representatives (Section~\ref{sec:KMreal}). Thus, locally, complex conjugation is equivalent
to the reversal of all spins combined with the rotation of the face by $\pi$.

Consequently the partition function of a finite domain with
real rapidities and fixed boundary spins is real whenever the domain, the rapidity configuration and the boundary spins (with their lifts) are
invariant under the rotation by $\pi$ combined with the reversal of the spins.\footnote{With $\mathsf S$ instead of $\widehat{\mathsf S}$ the
  partition function is real up to the phase $\ii^{-n(n-1)/2}$ per internal vertex; for $n=2$ and vanishing external spins, for example,
  $\mathbb V_{\bm u\bm v}$ itself is purely imaginary. For a general domain this also requires that the summand does not depend on the
  representatives of the summed spins, which we have verified only for the internal spin of the star.} This is the analogue of the reality property of the continuous-spin
model discussed in \cite[Sec.~3]{Bazhanov:2011mz}, where the two forms of the IRF weight are exchanged by complex conjugation.

For $n=2$ the gauge-invariant ratios
$\mathsf W_\theta(\va,\boldsymbol b)\mathsf W_\theta(\boldsymbol 0,\boldsymbol 0)/\big(\mathsf W_\theta(\va,\boldsymbol 0)\mathsf W_\theta(\boldsymbol 0,\boldsymbol b)\big)$
are real and positive for $0<\theta<\pi/2$, in agreement with the positivity of the Kashiwara--Miwa model in its physical regime. For $n\ge3$ the
weights are not symmetric and these ratios are not real. A regime with positive weights is not known for $n\ge3$; for the continuous-spin model,
whose two-spin weights are not positive either, the IRF weights are positive in a neighborhood of $\p=\q=0$ \cite[Sec.~3]{Bazhanov:2011mz}.

\section{\texorpdfstring{Trigonometric limit and the $(\ZZ_M)^{n-1}$ chiral Potts model}{Trigonometric limit and the chiral Potts model}}\label{sec:trigsec}

\subsection{An \texorpdfstring{$\mathfrak{sl}(n)$}{sl(n)} Fateev--Zamolodchikov type model}\label{sec:trig}

In this section we consider the trigonometric limit $T\to\ii\infty$. Let us start with the elliptic $\mathfrak{sl}(n)$ Kashiwara--Miwa type model
\eqref{KMnW}--\eqref{KMnS}. For $0<|y|<1$ we have
\begin{equation}
  \ovth_1(x+y\pi T\,|\,T)\;\longrightarrow\;\pm\ii\,\EXP^{\frac{\pi\ii T}{4}}\,\EXP^{\mp\pi\ii yT}\,\EXP^{\mp\ii x}\qquad(y\gtrless0)\;,
\end{equation}
while for $y=0$ we have $\ovth_1(x|T)\to2\EXP^{\pi\ii T/4}\sin x$. The factors with $i\neq j$ in \eqref{KMnW} have $y=(i-j)/n$, and
therefore become pure exponentials. Apart from the phase $\EXP^{\pm(\pi\ii/(2M))(a_i-b_j)^2}$, each of them depends on $\theta$ through a
factor $\EXP^{\pm\ii\theta(a_i-b_j)/M}$. After the product over $i\neq j$ these factors combine into a phase linear in $\theta\langle\rho,\va+\boldsymbol b\rangle$,
which cancels in \eqref{KMnssr} in the same way as the phase discussed in Section~\ref{sec:KMreal}. The divergent constants combine into one-vertex factors, which cancel in \eqref{KMnssr}.
Similarly, every factor of $\mathsf S$ becomes an exponential. Collecting all the terms, we find that the dependence on $\nu$ disappears, and we
obtain the weights, for $\va,\boldsymbol b$ with $\sum_ia_i=\sum_ib_i=0$,
\begin{equation}\label{trigW}
  \mathsf W^{\rm trig}_\theta(\va,\boldsymbol b):=\EXP^{-\frac{\pi\ii}{M}\sum_{i>j}(a_ib_j-a_jb_i)}\;
  \prod_{i=1}^n\prod_{k=0}^{a_i-b_i-1}\frac{1}{\sin\frac{\theta+(k+\frac12)\pi}{M}}\;,
\end{equation}
\begin{equation*}
  \overline{\mathsf W}^{\rm trig}_\theta:=\mathsf W^{\rm trig}_{\frac\pi2-\theta},\qquad \mathsf S^{\rm trig}:=1\;,
\end{equation*}
which satisfy \eqref{KMnssr} with $\mathcal R=1$. The chiral phase is essential for $n\ge3$. For $n=2$ it vanishes identically, and
\eqref{trigW} reduces to the Fateev--Zamolodchikov $\ZZ_M$ model \cite{Fateev:1982wi}. Both signs of the phase give solutions, since they are
related by $a_i\to a_{n+1-i}$. If we write $1/\sin y\propto\EXP^{\ii y}/(1-\EXP^{2\ii y})$, the weight \eqref{trigW} takes the form, up to
factors which cancel,
\begin{equation}\label{trigBKMS}
  \mathsf W^{\rm trig}_\theta(\va,\boldsymbol b)\simeq\varpi^{\,\mathsf Q(\va,\boldsymbol b)}\prod_{i=1}^n\prod_{k=0}^{a_i-b_i-1}\frac{1}{1-t\,\varpi^{k+\frac12}}\;,
\end{equation}
\begin{equation*}
  \varpi:=\EXP^{\frac{2\pi\ii}{M}},\quad t:=\EXP^{\frac{2\ii\theta}{M}},\quad \mathsf Q(\va,\boldsymbol b):=\sum_{i=1}^{n-1}a_i\sum_{j=1}^i(a_j-b_j)\;.
\end{equation*}
This is the form of the weights of the $(\ZZ_M)^{n-1}$ chiral Potts model \cite{Bazhanov:1990qk} when all the $n$ ratios
$z_i^+(q)z_i^-(p)/z_i^+(p)z_i^-(q)$ are equal, namely at the degenerate point $K_{ij}\propto\mathbf 1$ of its spectral curve, which will be
discussed in Section~\ref{sec:trig-general}. This model is therefore an $\mathfrak{sl}(n)$ analogue of the
Fateev--Zamolodchikov point of the chiral Potts model, and generalizes to $\mathfrak{sl}(n)$ the reduction from the Kashiwara--Miwa model to the
Fateev--Zamolodchikov model in \cite[Sec.~4.3]{Kels:2017vbc}.

\subsection{General trigonometric limit and the spectral curve}\label{sec:trig-general}

\paragraph{The trigonometric model.}
For $n=2$ it was found in \cite[Sec.~4.4]{Kels:2017vbc} that the chiral Potts model arises from non-symmetric saddle points in the trigonometric
limit. In our variables the corresponding regime is
\begin{equation}\label{trigansatz}
  \phi_{c,p}=\psi_{c,p}+\rho_p\,\frac{\pi MT}{n},\qquad \rho_p=p-\tfrac{n+1}{2},\qquad \psi_{c,p}\ \text{finite},\qquad T\to\ii\infty\;,
\end{equation}
which is a finite deformation of the Kashiwara--Miwa point $\nu=0$. For $n=2$ this means that $\phi^{\rm KY}=\psi_1$ is finite.
Using the asymptotics above, we find that the factors of \eqref{dWmod} with $i\neq j$ lose all dependence on $\psi$ and $\theta$ and contribute
only the chiral phase, and that the vertex weight becomes independent of the discrete spins. We obtain
\begin{equation}\label{triggenW}
  \bW_\theta(s_A,s_B)\simeq\EXP^{-\frac{\pi\ii}{M}\sum_{i>j}(a_{A,i}a_{B,j}-a_{A,j}a_{B,i})}\prod_{i=1}^n
  \mathsf w\big(\psi_{A,i}-\psi_{B,i}+\theta\,\big|\,a_{A,i}-a_{B,i}\big)\;,
\end{equation}
\begin{equation}
  \mathsf w(x|d):=\frac{(2\cos x)^{\frac{d}{M}}}{\prod_{k=0}^{d-1}2\sin\frac{x+(k+\frac12)\pi}{M}}\;,
\end{equation}
and $\bS\simeq1$. Each diagonal component is thus a copy of the trigonometric weight for $n=2$ \cite[(4.34)]{Kels:2017vbc}. In the same way, the
saddle point equations \eqref{saddlemod} become, modulo $2\pi\ii$,
\begin{equation}\label{triggensaddle}
  \begin{aligned}
    G_p:= & \sum_{W}\log2\cos(\psi_{H,p}-\psi_{X,p}+\theta)-\sum_{\olW}\log2\cos(\psi_{X,p}-\psi_{H,p}+\tfrac\pi2-\theta) \\
          & +\ii\Big(\sum_{\olW}-\sum_{W}\Big)\Big(\sum_{i>p}\psi_{X,i}-\sum_{i<p}\psi_{X,i}\Big)=\mu\;.
  \end{aligned}
\end{equation}
These are $n$ copies of the three-leg form of $Q3_{(\delta=0)}$ \cite{Adler:2002abs}, which are coupled through the Lagrange multiplier and through the
``cumulative'' terms in the second line.

\paragraph{Rapidity lines.}
Following \cite[(4.49)]{Kels:2017vbc}, we attach to each rapidity line $\ell$ the variables $R_{\ell,i}:=\EXP^{2\ii\varphi_{\ell,i}}$, $i=1,\ldots,n$, and put
\begin{equation}\label{linedict}
  \alpha_\ell:=\frac1n\sum_{i=1}^n\varphi_{\ell,i}+\frac\pi2\,e_\ell,\qquad \beta_{\ell,i}:=\varphi_{\ell,i}-\frac1n\sum_{k}\varphi_{\ell,k}\;,
\end{equation}
where $e_\ell:=0$ for the solid lines $u,v$ and $e_\ell:=1$ for the dashed lines $u',v'$ of Fig.~\ref{ssfig}. The spectral variables are
differences of the $\alpha$'s, for example $\theta_{uv}=\alpha_u-\alpha_v$. The continuous spins are placed in the Z-invariant way, namely for an edge
crossed by the lines $\ell_1,\ell_2$ with $\theta=\alpha_{\ell_1}-\alpha_{\ell_2}$ we set $\psi_A-\psi_B=\beta_{\ell_1}-\beta_{\ell_2}$. For the two
stars of \eqref{dssr} this gives
\begin{equation}\label{zinvspins}
  \begin{gathered}
    \psi_I=\beta_v-\beta_{u'},\quad \psi_J=\beta_{v'}-\beta_{u'},\quad \psi_K=\beta_v-\beta_u,\quad \psi_L=\beta_{v'}-\beta_u\;,\\
    \psi_H^{(1)}=0,\qquad\psi_H^{(2)}=\psi_J-\beta_u+\beta_v\;,
  \end{gathered}
\end{equation}
and all spins satisfy $\sum_i\psi_i=0$. The $i$-th component of every edge weight then depends only on $\pm R_{\ell_1,i}/R_{\ell_2,i}$.
The shift of the dashed lines by $\pi/2$ accounts for the difference between the $W$- and $\olW$-type edges, since a $W$-edge is crossed by
one solid and one dashed line, while a $\olW$-edge is crossed by two lines of the same type.

\paragraph{The saddle point equations as a cross-ratio condition.}
We insert \eqref{zinvspins} and \eqref{linedict} into \eqref{triggensaddle}. All the terms linear in the $\varphi_{\ell,i}$, i.e.\ the cumulative
terms and the terms coming from $\alpha_\ell$, cancel up to spin-independent multiples of $2\pi\ii$. Only the diagonal factors survive, and for the
first star we find (for the second star one obtains the inverse)
\begin{equation}
  \EXP^{G_p}\propto\frac{\cos(\varphi_{u',p}-\varphi_{v,p}+\frac\pi2)\cos(\varphi_{u,p}-\varphi_{v',p}-\frac\pi2)}{\cos(\varphi_{v,p}-\varphi_{u,p}+\frac\pi2)\cos(\varphi_{v',p}-\varphi_{u',p}+\frac\pi2)}\;.
\end{equation}
Since $2\ii\sin(\varphi_a-\varphi_b)=(R_a-R_b)/\sqrt{R_aR_b}$, the square roots cancel and $\EXP^{G_p}\propto\chi_p$,
the cross-ratio
\begin{equation}\label{crcond}
  \chi_p:=-\frac{(R_{u',p}-R_{v,p})(R_{u,p}-R_{v',p})}{(R_{v,p}-R_{u,p})(R_{v',p}-R_{u',p})}\;.
\end{equation}
Thus, for arbitrary line variables, the saddle point equations of both stars are equivalent to the single condition that the cross-ratio
$\chi_p$ of the four rapidity lines is the same in every component $p=1,\ldots,n$. The Lagrange multiplier is then given by $\EXP^\mu\propto\chi_p$.

\paragraph{The spectral curve.}
Because of the M\"obius invariance of the cross-ratio, this condition is solved if all rapidities lie on a common curve
\begin{equation}\label{bkmscurve}
  R_{\ell,i}=g_i(s_\ell),\qquad g_i=\begin{pmatrix}\mathsf a_i&\mathsf b_i\\\mathsf c_i&\mathsf d_i\end{pmatrix}\in \mathrm{PGL}(2,\CC)\ \text{fixed},\qquad i=1,\ldots,n\;.
\end{equation}
If we write $R_{\ell,i}=(z^+_i(\ell)/z^-_i(\ell))^M$, the curve \eqref{bkmscurve} is the spectral curve
$(z_i^{+M},z_i^{-M})^{\!\top}\propto K_{ij}(z_j^{+M},z_j^{-M})^{\!\top}$, $K_{ij}:=g_ig_j^{-1}$, of the $(\ZZ_M)^{n-1}$ chiral Potts model \cite{Bazhanov:1990qk}.

For a single star-star relation the converse is elementary, and requires no assumption on the lines. If the four lines $u,u',v,v'$ of \eqref{dssr}
have equal cross-ratios $\chi_p$ for all $p$, then for generic lines there are M\"obius maps $g_i$ with $R_{\ell,i}=g_i(R_{\ell,1})$ for these four
lines, namely the maps sending $(R_{u,1},R_{u',1},R_{v,1})$ to $(R_{u,i},R_{u',i},R_{v,i})$, since a fourth point of $\mathbb P^1$ is determined by its
cross-ratio with three given points. The derivation of the weights \eqref{BKMSform} below only uses the curve through the four lines involved. Consequently, for
Z-invariant configurations, i.e.\ external spins of the form \eqref{zinvspins} with the saddle points given there, every non-degenerate solution of the saddle point
equations gives the $\mathfrak{sl}(n)$ chiral Potts weights on a curve through the four rapidities, and \eqref{dssr} holds in the trigonometric limit by
the results of Section~\ref{sec:BKMSproof}. Here non-degenerate means that the four rapidities are distinct in every component, so that the
cross-ratios and the weights are defined; degenerate configurations are reached by continuity on the chosen branches. A curve which is common to all lines is needed only for the Z-invariance of the whole
lattice. For external spins which are not of the form \eqref{zinvspins}, i.e.\ $\psi_I-\psi_J\neq\psi_K-\psi_L$, the relation \eqref{dssr} in the
trigonometric limit is a generalization of the chiral Potts star-star relation in which the weights \eqref{triggenW} depend on the continuous spins
$\psi_A$ of the vertices rather than on rapidities attached to lines, with the internal spin determined by \eqref{triggensaddle}; in this case we have
verified \eqref{dssr} only numerically (Appendix~\ref{app:numerics}).

We remark that a global version of this converse is also expected. Suppose that the lines of each of the four types lie on holomorphic
curves, and that \eqref{crcond} is independent of $p$ for all choices of the four lines. A comparison of the zeros, the poles and the points
where $\chi_p=-1$ as functions of one rapidity suggests that all rapidities then lie on a single curve of the form \eqref{bkmscurve}, although
we have not worked out a complete argument. For $n=2$ and generic coefficients this is, after rescalings of $X:=R_{\ell,1}$ and
$Y:=R_{\ell,2}$, the chiral Potts curve $X+Y=k(1+XY)$, while degenerate maps such as $Y=X+c$ give degenerate curves, such as the point
$K_{ij}\propto\mathbf 1$ of Section~\ref{sec:trig}. Such a statement would generalize the ``homogeneous'' and ``inhomogeneous'' uniqueness
statements of \cite[Secs.~4.4.3--4.4.4]{Kels:2017vbc}.

\paragraph{The Boltzmann weights on the curve.}
Let $z^\pm_i(\ell)$ be the $M$-th roots of $z_i^+(\ell)^M=\mathsf a_is_\ell+\mathsf b_i$, $z_i^-(\ell)^M=\mathsf c_is_\ell+\mathsf d_i$. On the curve we have
\begin{equation}
  R_{\ell_1,i}-R_{\ell_2,i}=\frac{\det g_i\,(s_{\ell_1}-s_{\ell_2})}{(\mathsf c_is_{\ell_1}+\mathsf d_i)(\mathsf c_is_{\ell_2}+\mathsf d_i)}\;.
\end{equation}
Since $\sum_id_i=0$ for $d_i:=a_{A,i}-a_{B,i}$, the factors $(s_{\ell_1}-s_{\ell_2})^{d_i/M}$ cancel between the components. The fractional
powers in \eqref{triggenW} therefore factorize into functions on the lines, $\upsilon_i(\ell):=(\mathsf c_is_\ell+\mathsf d_i)^{-1/M}=1/z^-_i(\ell)$, which
are multi-valued in the $s$-plane but single-valued on the covering curve with coordinates $z_i^\pm$,
which are the analogues of $\mu_p$ in the chiral Potts model. The factors $\prod_i[\upsilon_i(\ell_1)\upsilon_i(\ell_2)]^{d_i}$ then cancel around every vertex
of the star-star relation. For an edge crossed by the ordered pair of lines $(\ell_1,\ell_2)$, what remains is
\begin{equation}\label{BKMSform}
  \mathsf B^{(\iota)}_{\ell_1\ell_2}(\va,\boldsymbol b):=\varpi^{\mathsf Q(\va,\boldsymbol b)}\prod_{i=1}^n\prod_{k=0}^{a_i-b_i-1}
  \frac{1}{z_i^+(\ell_1)z_i^-(\ell_2)\,\varpi^{k+\iota}-z_i^+(\ell_2)z_i^-(\ell_1)}\;,
\end{equation}
with $\prod_{k=0}^{m-1}f_k:=\prod_{k=m}^{-1}f_k^{-1}$ for $m<0$. Here $\iota=1$ for all edges except $\bW_{\theta_{uv'}}$, for which $\iota=0$, and we have
$\mathsf B^{(0)}_{u v'}(\va,\boldsymbol b)\propto\mathsf B^{(1)}_{v' u}(\boldsymbol b,\va)^{-1}$. These weights depend only on the points of the curve and
on the discrete spins, and they satisfy the star-star relation \eqref{dssr} with $\mathsf S=1$ and $\mathcal R=1$.

\paragraph{Identification with the BKMS weights.}
Let us now compare these weights with the BKMS weights. For spins $l,m\in(\ZZ_M)^n$, the Boltzmann weights of
\cite{Bazhanov:1990qk} are defined by \cite[(0.4)--(0.7)]{Bazhanov:1990qk}
\begin{equation}
  \begin{gathered}
    \frac{\overline W_{pq}(l,m)}{\overline W_{pq}(0,0)}=\varpi^{\mathsf Q_{\rm BKMS}(l,m)}g_{pq}(0,l-m)\;,\\
    g_{pq}(m,m+\delta_\alpha):=\frac{h^+_{\alpha-1}(p)h^-_{\alpha-1}(q)-h^-_{\alpha-1}(p)h^+_{\alpha-1}(q)\,\varpi^{m_{\alpha-1,\alpha}}}{h^+_\alpha(p)h^-_\alpha(q)-h^-_\alpha(p)h^+_\alpha(q)\,\varpi^{1+m_{\alpha,\alpha+1}}}\;,
  \end{gathered}
\end{equation}
with $\mathsf Q_{\rm BKMS}(l,m):=\sum_\alpha m_{\alpha-1,\alpha}(l_\alpha-m_\alpha)$ and $m_{\alpha\beta}:=m_\alpha-m_\beta$, where the indices are in $\ZZ_n$. The
reversed orientation is given by $1/\overline W_{qp}$ \cite[Fig.~2]{Bazhanov:1990qk}. We put $a_\alpha:=m_\alpha-m_{\alpha+1}$, so that $\sum_\alpha a_\alpha=0$
automatically, and identify $(p,q)=(\ell_1,\ell_2)$ and $h^\pm_\alpha=z^\mp_\alpha$.

A step $m\to m+\delta_\alpha$ changes $a_\alpha\to a_\alpha+1$ and $a_{\alpha-1}\to a_{\alpha-1}-1$. In \eqref{BKMSform} with $\iota=1$ this produces one
new denominator factor $z^+_\alpha(\ell_1)z^-_\alpha(\ell_2)\varpi^{1+d_\alpha}-z^+_\alpha(\ell_2)z^-_\alpha(\ell_1)$ and removes one factor
$z^+_{\alpha-1}(\ell_1)z^-_{\alpha-1}(\ell_2)\varpi^{d_{\alpha-1}}-z^+_{\alpha-1}(\ell_2)z^-_{\alpha-1}(\ell_1)$, which are (minus) the denominator and numerator of
$g_{pq}(m,m+\delta_\alpha)$ above; since both carry the same sign, the ratios agree exactly. With the representatives $l,m\in\ZZ^n$ fixed by
$l_\alpha-l_{\alpha+1}=a_\alpha$, $m_\alpha-m_{\alpha+1}=b_\alpha$ and $l_1=m_1=0$, we have
$\mathsf B^{(1)}_{\ell_1\ell_2}(\va,\boldsymbol b)=\varpi^{\mathsf Q(\va,\boldsymbol b)-\mathsf Q_{\rm BKMS}(l,m)}\,\overline W_{pq}(l,m)/\overline W_{pq}(0,0)$, and it remains to
compare the quadratic forms. Using $\sum_ia_i=\sum_ib_i=0$ one finds
\begin{equation}\label{Qcompare}
  \begin{aligned}
    \mathsf Q(\va,\boldsymbol b) & =\tfrac12|\va|^2-\tfrac12\textstyle\sum_ia_ib_i+\tfrac12\sum_{i<j}(a_ib_j-a_jb_i)\;,           \\
    \mathsf Q_{\rm BKMS}(l,m)    & =\tfrac12|\boldsymbol b|^2-\tfrac12\textstyle\sum_ia_ib_i-\tfrac12\sum_{i<j}(a_ib_j-a_jb_i)\;,
  \end{aligned}
\end{equation}
so that
\begin{equation}\label{Qrel}
  \begin{aligned}
    \mathsf Q(\va,\boldsymbol b)-\mathsf Q_{\rm BKMS}(l,m) & =\tfrac12\big(|\va|^2-|\boldsymbol b|^2\big)+\textstyle\sum_{i<j}(a_ib_j-a_jb_i)\;,        \\
    \mathsf Q_{\rm BKMS}(l,m)                              & =\mathsf Q(\tilde{\va},\tilde{\boldsymbol b})-\tfrac12|\va|^2+\tfrac12|\boldsymbol b|^2\;,
  \end{aligned}
\end{equation}
where $\tilde a_i:=a_{n+1-i}$ denotes the reversal of the components. The first term of \eqref{Qrel} is a one-vertex factor $\varpi^{(|\va|^2-|\boldsymbol b|^2)/2}$: at the internal vertex of \eqref{dssr} the two $\overline{\mathsf W}$-edges contribute $\varpi^{-|\va_H|^2/2}$ each and the two
$\mathsf W$-edges $\varpi^{+|\va_H|^2/2}$ each, and on each external spin the contributions are the same on both sides. The second term is the reversal of the
components, which is a symmetry of the star-star relation, since it maps the curve \eqref{bkmscurve} with the maps $g_i$ to the curve with the maps $g_{n+1-i}$ and
relabels the spins. Therefore the weights with the phase $\varpi^{\mathsf Q}$ and those with $\varpi^{\mathsf Q_{\rm BKMS}}$ satisfy \eqref{dssr} simultaneously, and
$\mathsf B^{(1)}_{\ell_1\ell_2}$ is the BKMS weight $\overline W_{pq}$ with the components in reversed order, up to the one-vertex factor. In particular both signs
of the skew-symmetric part give solutions. 
The special edge with $\iota=0$ is the reversed-orientation weight $1/\overline W_{qp}$. Therefore, in the trigonometric regime with rapidities on the curve
\eqref{bkmscurve}, the discrete model obtained from the lens elliptic gamma function solution is the $(\ZZ_M)^{n-1}$ chiral Potts model of
Bazhanov--Kashaev--Mangazeev--Stroganov.\footnote{In the compressed form of these weights in \cite[Sec.~2]{Moosavian:2025nyv}, the numerator product must carry $\varpi^{-k}$
  relative to the denominators $\varpi^{1+k}$; with $\varpi^{+k}$, as printed there, the star-star relation fails for $M\ge3$ (checked numerically).}

\subsection{Star-star relation from the \texorpdfstring{$\mathfrak{sl}(n)$}{sl(n)} chiral Potts model}\label{sec:BKMSproof}
As we have seen in Section~\ref{sec:trig-general}, with rapidities on the curve \eqref{bkmscurve} the weights coincide with those of the
$(\ZZ_M)^{n-1}$ chiral Potts model \cite{Bazhanov:1990qk}, up to factors which cancel term by term in \eqref{dssr}. The relation \eqref{dssr} is therefore the
two-dimensional star-star relation of the $\mathfrak{sl}(n)$ chiral Potts model with the weights \cite[(3.19)]{Bazhanov:1990qk}. In \cite{Bazhanov:1990qk}
the Yang--Baxter equation is formulated for the face weights $\overline W_{qp}\overline W_{p'q}\overline W_{q'p'}/\overline W_{q'p}$ \cite[(3.23)]{Bazhanov:1990qk} and
proven in \cite{Kashaev:1990wy} (see also \cite{Date:1990bs}). The star-star relation itself was derived in \cite[footnote~5]{Bazhanov:1992jq}. There the argument used to derive the intertwining relation
for the layer-to-layer transfer matrices \cite[(3.10)]{Bazhanov:1992jq} is applied to a single front-to-back line of the three-dimensional
Bazhanov--Baxter model \cite{Bazhanov:1992jqa}. Since the conventions there differ from ours, we have also checked numerically that the original
weights \cite[(3.19)--(3.22)]{Bazhanov:1990qk} satisfy \eqref{dssr} in the form used here. This includes the degenerate point $K_{ij}\propto\mathbf 1$ of the
curve, i.e.\ the $\mathfrak{sl}(n)$ Fateev--Zamolodchikov type model \eqref{trigW}.

\section{Discussion}\label{sec:discussion}

We have studied the root-of-unity limit of the $A_{n-1}$ lens elliptic gamma function solution of the star-star relation
\cite{Yamazaki:2013nra,Kels:2017toi}. The quasi-classical expansion gives a classical star-star relation and, at the next order, a
$(\ZZ_{rN})^{n-1}$ solution of the star-star relation which depends on $r$ and $N$ only through $rN$. This confirms the picture anticipated in
\cite{Bazhanov:2011mz}. For constant backgrounds one obtains an elliptic $\mathfrak{sl}(n)$ Kashiwara--Miwa type model, which includes the
Quano--Fujii model, and in the trigonometric limit the $(\ZZ_M)^{n-1}$ chiral Potts model of \cite{Bazhanov:1990qk} together with its spectral curve. We
conclude with some comments and open problems.

\paragraph{Proof of the discrete star-star relation.}
Apart from the cases of Table~\ref{tab:status}, \eqref{dssr} rests on assumptions on the saddle points, and a possible route to a general proof
was outlined in Section~\ref{sec:proofs}. For the constant background the mechanism is visible in Section~\ref{sec:QFproof}, where the star-star
relation follows from the Yang--Baxter equation of the box $R$-matrix of \cite{Quano:1992wc}. For general backgrounds the box $R$-matrix would
depend on the saddle point, and the corresponding $L$-operators remain to be constructed.

\paragraph{Torsion points and new models.}
For the elliptic $\mathfrak{sl}(n)$ Kashiwara--Miwa type models a proof is missing in particular at the real points for even $n\ge4$, the
$\mathfrak{sl}(n)$ analogues of the original Kashiwara--Miwa model (Appendix~\ref{app:torsion}); for $\gcd(n,M)>1$, where the Yang--Baxter equation
for the extension of the Quano--Fujii model to $Q/MQ$ would be needed (Section~\ref{sec:gcd}); and at the points violating \eqref{QFcond}, whose
algebraic origin is not clear to us. For non-symmetric saddle points \eqref{dssr} is an elliptic deformation of the $\mathfrak{sl}(n)$ chiral Potts
model, analogous to the $Q4$-type $\ZZ_N$ model of \cite{Bazhanov:2010kz} for $n=2$, and deserves further study, as does the question of a regime
with positive Boltzmann weights for $n\ge3$ (Section~\ref{sec:reality}).

\paragraph{Classical integrability.}
For $n=2$ the saddle point equations are the three-leg form of $Q4$ \cite{Adler:2002abs,Bazhanov:2016ajm}. For general $n$ they coincide with the equations of
motion of the classical star-star relation of \cite{Bazhanov:2011mz}. Their hyperbolic counterparts give $(n-1)$-component extensions of integrable
5-point equations \cite{Kels:2024afm}, and related multi-component quad equations were proposed in \cite{Kels:2018qzx}. While the classical star-star relation \eqref{cssr}
follows from \cite{Bazhanov:2011mz}, it is an open problem to establish multidimensional consistency properties of \eqref{saddlemod},
analogous to the 3D consistency of $Q4$, directly. In the trigonometric limit the equations reduce to a condition on
cross-ratios, which suggests a description in terms of $n$ copies of $Q3_{(\delta=0)}$ coupled through the spectral curve. Lens elliptic
hypergeometric sum/integral expressions also give $\tau$-functions of a lens generalization of the elliptic discrete Painlev\'e equation, with both
continuous and discrete variables \cite{Kels:2018tau}; the relation between these $\tau$-functions and the classical equations found here
remains to be determined.

\paragraph{Gauge theory.}
In the Gauge/YBE correspondence the model studied here corresponds to the lens index of four-dimensional $\mathcal N=1$ $\mathrm{SU}(n)$ gauge theories
\cite{Yamazaki:2013nra,Kels:2017toi}, and the root-of-unity limit is a limit of the fugacities. For $n=2$ and $r=N=1$ it was proposed in
\cite{Kels:2017vbc}, in parallel with the Bethe/gauge correspondence of Nekrasov and Shatashvili \cite{Nekrasov:2009uh,Nekrasov:2009ui}, that the
leading term of the quasi-classical expansion is the effective twisted superpotential of the two-dimensional $\mathcal N=(2,2)$ theory obtained
by compactifying the four-dimensional theory on a torus, so that the saddle point equations are its vacuum equations. This suggests that the
action $\mathcal A$ and the saddle point equations \eqref{saddlemod} play the same role for the $\mathrm{SU}(n)$ theories, and that the discrete
spins which appear at $O(\hbar^0)$ encode additional data of the vacua at a root of unity. Making this precise requires an understanding of the discrete models directly in
gauge theory, including the appearance of the $rN$-torsion points and the dependence on $rN$ only, possibly through a reduction to
lower-dimensional theories (cf.\ \cite{Kels:2017vbc}; see also \cite{Yamazaki:2015voa} for $R$-matrices given by theta functions,
obtained from two-dimensional $\mathcal N=(2,2)$ theories, and their reduction to one dimension). For ordinary superconformal indices, surface defects acting as transfer matrices were constructed in
\cite{Maruyoshi:2016caf,Yagi:2017hmj} (see \cite{Yagi:2017hmj} for their relation to elliptic quantum groups).
Surface defects in the lens index, in particular in the $\mathcal N=2$ lens index \cite{Razamat:2013jxa}, give rise to transfer matrices and
difference operators of the corresponding integrable models \cite{Maruyoshi:2026nrl,Yamazaki:2026kws}. Their root-of-unity limit,
which may give transfer matrices of the discrete models found here, has not been studied. Generalized chiral Potts models have recently been realized \cite{Moosavian:2025nyv} in terms of
(semi-)holomorphic Chern--Simons theories \cite{Costello:2013zra,Costello:2017dso} (see \cite{Yamazaki:2025ocv} for an overview). Whether the elliptic models found here admit a similar realization is
not known.

\paragraph{Other generalizations.}
Several generalizations of this work suggest themselves. A $BC_n$-type counterpart of the $A_{n-1}$ sum/integral transformation is proven in
\cite{Kels:2017toi}, and one may study the root-of-unity limit of the corresponding integrals and of the associated lattice models.
Related multivariate rarefied elliptic beta integrals and symmetry transformations were studied in \cite{Spiridonov:2016uae}. Star-star
relations built from hyperbolic gamma functions and their lens versions \cite{Gahramanov:2016ilb,Catak:2021coz,Gahramanov:2022jxz} have quasi-classical expansions of a
similar form \cite{Kels:2024afm}, while the analogue of the discrete models found here is not known in the hyperbolic case. One could also
consider other roots of unity, for example limits in which both nomes approach roots of unity, or non-primitive roots of unity.

We hope to return to some of these questions in the future.

\section*{Acknowledgments}\addcontentsline{toc}{section}{Acknowledgments}
This research was supported in part by the World Premier International Research Center Initiative (WPI), MEXT, Japan; JSPS KAKENHI Grant No.~23K25865; JST, Japan (CREST Grant No.~JPMJCR26XA, Moonshot R\&D Grant No.~JPMJMS256E); and IBM-UTokyo-sponsored research. The author was assisted by Anthropic Claude and OpenAI ChatGPT in writing and editing the manuscript, in particular in numerical checks.

\appendix
\section{Theta functions}\label{app:theta}
\begin{align}
  \ovth_1(x|\tau) & :=2\EXP^{\frac{\pi\ii\tau}{4}}\sin x\prod_{k\ge1}(1-\EXP^{2\ii x}\EXP^{2\pi\ii\tau k})(1-\EXP^{-2\ii x}\EXP^{2\pi\ii\tau k})\;, \\
  \ovth_4(x|\tau) & :=\prod_{k\ge1}(1-\EXP^{2\ii x}\EXP^{\pi\ii\tau(2k-1)})(1-\EXP^{-2\ii x}\EXP^{\pi\ii\tau(2k-1)})\;,
\end{align}
and $\ovth_2(x|\tau):=\ovth_1(x+\pi/2|\tau)$, $\ovth_3(x|\tau):=\ovth_4(x+\pi/2|\tau)$,
$\ovth_1(x|\tau)=-\ii\,\EXP^{\ii x+\pi\ii\tau/4}\ovth_4(x+\pi\tau/2|\tau)$, and $\vartheta_j:=G(\tau)\ovth_j$ with $G(\tau):=\prod_{k\ge1}(1-\EXP^{2\pi\ii\tau k})$.
The function $\ovth_1$ is odd, and under the periods it transforms as
\begin{equation}\label{thetaper}
  \ovth_1(x+\pi|\tau)=-\ovth_1(x|\tau),\qquad \ovth_1(x+\pi\tau|\tau)=-\EXP^{-2\ii x-\pi\ii\tau}\,\ovth_1(x|\tau)\;,
\end{equation}
and correspondingly for $\ovth_{2,3,4}$; its zeros are at $x\in\pi\ZZ+\pi\tau\ZZ$.
Under the modular group, $\vartheta_1(z|\tau+1)=\EXP^{\pi\ii/4}\vartheta_1(z|\tau)$, $\vartheta_2(z|\tau+1)=\EXP^{\pi\ii/4}\vartheta_2(z|\tau)$,
$\vartheta_3(z|\tau+1)=\vartheta_4(z|\tau)$, and
\begin{equation}
  \vartheta_1\Big(\frac{z}{\tau}\Big|-\frac1\tau\Big)=-\ii\sqrt{\frac{\tau}{\ii}}\,\EXP^{\frac{\ii z^2}{\pi\tau}}\vartheta_1(z|\tau),\qquad
  \vartheta_4\Big(\frac{z}{\tau}\Big|-\frac1\tau\Big)=\sqrt{\frac{\tau}{\ii}}\,\EXP^{\frac{\ii z^2}{\pi\tau}}\vartheta_2(z|\tau)\;;
\end{equation}
more generally, for $g=\big(\begin{smallmatrix}\alpha&\beta\\\gamma&\delta\end{smallmatrix}\big)\in \mathrm{SL}(2,\ZZ)$,
\begin{equation}\label{thetamod}
  \vartheta_1\Big(\frac{z}{\gamma\tau+\delta}\,\Big|\,\frac{\alpha\tau+\beta}{\gamma\tau+\delta}\Big)=\epsilon_g\,(\gamma\tau+\delta)^{1/2}\,\EXP^{\frac{\ii\gamma z^2}{\pi(\gamma\tau+\delta)}}\,\vartheta_1(z|\tau)\;,\qquad \epsilon_g^8=1\;.
\end{equation}
We shall only need the Gaussian factor and the factor $(\gamma\tau+\delta)^{1/2}$; the root of unity $\epsilon_g$ and the factors $G$ in
$\ovth_1=\vartheta_1/G$ only give $z$-independent constants.

\section{Conventions for the lens elliptic gamma function}\label{app:dict}

The star-star relation of \cite{Yamazaki:2013nra} is written in \cite[Sec.~6]{Kels:2017toi} in terms of the lens elliptic gamma function
(in this appendix $\sigma$ and $\tau$ denote the two modular parameters of \cite{Kels:2017toi}, not a spin and the nome of Section~\ref{sec:asymptotics})
\begin{equation}\label{Gamma1704}
  \begin{aligned}
    \Gamma(z,m;\sigma,\tau) & :=\EXP^{\varphi_e(z,m;\sigma,\tau)}\,\Gamma_1(z+\sigma m;r\sigma,\sigma+\tau)\,
    \Gamma_1(z+\tau(r-m);r\tau,\sigma+\tau)\;,                                                                            \\
    \Gamma_1(z;\sigma,\tau) & :=\prod_{j,k=0}^\infty\frac{1-\EXP^{-2\pi\ii z}p^{j+1}q^{k+1}}{1-\EXP^{2\pi\ii z}p^jq^k}\;,
  \end{aligned}
\end{equation}
with $p:=\EXP^{2\pi\ii\sigma}$, $q:=\EXP^{2\pi\ii\tau}$, $m\in\{0,\ldots,r-1\}$, and
\begin{equation}
  \varphi_e(z,m;\sigma,\tau):=2\pi\ii\big[R_2(z,0;\sigma,\tau)+R_2(0,m;\tfrac12,-\tfrac12)-R_2(z,m;\sigma,\tau)\big]\;,
\end{equation}
where
\begin{multline*}
  R_2(z,m;\sigma,\tau):=\frac{(\sigma+\tau-2z)\big(2z^2-2z(\sigma+\tau)+\sigma\tau(r^2+6(m-r)m)+1\big)}{24r\sigma\tau}\\-\frac{(\sigma-\tau)(2m-r)(m-r)m}{12r}
\end{multline*}
\cite[(2.2)--(2.5)]{Kels:2017toi}. The function $\Phi$ of \eqref{Phidef}, which is the one used in \cite{Kels:2017vbc}, is related to it by
\begin{equation}\label{PhiGamma}
  \Phi(z,m)=\Gamma\Big(\frac{\tau_1+\tau_2}{2}-\frac{z}{\pi},\,m;\,\tau_1,\tau_2\Big),\qquad\text{i.e.}\qquad \sigma=\tau_1,\quad\tau=\tau_2\;.
\end{equation}
For the infinite products this follows from
$\Phi_1(z;\EXP^{\pi\ii s},\EXP^{\pi\ii t})=\Gamma_1\big((s+t)/2-z/\pi;s,t\big)$, applied with $(s,t)=(\tau_1+\tau_2,r\tau_1)$ and $(\tau_1+\tau_2,r\tau_2)$:
the first factor of \eqref{Phidef} becomes $\Gamma_1(z'+\tau_1m;r\tau_1,\tau_1+\tau_2)$ and the second one $\Gamma_1(z'+\tau_2(r-m);r\tau_2,\tau_1+\tau_2)$, with
$z':=(\tau_1+\tau_2)/2-z/\pi$. The exponential prefactors $\EXP^{E(z,m)}$ and $\EXP^{\varphi_e(z',m)}$ then agree as well, so that \eqref{PhiGamma} holds
exactly, without additional normalization factors.

In \cite[Sec.~6]{Kels:2017toi} the Boltzmann weights are expressed through
$\Phi_{\text{\cite{Kels:2017toi}}}(z,m):=\Gamma\big((\sigma+\tau)/2-z,-m;\sigma,\tau\big)$, which by \eqref{PhiGamma} is
\begin{equation}
  \Phi_{\text{\cite{Kels:2017toi}}}(z,m)=\Phi(\pi z,-m)\;.
\end{equation}
The weights $W_\alpha$, $S$ and the crossing parameter $\eta=-(\ii/2)(\sigma+\tau)$ of \cite[(6.3)]{Kels:2017toi} are therefore mapped to
\eqref{BW}--\eqref{Sdef} by
\begin{equation}
  x^{\text{\cite{Kels:2017toi}}}_{i}=\frac{x_i}{\pi},\qquad m^{\text{\cite{Kels:2017toi}}}_{i}=-m_i,\qquad \alpha^{\text{\cite{Kels:2017toi}}}=\frac{\alpha}{\pi},\qquad
  \eta^{\text{\cite{Kels:2017toi}}}=\frac{\eta}{\pi},\qquad (\sigma,\tau)=(\tau_1,\tau_2)\;.
\end{equation}
The sign of the discrete components is immaterial, since the measure and the constraint $\sum_im_i\equiv0$ are invariant under $\vm\to-\vm$. The integration range
$x^{\text{\cite{Kels:2017toi}}}\in[0,1)$ becomes $x\in[0,\pi)$. Finally, $\overline W_\alpha=\sqrt{S(\sigma_a)S(\sigma_b)}\,W_{\eta-\alpha}$ in \cite{Kels:2017toi} differs from
\eqref{BW} by the factors $\sqrt S$, which cancel in the star-star relation as explained in Section~\ref{sec:model}.

\section{\texorpdfstring{The Quano--Fujii model for $\gcd(n,M)>1$}{The Quano-Fujii model for gcd(n,M)>1}}\label{app:gcd}

In this appendix we give the details of Section~\ref{sec:gcd}, with $d=\gcd(n,M)$.
Let $\overline P:=\sum_{j=0}^{n-1}\ZZ\bar\epsilon_j$ be the weight lattice of
$\mathfrak{sl}(n)$, with $\bar\epsilon_j$ as in Section~\ref{sec:QF}, and recall that the root
lattice $Q$ is a sublattice of $\overline P$, with $\overline P/Q\cong\ZZ_n$. Our discrete spins are the elements $\va=\sum_{i=1}^na_i\epsilon_{i-1}=\sum_{i=1}^na_i\bar\epsilon_{i-1}$ of $Q$ modulo $MQ$, while
the local states of \cite{Quano:1992wc} with the cyclic condition $a\sim a+M\bar\epsilon_i$ are the classes of $a_0+\overline P$ modulo
$M\overline P$, and the dictionary \eqref{QFdict} is the map $Q/MQ\to\overline P/M\overline P$ induced by the inclusion $Q\subset\overline P$,
up to the shift by $a_0$. Its kernel is $(Q\cap M\overline P)/MQ\cong\ZZ_d$, generated by the class of $\va_{\rm ker}$ of \eqref{kerQF}, and its image is the subgroup of index $d$ of the states whose charge, the class in $\overline P/(Q+M\overline P)\cong\ZZ_d$, vanishes.
For $d=1$ the map is a bijection, as used in Section~\ref{sec:QFproof}; for $d>1$ it is $d$-to-one onto the charge-zero states (for $(n,M)=(2,2)$, $(3,3)$, $(4,2)$:
$2$, $9$, $8$ spins onto $1$, $3$, $4$ states).

For $d>1$ it is $Q/MQ$, rather than $\overline P/M\overline P$, which is compatible with the recursion \eqref{QFrec}. Indeed, \eqref{QFrec} and the
initial condition $\overline W^{\rm QF}_{aa}(0)=1$ determine $\bar w_{ab}(u_{\rm QF})$ for $a-b\in Q$ up to a spin-independent factor $f(u_{\rm QF})$, $f(0)=1$, which
cancels in the star-star relation and is fixed by \eqref{KMnW}. In particular, by \eqref{QFid}, both $\mathsf W_\theta(\va,\boldsymbol b)$ and
$\mathsf W_\theta(\va+\va_{\rm ker},\boldsymbol b)$ are determined, and they are different functions of the spins: for $(n,M)=(2,2)$, with
$\va=(a,-a)$ and $\boldsymbol b=(b,-b)$, \eqref{KMnW} gives
\begin{equation}
  \frac{\mathsf W_\theta(\va+\va_{\rm ker},\boldsymbol b)}{\mathsf W_\theta(\va,\boldsymbol b)}=\pm\,\Xi(\theta)^{\pm1},\qquad
  \Xi(\theta):=\frac{\ovth_1(\frac\theta2-\frac\pi4\,|\,T)}{\ovth_1(\frac\theta2+\frac\pi4\,|\,T)}\;,
\end{equation}
with the exponent $+1$ for $a=b$ and $-1$ for $a\neq b$ (in the other cases the ratio is not even a gauge factor $f(\va)f'(\boldsymbol b)$, as we
have checked numerically). For $d>1$ the cyclic condition of \cite{Quano:1992wc}, which identifies $\va$ and $\va+\va_{\rm ker}$, is thus
incompatible with the recursion; this is where their assumption $\gcd(n,N_{\rm QF})=1$ enters. The weights \eqref{KMnW}--\eqref{KMnS} on
$(\ZZ_M)^{n-1}\cong Q/MQ$, on the other hand, are well-defined for all $n$ and $M$ (Section~\ref{sec:KMreal}). For $\gcd(n,M)>1$ the model at
$\omega_0=\pi/n$ is therefore an extension of the Quano--Fujii model, defined by the recursion \eqref{QFrec} with the initial condition on $Q/MQ$
instead of $\overline P/M\overline P$ and normalized by \eqref{KMnW}, and \eqref{KMnssr} is a star-star relation for this extension.

A proof of \eqref{KMnssr} for $\gcd(n,M)>1$ along the lines of \cite{Quano:1992wc} would require the Yang--Baxter equation for this extension.
The natural state space is $\overline P/MQ$, with $nM^{n-1}$ elements and $n$ sectors labeled by the charge in $\overline P/Q\cong\ZZ_n$, the
charge-zero sector being $Q/MQ$. The recursion defines $\bar w_{ab}$ within each sector ($a-b\in Q$), where it is generically non-zero, and we
define the box operator on $\mathcal H\otimes\mathcal H$, $\mathcal H$ spanned by $\overline P/MQ$, as the direct sum of the box operators
\eqref{QFS} of the sectors: $\langle a,d|\mathcal S|b,c\rangle$ is given by \eqref{QFS} if $a,b,c,d$ have the same charge and vanishes otherwise.
Within a sector the box weights are invariant under changes of the lifts of their four spins by $MQ$, as follows from the quasi-periodicity of
$\ovth_1$ (and as we have checked numerically), so that this operator is well-defined; whether this block-diagonal extension is the right one for
the intertwining construction of \cite{Quano:1992wc} is open. The sectors are equivalent: the cyclic shift
\begin{equation}\label{Cshift}
  C:\ (a_1,\ldots,a_n)\ \mapsto\ (a_2,\ldots,a_n,a_1-1)\;,
\end{equation}
which lowers the charge by one, satisfies $\mathsf W_\theta(C\va,C\boldsymbol b)=\mathsf W_\theta(\va,\boldsymbol b)$ and
$\mathsf S(C\va)=(-1)^{n-1}\mathsf S(\va)$ at $\omega_0=\pi/n$, by \eqref{KMnW}--\eqref{KMnS} and the periodicity \eqref{thetaper}. The
$L$-operators of \cite{Quano:1992wc}, whose matrix elements connect $a$ and $a+\bar\epsilon_i$, shift the charge by one. Their construction on
$\overline P/MQ$, i.e.\ the behaviour of their matrix elements under changes of the representatives modulo $MQ$, and the uniqueness of the
intertwiner, which in \cite{Quano:1992wc} rests on the cyclic condition, remain to be worked out (see also Section~\ref{sec:discussion}). For
$\gcd(n,M)>1$ we have verified \eqref{KMnssr} at $\omega_0=\pi/n$ numerically (Appendix~\ref{app:numerics}).

\section{Symmetries and general torsion points}\label{app:torsion}

In this appendix we explain the results on general torsion points summarized in Section~\ref{sec:torsion}. We label
$\omega_0=\pi(\kappa MT-\nu)/n$ by $(\kappa,\nu)$, or equivalently by $(\mathsf X,\mathsf Y):=(-\nu,\kappa)$. The real points \eqref{phi0} are
$(\kappa,\nu)=(\kappa,\kappa r)$ with $\gcd(\kappa,n)=1$ and $r\,|\,M$, and the Quano--Fujii point $\omega_0=\pi/n$ is $(\kappa,\nu)=(0,-1)$, i.e.\
$(\mathsf X,\mathsf Y)=(1,0)$. Numerically, \eqref{KMnssr} holds with $\mathcal R=1$ at every primitive point, $\gcd(\kappa,\nu,n)=1$. At the non-primitive points
it fails whenever the weights are non-degenerate, for example at $\omega_0=\pi/2$ for $(n,M)=(4,2)$ and at $\omega_0=0$ for $(n,M)=(2,3)$. In some
cases $\mathsf S$ vanishes identically, so that both sides are zero and no non-trivial model arises: at $\omega_0=0$ two components
$a_i\equiv a_j$ (mod $M$) give a vanishing factor of $\mathsf S$, which happens for every spin if $n>M$.

The points other than \eqref{phi0} do not come from a real configuration: in the original
variables their differences $\xi^0_{ij}$ have non-zero imaginary parts, and some edge arguments leave the strip of validity of \eqref{KYasym}.
We explain these observations by symmetries of \eqref{KMnssr} and their orbits, and then discuss the weaker notion of modular images of
the Quano--Fujii weights and the points with odd $\kappa$ for even $n$.

\subsection{Symmetries of the star-star relation}\label{app:sym}
We first list the transformations of the parameters $(T,\omega_0,\theta)$ which map the star-star relation \eqref{KMnssr} with the
weights \eqref{KMnW}--\eqref{KMnS} to itself, possibly at another torsion point. They fall into three classes: (a) exact symmetries of the
weights up to relabelings and constants, (b) the shift of $\omega_0$ by the period $\pi MT$, and (c) a class of modular transformations.
\begin{itemize}
  \item[(a)] In \eqref{KMnW}--\eqref{KMnS} the shift $\omega_0\to\omega_0+\pi$ is equivalent to the relabeling $a_i\to a_i+i$ of all spins: the
        factors of $\mathsf W_\theta(\va,\boldsymbol b)$ with $k<a_i-b_j$ become those with $k<a_i-b_j+i-j$, up to the spin-independent product over
        $0\le k<i-j$, and $\mathsf S$ changes accordingly. For odd $n$ a further uniform shift of all spins brings them back to $\sum_ia_i=0$; for
        even $n$ the same holds for $\omega_0\to\omega_0+2\pi$. Hence only $\nu$ modulo $n$ for odd $n$, or modulo $2n$ for even $n$, matters. The
        shift $T\to T+1$ changes the weights only by constants, since $\sum_{i,j}(a_i-b_j)=0$, and maps $\nu\to\nu+\kappa M$. Together with the mirror
        symmetry $\omega_0\to-\omega_0$ these are exact symmetries.
  \item[(b)] Consider $\omega_0\to\omega_0+\pi MT$, i.e.\ $\kappa\to\kappa+n$. By the quasi-periodicity \eqref{thetaper}, the weight
        $\mathsf W_\theta(\va,\boldsymbol b)$ changes by a factor $\EXP^{F(\va)-F(\boldsymbol b)}$ and by $\EXP^{(2\ii n\theta/M)\langle\rho,\va+\boldsymbol b\rangle}$,
        with $\rho$ as in \eqref{xi0}. The first factor cancels in \eqref{KMnssr}. The second one cancels on the external spins, as in the derivation of
        \eqref{KMnW}, and produces $\EXP^{2\pi\ii n\langle\rho,\va_H\rangle/M}$ around the internal vertex, since the four spectral variables there add up to $\pi$.
        The same quasi-periodicity gives $\mathsf S(\va)\to\mathsf S(\va)\,\EXP^{-2\pi\ii n\langle\rho,\va\rangle/M}$, where we used
        $\sum_{i<j}(j-i)a_{ij}=-n\langle\rho,\va\rangle$. The two factors cancel exactly.
  \item[(c)] Consider $g=\big(\begin{smallmatrix}\alpha&\beta\\\gamma&\delta\end{smallmatrix}\big)\in \mathrm{SL}(2,\ZZ)$ with $\alpha\equiv1$ and $\gamma\equiv0$
        (mod $2M$), acting as $(T,\omega_0,\theta)\to(gT,\omega_0/\mathcal J,\theta/\mathcal J)$ with $\mathcal J:=\gamma T+\delta$. Since $(k+1/2)\pi/\mathcal J$
        and $\pi/(2\mathcal J)$ differ from $(k+1/2)\pi$ and $\pi/2$ by elements of $M(\pi\ZZ+\pi gT\ZZ)$, the step of the discrete spins and the
        crossing parameter are preserved up to periods. The resulting Gaussian and quasi-periodicity factors either are gauge factors, or cancel on
        the external spins between the two sides, or depend only on the internal spin and cancel against the transformation of $\mathsf S$
        (Appendix~\ref{app:modular}). Thus (c) maps \eqref{KMnssr} at $(T,\omega_0)$ to \eqref{KMnssr} at $(gT,\omega_0/\mathcal J)$. On $(\mathsf X,\mathsf Y)$
        it acts by the matrix
        \begin{equation}\label{hmat}
          \begin{gathered}
            \check g=\begin{pmatrix}\check g_{11}&\check g_{12}\\\check g_{21}&\check g_{22}\end{pmatrix}:=\begin{pmatrix}\alpha&-\beta M\\-\frac{\gamma}{M}&\delta\end{pmatrix},\\
            \check g_{11}\equiv\check g_{22}\equiv1\ (\mathrm{mod}\ 2M),\qquad \check g_{12}\equiv0\ (\mathrm{mod}\ M),\qquad \check g_{21}\equiv0\ (\mathrm{mod}\ 2)\;,
          \end{gathered}
        \end{equation}
        and conversely every $\check g\in\mathrm{SL}(2,\ZZ)$ satisfying these congruences arises from a transformation (c). We denote by $\mathsf H$ the group of these matrices.
\end{itemize}

\subsection{Orbits}\label{app:orbits}
The relation \eqref{KMnssr} is obtained from the quasi-classical expansion at the real points, and, for $\gcd(n,M)=1$, it is proven at the
Quano--Fujii point (Section~\ref{sec:QFproof}). The orbits of these points under (a)--(c) are as follows.

\begin{proposition}\label{prop:orbits}
  Let $\mathsf G$ be the group generated by (a)--(c), acting on the pairs $(\kappa,\nu)\in\ZZ^2$ with $\gcd(\kappa,\nu,n)=1$, and let $\hat n:=2n$ if $n$ and $M$
  are both even and $\hat n:=n$ otherwise. Then:
  \begin{itemize}
    \item[(i)] $\nu$ matters only modulo $\hat n$ and $\kappa$ only modulo $n$. In particular, for even $n$ and odd $M$ the shift $\nu\to\nu+n$ is a symmetry of \eqref{KMnssr}, although it is not a relabeling.
    \item[(ii)] $\mathsf G$ acts transitively on the primitive points if and only if $\gcd(n,2M)=1$.
    \item[(iii)] $(\kappa,\nu)$ lies in the orbit of the Quano--Fujii point if and only if $\nu\equiv\pm1$ (mod $\hat d$) and, for even $n$, $\kappa$ is even, where
          $\hat d:=d$ for odd $n$ and $\hat d:=2d$ for even $n$, with $d=\gcd(n,M)$.
  \end{itemize}
\end{proposition}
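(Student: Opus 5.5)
The plan is to linearize the whole action. Write the primitive points as vectors $v=(\mathsf X,\mathsf Y)=(-\nu,\kappa)\in\ZZ^2$. Each generator in (a)--(c) of Appendix~\ref{app:sym} then acts either as a translation or through a matrix in $\mathrm{SL}(2,\ZZ)$:
\begin{itemize}
  \item the relabeling shift in (a) is $\mathsf X\to\mathsf X+n$ for odd $n$, and $\mathsf X\to\mathsf X+2n$ for even $n$;
  \item the shift $T\to T+1$ is $v\to\big(\begin{smallmatrix}1&-M\\0&1\end{smallmatrix}\big)v$, which is the element of $\mathsf H$ with $\beta=1$;
  \item the mirror symmetry is $v\to-v$;
  \item (b) is $\mathsf Y\to\mathsf Y+n$;
  \item (c) is the linear action of $\mathsf H$ in \eqref{hmat}.
\end{itemize}
Thus $\mathsf G$ acts through the group generated by $\pm\mathsf H$ on $\ZZ^2/L$, where $L=n\ZZ\times n\ZZ$ for odd $n$ and $L=2n\ZZ\times n\ZZ$ for even $n$. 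It suffices to study the orbits of $\pm\bar{\mathsf H}$ on primitive classes, where $\bar{\mathsf H}$ is the image of $\mathsf H$ in $\mathrm{SL}(2)$ modulo the level of $L$. For (i), I would exhibit explicit elements of $\mathsf H$ which, combined with the translations in $L$, produce $\mathsf X\to\mathsf X+n$ when $n$ is even and $M$ is odd. Since $\check g_{12}\equiv0$ only modulo $M$, an element with $\alpha=1$ and $\beta M\mathsf Y\equiv -n$ (mod $2n$) works when $\mathsf Y$ is odd. For even $\mathsf Y$ I would first move to a representative with odd $\mathsf Y$ using $\check g_{21}$ and (b), then shift, then move back, checking that these steps commute modulo $L$.

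Next I would identify the invariants. Since $\check g_{11}\equiv1$ (mod $2M$) and $\check g_{12}\equiv0$ (mod $M$), the class of $\mathsf X$ modulo $M$ is preserved by $\mathsf H$. It is also preserved modulo $n$ (resp.\ $2n$) by the translations, and it flips sign under the mirror. Hence $\pm\mathsf X$ modulo $d=\gcd(n,M)$ is invariant, and a parity check upgrades this to modulo $\hat d=2d$ for even $n$. For even $n$, $\check g_{21}$ is even and $\check g_{22}$ is odd, and the translation of $\mathsf Y$ by $n$ is even, so the parity of $\mathsf Y=\kappa$ is invariant. The Quano--Fujii point is $(\mathsf X,\mathsf Y)=(1,0)$, so every point in its orbit satisfies $\nu\equiv\pm1$ (mod $\hat d$) and, for even $n$, has even $\kappa$. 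This is the necessity half of (iii).

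For sufficiency I would use that $\mathsf H$ contains the principal congruence subgroup $\Gamma(2M)$ together with the unipotents $\big(\begin{smallmatrix}1&M\\0&1\end{smallmatrix}\big)$ and $\big(\begin{smallmatrix}1&0\\2&1\end{smallmatrix}\big)$. By strong approximation, i.e.\ surjectivity $\mathrm{SL}(2,\ZZ)\to\mathrm{SL}(2,\ZZ/m)$, its image modulo the level of $L$ is the full congruence subgroup of matrices with the same congruences. I would then prove the standard lemma that such a group acts transitively on primitive vectors modulo the level with prescribed residues. The proof would reduce prime by prime via the Chinese remainder theorem:
\begin{itemize}
  \item at primes not dividing $2M$, the group is all of $\mathrm{SL}_2$, which is transitive on primitive vectors;
  \item at primes dividing $M$ but not $n$, $L$ kills the local component;
  \item at primes dividing $d$, the remaining freedom is exactly $\mathsf X$ modulo the appropriate power of the prime, up to sign.
\end{itemize}
I expect this local bookkeeping, especially at $p=2$ when $n$ is even (the factor $2$ in $\hat n$ and $\hat d$ and the parity of $\kappa$), to be the main obstacle, and I would first try to handle it by an explicit Euclidean-algorithm reduction of $\mathsf Y$ using the generators.

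Finally, (ii) follows from the invariants. If $\gcd(n,2M)=1$, then $d=1$ and $n$ is odd, so there are no invariants and the action is transitive. If $n$ is even, the points $(\mathsf X,\mathsf Y)=(1,0)$ and $(1,1)$ are both primitive but have different parities of $\kappa$, so they lie in different orbits. If $n$ is odd with $d\ge3$, the points $\mathsf X=1$ and $\mathsf X=2$ differ in $\pm\mathsf X$ modulo $d$.
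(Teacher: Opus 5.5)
Your overall route is the paper's: a linear action of $\pm\mathsf H$ modulo a lattice of translations, invariants for necessity, and surjectivity of $\mathrm{SL}(2,\ZZ)\to\mathrm{SL}(2,\ZZ/K)$ with a prime-by-prime analysis for sufficiency. Your invariants and the necessity half of (iii) are fine. Your argument for (i), however, fails in the case that matters.

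\textbf{The gap in (i).} For even $n$ you propose to handle even $\mathsf Y$ by first moving to odd $\mathsf Y$ using $\check g_{21}$ and (b). This is impossible: as you show yourself in the next paragraph, $\mathsf Y=\kappa$ mod $2$ is invariant under all of $\mathsf G$ when $n$ is even. The elements $T\to T+\beta$ you use shift $\mathsf X$ by $-\beta M\mathsf Y$. That shift depends on the point and vanishes at $\mathsf Y=0$, in particular at the Quano--Fujii point $(1,0)$. So you obtain neither the uniform symmetry $\nu\to\nu+n$ asserted in (i) nor its effect on the even-$\kappa$ sector, which contains the whole Quano--Fujii orbit.

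For the same reason your reduction is ill-posed. For even $n$ and odd $M$ the lattice $L=2n\ZZ\times n\ZZ$ is not $\mathsf H$-stable: $\check g\,(0,n)^{\top}=(-\beta Mn,\delta n)^{\top}$, and $-\beta Mn\notin 2n\ZZ$ for odd $\beta$. Hence $\pm\mathsf H$ does not act on $\ZZ^2/L$.

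The missing observation is that the translations in $\mathsf G$ are normalized by the linear part, so (b) can be conjugated. Applying $T\to T+1$, then (b), then $T\to T-1$, then (b)$^{-1}$ gives the uniform translation $(\kappa,\nu)\to(\kappa,\nu-nM)$. Combined with the relabeling translation by $n$ (odd $n$) or $2n$ (even $n$), this gives $\nu\to\nu+\hat n$, since $\gcd(2n,nM)=\hat n$. The lattice $\hat n\ZZ\times n\ZZ$ is stable under $\mathsf H$ and $-1$, because $M\mid\check g_{12}$ and $\hat n\mid nM$. This is the lattice to quotient by.

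\textbf{Two smaller points.}
\begin{itemize}
\item[(1)] In (ii), your separating pair for odd $n$ fails when $d=3$. Since $2\equiv-1\pmod 3$, the points with $\mathsf X=1$ and $\mathsf X=2$ carry the same invariant. Indeed, for $n=M=3$ the point $(2,0)=-(1,0)+(3,0)$ lies in the Quano--Fujii orbit. Take $(1,0)$ and $(0,1)$ instead, since $0\not\equiv\pm1\pmod d$.
\item[(2)] In the sufficiency step the sign $-1$ is global. The local statement you need is only the following: the $\mathsf H_p$-orbit of $(1,0)$ consists of the primitive vectors of $V_p$ with $\mathsf X\equiv1 \bmod p^{\mu}$, where $\mu=\min(v_p(n),v_p(M))$. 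At $p=2$ it is $\{\mathsf X\equiv1\bmod 2^{\mu+1},\ \mathsf Y\ \text{even}\}$; your case list omits $p=2$ when $n$ is even and $M$ is odd. Allowing an independent sign at each prime would overcount, for example $\mathsf X\equiv4\pmod{15}$ when $d=15$.
\end{itemize}
The $p=2$ bookkeeping you expect to be the main obstacle is immediate. Every admissible $(x,y)$ is the first column of $\big(\begin{smallmatrix}x&0\\y&x^{-1}\end{smallmatrix}\big)$ in the local group, which is how the paper finishes.
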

\begin{proof}
  (i) The group $\mathsf G$ contains the translations $(\kappa,\nu)\to(\kappa,\nu+\bar n)$ with $\bar n:=n$ for odd $n$ and $\bar n:=2n$ for even $n$,
  by the relabeling (a), and $(\kappa,\nu)\to(\kappa+n,\nu)$ by (b). Conjugating (b) by $T\to T\pm1$, i.e.\ applying $T\to T+1$, (b), $T\to T-1$ and the inverse of (b)
  in this order, gives $(\kappa,\nu)\to(\kappa,\nu-nM)$. Hence $\mathsf G$ contains the translation by $(0,\gcd(\bar n,nM))=(0,\hat n)$. Conversely, the lattice
  $\Lambda:=\hat n\ZZ\times n\ZZ$ in the $(\mathsf X,\mathsf Y)$-plane is invariant under $\mathsf H$ and under $-1$, since $M\,|\,\check g_{12}$ and $\hat n\,|\,nM$. Therefore
  $\mathsf G$ acts on $V:=\ZZ/\hat n\times\ZZ/n$ through the linear group generated by $\mathsf H$ and $-1$.

  (ii), (iii) Let $K:=2M\hat n$. Since $\mathrm{SL}(2,\ZZ)\to\mathrm{SL}(2,\ZZ/K)$ is surjective and the congruences in \eqref{hmat} are conditions modulo divisors of $K$, the image of $\mathsf H$ in
  $\mathrm{SL}(2,\ZZ/K)$ consists of all residue classes satisfying \eqref{hmat}. By the Chinese remainder theorem this image decomposes into local factors over the
  primes $p\,|\,K$, and $V$ decomposes into the product of $V_p:=\ZZ/p^{v_p(\hat n)}\times\ZZ/p^{v_p(n)}$ over the primes $p\,|\,n$. Since $\hat n$ and $n$ have the
  same prime divisors, the local factors with $p\nmid n$ act trivially on $V$. Hence the orbits of $\mathsf H$ on $V$ are products of the orbits of the local groups $\mathsf H_p$ on $V_p$ for $p\,|\,n$. Write $e:=v_p(n)$ and $m:=v_p(M)$, and
  consider a primitive vector $(\mathsf X,\mathsf Y)\in V_p$, i.e.\ one with $p\nmid\gcd(\mathsf X,\mathsf Y)$. For odd $p$ with $m=0$ the group $\mathsf H_p$ is all of
  $\mathrm{SL}(2,\ZZ/p^{e})$, which acts transitively on the primitive vectors, since every primitive vector is the first column of a matrix of determinant one. For odd $p$ with $m\ge1$, $\mathsf H_p$ consists of the matrices with
  $\check g_{11}\equiv\check g_{22}\equiv1$, $\check g_{12}\equiv0$ (mod $p^m$) and $\check g_{21}$ arbitrary. Hence $\mathsf X$ modulo $p^{\mu}$, $\mu:=\min(m,e)$, is invariant, and the orbit of $(1,0)$, which consists
  of the first columns $(\check g_{11},\check g_{21})$, is $\{\mathsf X\equiv1\ (\mathrm{mod}\ p^\mu)\}$. For $p=2$, i.e.\ even $n$, $\mathsf H_2$ consists of the matrices with $\check g_{11}\equiv\check g_{22}\equiv1$ (mod $2^{m+1}$),
  $\check g_{12}\equiv0$ (mod $2^m$) and even $\check g_{21}$; here $v_2(\hat n)=e+1$ if $m\ge1$ and $v_2(\hat n)=e$ if $m=0$. Hence $\mathsf Y$ modulo $2$ is invariant, and the orbit of $(1,0)$ is
  $\{\mathsf X\equiv1\ (\mathrm{mod}\ 2^{\min(m,e)+1}),\ \mathsf Y\ \text{even}\}$. In the last two cases every pair $(x,y)$ with the stated congruences is the first
  column of an element of $\mathsf H_p$, namely of $\big(\begin{smallmatrix}x&0\\y&x^{-1}\end{smallmatrix}\big)$. Combining the local statements with the
  global sign $-1$ gives (iii), with $\hat d=\prod_{p\,|\,d}p^{\min(v_p(n),v_p(M))}$ times $2$ for even $n$. Finally, if $\gcd(n,2M)=1$ all local groups act
  transitively and so does $\mathsf G$, while if $\gcd(n,2M)>1$ either $\mathsf Y$ modulo $2$ or $\mathsf X$ modulo some $p^\mu$ with $\mu\ge1$ is an invariant up to sign, and the primitive
  points $(1,0)$ and $(0,1)$ lie in different orbits. This proves (ii).
\end{proof}

We have also confirmed (i)--(iii) by enumerating the orbits for $n,M\le10$. Consequently the primitive points fall into the following
(overlapping) classes, which are included in Table~\ref{tab:status}.
\begin{itemize}
  \item[(1)] The orbit of the Quano--Fujii point, characterized by (iii). For $\gcd(n,M)=1$, \eqref{KMnssr} is proven there by
        Section~\ref{sec:QFproof} and (a)--(c); this orbit contains all primitive points if $\gcd(n,2M)=1$.
  \item[(2)] The orbits of the real points. There \eqref{KMnssr} follows from \eqref{dssr} under the assumptions (A1)--(A3).
  \item[(3)] The remaining points. For $n,M\le10$ they occur exactly for $(n,M)=(4,4)$, $(4,8)$, $(6,3)$, $(6,6)$, $(6,9)$, $(8,4)$, $(8,8)$, $(9,9)$,
        $(10,5)$, $(10,10)$, for example $\omega_0=\pi(2MT-3)/6$ for $(n,M)=(6,3)$. There \eqref{KMnssr} is supported by numerical evidence only
        (checked for $(n,M)=(4,4)$, $(4,8)$, $(6,3)$, $(6,6)$, Appendix~\ref{app:numerics}).
\end{itemize}
For even $n$ the real points have odd $\kappa$, so that (1) and (2) are disjoint. For odd $n$ the real point $(\kappa,\kappa r)$ lies in (1) if
and only if $\kappa r\equiv\pm1$ (mod $\gcd(n,M)$), which yields a proof only for $\gcd(n,M)=1$.

\subsection{Modular images of the Quano--Fujii weights}\label{app:QFimages}
A weaker relation between the models at different torsion points is obtained from the transformations
$g=\big(\begin{smallmatrix}\alpha&\beta\\\gamma&\delta\end{smallmatrix}\big)\in \mathrm{SL}(2,\ZZ)$ with $\gamma\equiv0$ and $\alpha\equiv\delta\equiv\pm1$
(mod $M$). They map $\ovth_1(\cdot/M|T)$ to theta functions with nome $gT$ and preserve the step $\pi/M$ of the discrete spins. Combined with the
relabelings $a_i\to a_i+mi$ of (a), they relate the Boltzmann weights at $(T,\omega_0)$ and $(gT,\omega_0/(\gamma T+\delta))$ up to factors
$F(\va)F'(\boldsymbol b)$, but in general not the star-star relations, since the crossing parameter is not preserved. (For even $n$ and odd $m$
the relabeling is combined with the cyclic shift $C^{n/2}$ of \eqref{Cshift} and a uniform shift, which bring the spins back to $Q$ and are exact
symmetries of the weights at the Quano--Fujii point.) These transformations act on the points $\omega_0/M$ and $\pi/M$ of $\CC/(\pi\ZZ+\pi T\ZZ)$
as group isomorphisms, and $n\,\omega_0/M=\pi/M$ at the Quano--Fujii point. Hence the model at $\omega_0$ can be a modular image of the
Quano--Fujii model only if
\begin{equation}\label{QFcond}
  n\,\omega_0\equiv\pm\pi\pmod{\pi\gcd(n,M)\ZZ+\pi MT\ZZ}\;,
\end{equation}
i.e.\ $\gcd(n,M)\,|\,\nu\pm1$. The condition is also sufficient. Indeed, these transformations act on $(\mathsf X,\mathsf Y)$, with
$\mathsf X$ and $\mathsf Y$ modulo $n$, by the matrices \eqref{hmat} with the congruences replaced by $\check g_{11}\equiv\check g_{22}\equiv\pm1$,
$\check g_{12}\equiv0$ (mod $M$), and the local argument in the proof of Proposition~\ref{prop:orbits}~(iii), in which no parity condition arises
now, shows that the orbit of $(1,0)$ consists of all primitive $(\mathsf X,\mathsf Y)$ with $\mathsf X\equiv\pm1$ (mod $\gcd(n,M)$). For such a
point the model at the nome $T$ is the modular image of the Quano--Fujii model at $\tau_{\rm QF}=g^{-1}T$, with
$u_{\rm QF}=(\mathcal J\theta-\pi/2)/(\pi M)$, $\mathcal J=\gamma\tau_{\rm QF}+\delta$; the sign $-1$ corresponds to the reversed order of the
components. In particular, for $\gcd(n,M)=1$ every primitive point is a modular image of the Quano--Fujii point in this sense. For $n=2$ and odd
$M$ both Kashiwara--Miwa points $\nu=0,1$ are modular images of the broken $\ZZ_M$ model in the form \eqref{QFrec}, while for even $M$ only $\nu=1$
is. For $\gcd(n,M)\nmid\nu\pm1$, for example $\omega_0=\pi MT/n$ with $n\,|\,M$, the model is not a modular image of the Quano--Fujii model,
although it satisfies \eqref{KMnssr} as well.

\subsection{The points with odd \texorpdfstring{$\kappa$}{kappa} for even \texorpdfstring{$n$}{n}}\label{app:oddkappa}
For even $n$ the parity of $\kappa$ is an invariant of (a)--(c), and it has a simple meaning in terms of the weights: for $\omega_0=\pi(\kappa MT-\nu)/n$
the factors of \eqref{KMnW} and \eqref{KMnS} with $i-j\equiv n/2$ (mod $n$) contain the shift $\pm\kappa\pi T/2$ of the argument of
$\ovth_1(\cdot|T)$, which is a half period if and only if $\kappa$ is odd. For odd $\kappa$ these factors are therefore of $\ovth_4$-type, and for even
$\kappa$ of $\ovth_1$-type. For $n=2$ the points with odd $\kappa$ are the Kashiwara--Miwa model in its original form, with the factors $w_4$ (see
below \eqref{KMnssr}), whose star-triangle relation was proven in \cite{Kashiwara:1986tu,Hasegawa:1990du} through the cyclic representations of
the Sklyanin algebra, while the point $\kappa=0$ is the broken $\ZZ_M$ model in the form of \cite{Quano:1992wc}. The real points $(\kappa,\kappa r)$
for even $n\ge4$ may thus be regarded as the $\mathfrak{sl}(n)$ analogues of the original Kashiwara--Miwa model. When $\gcd(n,M)=1$ their weights are
modular images of the Quano--Fujii weights by the sufficiency of \eqref{QFcond}, but the corresponding transformations have odd $\gamma/M$, and
they do not preserve \eqref{KMnssr}: the crossing parameter $\pi/2$ is shifted by a half period of $\CC/(\pi M\ZZ+\pi MT'\ZZ)$, $T':=gT$, and the
four internal spectral parameters add up to $\pi/\mathcal J$ instead of $\pi$. The resulting relation is not a specialization of the Yang--Baxter
equation of \cite{Quano:1992wc} of the type used in Section~\ref{sec:QFproof}, and we have checked numerically that the extra factors do not
reduce to a character of the internal spin. A proof at these points presumably requires an analogue of the construction of
\cite{Kashiwara:1986tu,Hasegawa:1990du} for Belavin's $R$-matrix; the models of \cite{Sergeev:1992ap,Hasegawa:1993cs} are natural candidates,
which we have not compared with \eqref{KMnW}.

\section{\texorpdfstring{Modular covariance of the elliptic $\mathfrak{sl}(n)$ Kashiwara--Miwa type models}{Modular covariance of the elliptic sl(n) Kashiwara-Miwa type models}}\label{app:modular}

In this appendix we give the details of the computation for the transformations (c) of Appendix~\ref{app:torsion}, with $g$ and $\mathcal J=\gamma T+\delta$
as there. By \eqref{thetamod} and \eqref{thetaper}, each theta function contributes
\begin{equation}
  \ovth_1\Big(\frac{x}{M}\,\Big|\,T\Big)=c_\vartheta\,\EXP^{-\frac{\ii\gamma x^2}{\pi\mathcal JM^2}}\EXP^{2\ii k_2y-\pi\ii k_2^2gT}(-1)^{k_1+k_2}\ovth_1(y|gT)\;,
\end{equation}
where $y$ is the corresponding argument of the transformed model and $x/(M\mathcal J)=y+k_1\pi-k_2\pi gT$. Using $1/\mathcal J=\alpha-\gamma\,gT$,
the integers $k_1,k_2$ are found as follows. For the factor with index $k$ of a $W$-edge, $x=(i-j)\omega_0+\theta+(k+1/2)\pi$; for the factor with
index $k$ of a $\olW$-edge, $\theta$ is replaced by $\pi/2-\theta$; and for the factor $(i,j)$ of $\mathsf S$, $x=\pi(a_i-a_j)-(j-i)\omega_0$. Since
$\omega_0$ and $\theta$ are rescaled by $1/\mathcal J$, only the remaining multiples of $\pi$ contribute, and we find
\begin{equation}
  (k_1,k_2)=\begin{cases}
    \Big(\frac{(2k+1)(\alpha-1)}{2M},\,\frac{(2k+1)\gamma}{2M}\Big) & \text{for } W\;,         \\
    \Big(\frac{(k+1)(\alpha-1)}{M},\,\frac{(k+1)\gamma}{M}\Big)         & \text{for } \olW\;,      \\
    \Big(\frac{(a_i-a_j)(\alpha-1)}{M},\,\frac{(a_i-a_j)\gamma}{M}\Big) & \text{for } \mathsf S\;.
  \end{cases}
\end{equation}
These are integers, since $\alpha\equiv1$ and $\gamma\equiv0$ (mod $2M$). For $\olW$ and $\mathsf S$ they
are even, while for $W$ their parities are those of $(\alpha-1)/(2M)$ and $\gamma/(2M)$, independently of $k$. The signs and the constants $c_\vartheta$ therefore cancel, because $\sum_{i,j}(a_i-b_j)=0$. The remaining exponents are polynomials in $a_i-b_j$ and $i-j$, and in $a_i-a_j$, $j-i$ for $\mathsf S$. After the sums over the
components they therefore depend on the spins only through the moments $\sum_ia_i^s\,i^t$, while $\sum_{i=1}^ni^t$ is a polynomial in $n$.
We have evaluated the sum of the exponents over \eqref{KMnssr} with computer algebra in two ways: for $n=2,3,4$ with $M$, $\omega_0$, $g$ and the spectral
parameters as symbols, and for general $n$, treating the moments $\sum_ia_i^s\,i^t$ of the internal and external spins as independent symbols and
evaluating $\sum_{i=1}^n i^t$ by Faulhaber's formula. In both cases the derivatives of the total exponent of each side with respect to the
internal-spin moments, and those of the difference of the two sides with respect to the external-spin moments, vanish identically. In fact, the
difference $\mathcal E_{\rm L}-\mathcal E_{\rm R}$ of the total exponents of the two sides vanishes identically. Thus the total exponent of each side does not
depend on the internal spin, and the transformed relation holds again with $\mathcal R=1$.

\section{Numerical checks}\label{app:numerics}

The results of this paper were checked numerically with an independent implementation of $\Phi(z,m)$ and of the theta functions
(Python/NumPy, double precision). The scripts are available at \url{https://github.com/masahitoyamazaki/Lens_root-of-unity}. Unless stated otherwise we use $\tau=0.05+0.35\ii$, rapidities $(u,u',v,v')=(0.9,0.5,0.2,0.1)$ and random
external spins. For a relation $L=R$ we compute $L/R$ for several (three to ten) random choices of the external spins, and in some checks also of
the rapidities. Saddle points are found by Newton's method with many initial points in a fundamental domain; the saddle point of the second star is
obtained by Newton's method starting from that of the first one, and a non-corresponding saddle point is detected by \eqref{cssr}. Limits
$\hbar\to0$ are extracted by Richardson extrapolation. In Table~\ref{tab:numerics}, ``spin-independent'' means that $L/R$ is independent of the
discrete spins to the quoted accuracy, and ``$\mathcal R=1$'' or ``$\mathcal R$ of \eqref{Kdef}'' that $L/R$ agrees with the stated normalization.

\begin{table}[p]\centering\footnotesize
  \renewcommand{\arraystretch}{1.0}
  \setlength{\tabcolsep}{4pt}
  \begin{tabular}{>{\raggedright\arraybackslash}p{5.4cm}>{\raggedright\arraybackslash}p{7.0cm}>{\raggedright\arraybackslash}p{2.3cm}}
    \hline
    relation                                                                                                                               & cases                                                                                                      & accuracy                            \\
    \hline
    \eqref{ssr}, numerical integration over $\sigma_H$                                                                                     & $n=2$ ($r=1,3$), $n=3$ ($r=1,2$)                                                                           & machine precision                   \\
    \eqref{Phiprop}, \eqref{shift}, \eqref{shift1}, \eqref{pairexact}; dictionary \eqref{PhiGamma}                                         & $r=1,2,3$                                                                                                  & $10^{-16}$                          \\
    \eqref{omegalim}--\eqref{Pdef}                                                                                                         & $(r,N)=(2,3)$, $(3,2)$, $(1,3)$, $(3,1)$, $(1,5)$, $|a|\le7$                                               & $10^{-6}$--$10^{-7}$ ($O(\hbar^2)$) \\
    \eqref{psidef}                                                                                                                         & $(r,N)=(1,1)$, $(1,3)$, $(2,2)$, $(3,2)$                                                                   & $10^{-6}$                           \\
    \eqref{KYasym}--\eqref{KYsub} for all $m$                                                                                              & $(r,N)=(2,3)$, $(3,2)$, $(1,3)$, $-3\le m\le r+2$                                                          & $10^{-6}$                           \\
    $\xi$-dependence of \eqref{pairasym}, \eqref{pairlim}; asymptotics of $\log\ovth_1(x|1/N+\ii\hbar/\pi)$ including $\pi^2/(12N^2\hbar)$ & $(r,N)=(2,3)$, $(3,2)$, $(1,4)$; $N=1,2,3$                                                                 & $10^{-13}$                          \\
    modular form \eqref{omegamod}                                                                                                          &                                                                                                            & $10^{-14}$                          \\
    \hline
    classical star-star relation \eqref{cssr}; \eqref{q4leg}                                                                               & $(n,r,N)=(2,1,3)$, $(2,2,2)$, $(3,1,2)$, $(3,2,2)$, $(3,1,3)$                                              & $10^{-16}$; $10^{-15}$              \\
    \eqref{dssr}, spin-independent                                                                                                         & $(n,r,N)=(2,1,3)$, $(2,2,3)$, $(2,3,2)$, $(2,2,4)$, $(3,1,2)$, $(3,2,2)$, $(3,1,3)$, $(3,3,1)$, $(4,1,2)$  & $10^{-11}$--$10^{-14}$              \\
    \eqref{dssr}, $\mathcal R$ of \eqref{Kdef}                                                                                             & $(n,r,N)=(2,1,3)$, $(2,2,3)$, $(2,3,2)$, $(3,1,2)$, $(3,2,2)$, $(3,1,3)$, $(4,1,2)$                        & $10^{-14}$                          \\
    \eqref{dssr} at the real configurations \eqref{xi0}, $\gcd(\kappa,n)=1$                                                                & $(n,r,N)=(2,1,3)$, $(3,1,2)$, $(3,2,1)$, $(3,1,3)$, $(3,2,2)$, $(4,1,2)$                                   & $10^{-12}$--$10^{-14}$              \\
    \hline
    $n=2$ weights vs.\ \cite{Kels:2017vbc}                                                                                                 & $(r,N)=(2,3)$, $(3,2)$, $(1,5)$, $(3,1)$                                                                   & $10^{-13}$                          \\
    equivalence \eqref{tau1def}                                                                                                            & $(n,r,N)=(2,2,3)$, $(2,3,2)$, $(3,2,2)$, $(3,3,1)$                                                         & $10^{-13}$                          \\
    \eqref{KMnssr}, $\mathcal R=1$, all primitive $\omega_0=\pi(\kappa MT-\nu)/n$ (fails at non-degenerate non-primitive points)           & $(n,M)=(2,3)$, $(3,2)$, $(3,3)$, $(3,4)$, $(4,2)$, $(4,3)$; $\kappa\in\{-1,\ldots,3\}$, all $\nu$          & $10^{-13}$                          \\
    \eqref{KMnssr}, $\mathcal R=1$, points in the set (3) of Appendix~\ref{app:torsion}                                                    & $(n,M,\kappa,\nu)=(4,4,0,5)$, $(4,4,2,5)$, $(4,8,0,5)$, $(4,8,1,6)$, $(6,3,2,3)$, $(6,3,4,3)$, $(6,6,2,3)$ & $10^{-12}$                          \\
    \eqref{QFid} with the signs $\varsigma$, random lifts                                                                                  & $(n,M)=(2,3)$, $(3,2)$, $(3,3)$, $(3,4)$, $(4,2)$, $(4,3)$, $(5,2)$                                        & $10^{-13}$                          \\
    \eqref{QFYBE}, and its reduction to \eqref{SSRQF}                                                                                      & $(n,M)=(2,3)$, $(3,2)$, $(2,5)$, $(3,4)$, $(4,3)$                                                          & $10^{-12}$--$10^{-15}$              \\
    modular images \eqref{QFcond}                                                                                                          & $(n,M)=(2,3)$, $(3,2)$, $(3,4)$, $(4,3)$, all $\nu$                                                        & $10^{-12}$                          \\
    reality \eqref{KMconj}                                                                                                                 & $(n,M)=(2,3)$, $(3,2)$, $(3,3)$, $(4,2)$                                                                   & $10^{-14}$                          \\
    Proposition~\ref{prop:orbits} by enumeration                                                                                           & $n,M\le10$                                                                                                 & exact                               \\
    ratio $\mathsf W_\theta(\va+\va_{\rm ker},\boldsymbol b)/\mathsf W_\theta(\va,\boldsymbol b)$ not a gauge factor; formula for $(2,2)$  & $(n,M)=(2,2)$, $(2,4)$, $(3,3)$, $(3,6)$, $(4,2)$, $(4,4)$                                                 & $10^{-15}$                          \\
    symmetry \eqref{Cshift}; lift-invariance of \eqref{QFS}                                                                                & $(n,M)=(2,2)$, $(3,2)$, $(3,3)$, $(4,2)$, $(4,3)$                                                          & $10^{-14}$                          \\
    \eqref{Qcompare}--\eqref{Qrel}                                                                                                         & $n\le6$, random spins                                                                                      & exact                               \\
    \hline
    \eqref{trigW} in \eqref{KMnssr} (fails without the chiral phase for $n\ge3$)                                                           & $(n,M)=(2,3)$, $(2,5)$, $(3,2)$, $(3,3)$, $(3,4)$, $(4,2)$                                                 & $10^{-13}$                          \\
    \eqref{triggenW} with saddle points from \eqref{triggensaddle}, random external $\psi$                                                 & $(n,M)=(2,3)$, $(2,4)$, $(3,2)$, $(3,3)$, $(4,2)$                                                          & $10^{-14}$                          \\
    \eqref{BKMSform} and the weights \cite[(3.19)--(3.22)]{Bazhanov:1990qk} on the curve \eqref{bkmscurve}, random M\"obius maps           & $(n,M)=(2,3)$, $(2,5)$, $(3,2)$, $(3,3)$, $(3,4)$, $(4,2)$, $(4,3)$, $(5,2)$                               & $10^{-13}$                          \\
    compressed BKMS form of \cite{Moosavian:2025nyv}                                                                                       & holds only with $\varpi^{-k}$ in the numerator                                                             &                                     \\
    \hline
  \end{tabular}
  \caption{Numerical checks. Accuracies are maximal deviations of $L/R$ from a constant (spin-independence) or from the stated normalization.}\label{tab:numerics}
\end{table}

\clearpage
\bibliographystyle{nb}
\bibliography{refs}

\end{document}